\newcommand{\floor}[1]{\left\lfloor #1 \right\rfloor}   
\newcommand{\eps}{\varepsilon}
\newcommand{\preq}{\preccurlyeq}
\newcommand{\uu}{u}
\newcommand{\jjj}{{\floor{\log i}}}
\newcommand{\zav}[1]{\left( #1 \right)}
\newcommand{\abs}[1]{\left| #1 \right|}
\newcommand{\myalpha}[2]{\alpha_{#1,#2}}
\newcommand{\vek}[1]{\overline{#1}}
\DeclareMathOperator{\alp}{\rm alph}
\DeclareMathOperator{\A}{\mathcal{A}}
\DeclareMathOperator{\B}{\mathcal{B}}
\newtheorem{thm}{Theorem}
\newtheorem{lem}[thm]{Lemma}
\newtheorem{cor}[thm]{Corollary}
\newdefinition{example}[thm]{Example}
\newdefinition{oprob}[thm]{Open Problem}
\journal{}
\def\ps@pprintTitle{%
	\let\@oddhead\@empty
	\let\@evenhead\@empty
	\def\@oddfoot{}%
	\let\@evenfoot\@oddfoot}
\begin{document}
\begin{frontmatter}

\title{On the Height of Towers of Subsequences and Prefixes\tnoteref{note1}}

\tnotetext[note1]{A preliminary version of this work was presented at the MFCS~2014 conference~\cite{mfcs2014}.}

\author[cu]{\v St\v ep\'an Holub\fnref{sh}}
  \ead{holub@karlin.mff.cuni.cz}

\author[tud]{Tom\'{a}\v{s} Masopust\fnref{tm}}
  \ead{tomas.masopust@tu-dresden.de}

\author[inria]{Micha\"el~Thomazo\fnref{mt}}
  \ead{michael.thomazo@inria.fr}

\fntext[sh]{Research supported by the Czech Science Foundation grant number 13-01832S}
\address[cu]{Charles University, Sokolovsk\'a 83, 175 86 Praha, Czech Republic}

\fntext[tm]{Research supported by the by the German Research Foundation (DFG) in Emmy Noether grant KR~4381/1-1 (DIAMOND)}
\address[tud]{Institute of Theoretical Computer Science and Center of Advancing Electronics Dresden (cfaed), TU Dresden, Germany}

\address[inria]{Inria, France}
\fntext[mt]{Research supported by the Alexander von Humboldt Foundation}

\begin{abstract}
  A tower is a sequence of words alternating between two languages in such a way that every word is a subsequence of the following word. The height of the tower is the number of words in the sequence. If there is no infinite tower (a tower of infinite height), then the height of all towers between the languages is bounded. We study upper and lower bounds on the height of maximal finite towers with respect to the size of the NFA (the DFA) representation of the languages. We show that the upper bound is polynomial in the number of states and exponential in the size of the alphabet, and that it is asymptotically tight if the size of the alphabet is fixed. If the alphabet may grow, then, using an alphabet of size approximately the number of states of the automata, the lower bound on the height of towers is exponential with respect to that number. In this case, there is a gap between the lower and upper bound, and the asymptotically optimal bound remains an open problem. Since, in many cases, the constructed towers are sequences of prefixes, we also study towers of prefixes.
\end{abstract}

\begin{keyword}
  Automata \sep languages \sep alternating towers \sep subsequences \sep prefixes \sep upper and lower bounds
  \MSC[2010] 68R05 \sep 68Q45
\end{keyword}

\end{frontmatter}

\section{Introduction}
  A {\em tower\/} between two languages is a sequence of words alternating between the languages in such a way that every word is a subsequence of the following word. The number of words in a tower is the height of the tower. For two regular languages represented by nondeterministic finite automata (NFAs), it is decidable in polynomial time whether there exists an infinite tower, that is, a tower of infinite height~\cite{icalp2013}. As a consequence of a more general result~\cite[Lemma~6]{icalp2013}, the existence of a tower of arbitrary height implies the existence of an infinite tower. Therefore, if there is no infinite tower, the height of all towers is bounded.

  The height of maximal finite towers is closely related, for instance, to the complexity of an algorithm computing a piecewise testable separator of two regular languages~\cite{mfcs2014}. Namely, the algorithm requires at least as many steps as is the height of the maximal tower. An interest in piecewise testable separators is in turn motivated by applications in logic on words~\cite{PlaceZ_icalp14,PlaceZ15}, especially in the context of the Straubing-Th\'erien hierarchy~\cite{Straubing81,Therien81}, and by applications in XML Schema languages~\cite{HofmanM15,MartensNNS15}.

  Not much is known about the upper bound on the height of towers between two regular languages if no infinite tower exists. The only result we are aware of is a result by Stern~\cite{Stern-tcs85} giving an exponential upper bound $2^{{|\Sigma|}^2 n}$ on the height of towers between a piecewise testable language over an alphabet $\Sigma$ represented by an $n$-state minimal DFA and its complement. Piecewise testable languages form a strict subclass of regular languages. 
  
  In this paper, we give a better bound that holds in a general setting of two arbitrary regular languages (having no infinite tower) represented by NFAs.
  We show in Theorem~\ref{thm01} that the upper bound on the height of towers between two regular languages represented by NFAs is polynomial with respect to the depth of the NFAs and exponential with respect to the size of the alphabet. 
  
  Considering the lower bound, we first improve in Theorem~\ref{thm02B} an existing bound for binary regular languages~\cite{mfcs2014}.
  Theorem~\ref{thm:lower_bound} and its corollaries then show that the upper bound is asymptotically tight if the alphabet is fixed. 
  If the alphabet may grow with the depth of the automata, Theorem~\ref{thm:2exp:improved} and its corollaries show that we can achieve an exponential lower bound for NFAs with respect to the number of states. Notice that it does not contradict the polynomiality of the upper bound with respect to the number of states because the automata require an alphabet of size approximately the number of states. These lower bounds are not asymptotically equal to the upper bound and it is not known what the (asymptotically) tight bound is, cf. Open Problem~\ref{op1}. Specifically, we do not know whether an alphabet of size grater than the number of states may help to build higher towers.

  Then we discuss the lower bound for a DFA representation of languages. In Theorems~\ref{thm:expdfa}, \ref{determinisation}, and~\ref{determinisation2}, we show that, for any two NFAs, there are two DFAs preserving the height of towers if the size of the alphabet may grow. If the alphabet is fixed, we show in Theorem~\ref{thm:lower_bound:dfa} that the upper bound on the height of towers is asymptotically tight even for DFAs.
  
  The towers we construct to demonstrate lower bounds are mostly sequences of prefixes. Therefore, we also investigate {\em towers of prefixes}. We prove tight bounds on the height of towers of prefixes in Theorem~\ref{thm:dfas} for DFAs and in Theorem~\ref{cor:nfas} for NFAs. We then discuss towers of prefixes between two binary NFAs in Corollary~\ref{cor23}. Finally, we provide a pattern that characterizes the existence of an infinite tower of prefixes in Theorem~\ref{patern}. 

  Our main results are summarized in Table~\ref{tableOverview}.
  We also formulate the following two open problems:
  \begin{enumerate}
    
    \item What is the tight bound on the height of towers of subsequences for two NFAs (DFAs) over an alphabet that may grow with the number of states? See Open Problem~\ref{op1} below.
    
    \item What is the tight bound on the height of towers of prefixes for NFAs over a fixed (binary) alphabet? See Open Problem~\ref{op2}.
    
  \end{enumerate}

  \begin{table*}
    \centering
    \renewcommand{\arraystretch}{1.5}
    \begin{subtable}{0.47\linewidth}
      \centering
      \begin{tabular}[c]{c|c|c|c}
        & Upper bound  & \multicolumn{2}{c}{Lower bound}  \\
        &              & $|\Sigma|=k$  &  $|\Sigma| \ge n_1+n_2$ \\\hline\hline
        NFAs           & \multirow{2}{*}{$\dfrac{n^{|\Sigma|+1}-1}{n-1}$}
                       & \multirow{2}{*}{$\Theta\zav{n^k}$} 
                       & \multirow{2}{*}{$\Omega\zav{2^{n_1+n_2}}$} \\
        DFAs
          & &
      \end{tabular}
      \caption{Towers of sequences over an alphabet $\Sigma$; $n=\max(n_1,n_2)$}
      \label{tab:dimFFT}
    \end{subtable}%
    \hspace{1em}
    \begin{subtable}{0.49\linewidth}
      \centering
        \begin{tabular}[c]{c|c|c|c}
              & Upper bound  & \multicolumn{2}{c}{Lower bound}  \\
              &              & $|\Sigma|=2$  &  $|\Sigma| \ge n_1+n_2$\\\hline\hline
        NFAs  & $\frac{(2^{n_1}-1)(2^{n_2}-1)+1}{2}$     
              & $2^{\Omega\left(\sqrt{\frac{n_1+n_2}{\log(n_1+n_2)}}\right)}$
              & $2^{n_1+n_2-2}$ \\
        DFAs  & $\mfrac{n_1n_2}{2}+1$ 
              & $\mfrac{n_1n_2}{2}+1$ 
              & $\mfrac{n_1n_2}{2}+1$
      \end{tabular}
      \caption{Towers of prefixes}
      \label{tab:dimGMM}
    \end{subtable}
    \caption{Upper and lower bounds on the height of towers of subsequences and prefixes for automata with $n_1$ and $n_2$ states}
    \label{tableOverview}
  \end{table*}

\section{Preliminaries}
  The cardinality of a set $\Sigma$ is denoted by $|\Sigma|$ and the power set of $\Sigma$ by $2^{\Sigma}$. The free monoid generated by $\Sigma$ is denoted by $\Sigma^*$. An element of $\Sigma^*$ is called a {\em word}; the empty word is denoted by $\eps$. For a word $w\in\Sigma^*$, $\alp(w)\subseteq\Sigma$ denotes the set of all letters occurring in $w$, and $|w|_a$ denotes the number of occurrences of letter $a$ in $w$.

  A {\em nondeterministic finite automaton\/} (NFA) is a quintuple $\A = (Q,\Sigma,\delta,Q_0,F)$, where $Q$ is the finite nonempty set of states, $\Sigma$ is the alphabet, $Q_0\subseteq Q$ is the set of initial states, $F\subseteq Q$ is the set of accepting states, and $\delta\colon Q \times \Sigma \to 2^Q$ is the transition function that can be extended to the domain $2^Q \times \Sigma^*$ in the usual way. The language {\em accepted\/} by $\A$ is the set $L(\A) = \{w \in \Sigma^* \mid \delta(Q_0, w) \cap F \neq\emptyset\}$.
  A {\em path\/} $\pi$ from a state $q_0$ to a state $q_n$ under a word $a_1a_2\cdots a_{n}$, for some $n\ge 0$, is a sequence of states and input letters $q_0, a_1, q_1, a_2, \ldots, q_{n-1}, a_{n}, q_n$ such that $q_{i+1} \in \delta(q_i,a_{i+1})$ for all $i=0,1,\ldots,n-1$. The path $\pi$ is {\em accepting\/} if $q_0\in Q_0$ and $q_n\in F$, and it is {\em simple\/} if the states $q_0,q_1,\ldots,q_n$ are pairwise distinct.
  The number of states on the longest simple path in $\A$ is called the {\em depth\/} of $\A$.
  We write $q \xrightarrow{w} q'$ to denote that  $q'\in \delta(q,w)$ and say that there exists a path from state $q$ to state $q'$ under the word $w$, or labeled by the word $w$.
  The NFA $\A$ has a {\em cycle over an alphabet $\Gamma \subseteq \Sigma$\/} if there exists a state $q$ and a word $w$ over $\Sigma$ such that $\alp(w) = \Gamma$ and $q \xrightarrow{w} q$. A path $\pi$ \emph{contains a cycle over $\Gamma$} if $q \xrightarrow{w} q$ is a subpath of $\pi$ for some state $q$ and $\alp(w) = \Gamma$. 

  The NFA $\A$ is {\em deterministic\/} (DFA) if $|Q_0|=1$ and $|\delta(q,a)|\leq 1$ for every $q$ in $Q$ and $a$ in $\Sigma$. Note that we allow some transitions to be undefined. To obtain a complete automaton, it is necessary to add a sink state, which we do not consider when counting the number of states. In other words, we will consider in the sequel only automata without useless states, that is, every state appears on an accepting path.

  For two words $v = a_1 a_2 \cdots a_n$ and $w \in \Sigma^* a_1 \Sigma^* a_2 \Sigma^* \cdots \Sigma^* a_n \Sigma^*$, we say that $v$ is a {\em subsequence\/} of $w$ or that $v$ can be {\em embedded\/} into $w$, denoted by $v \preq w$. A word $v\in\Sigma^*$ is a prefix of $w\in\Sigma^*$, denoted by $v \le w$, if $w=vu$, for some $u\in\Sigma^*$.

  We define towers of subsequences as a generalization of Stern's alternating towers between a language and its complement~\cite{Stern-tcs85}. For two languages $K$ and $L$, a sequence $(w_i)_{i=1}^{r}$ of words is a {\em tower of subsequences\/} between $K$ and $L$ if $w_1 \in K \cup L$ and, for all $i < r$,
  \begin{enumerate}
    \itemsep0pt
    \item $w_i \preq w_{i+1}$, 
    \item $w_i \in K$ implies $w_{i+1} \in L$, and 
    \item $w_i \in L$ implies $w_{i+1} \in K$.
  \end{enumerate}
    Similarly, a sequence $(w_i)_{i=1}^{r}$ of words is a {\em tower of prefixes\/} between $K$ and $L$ if $w_1 \in K \cup L$ and, for all $i < r$,
    $w_i \le w_{i+1}$,
    $w_i \in K$ implies $w_{i+1} \in L$, and
    $w_i \in L$ implies $w_{i+1} \in K$.
  
  The number of words in the sequence, $r$, is the {\em height\/} of the tower. If $r=\infty$, then we speak about an {\em infinite tower\/} between $K$ and $L$. The languages $K$ and $L$ are not necessarily disjoint. However, if there is a word $w \in K \cap L$, then there is a trivial infinite tower $w, w, w, \ldots$. If the languages are clear from the context, we usually omit them. 
  By a {\em tower between two automata}, we mean a tower between their languages.
  
  In what follows, if we talk about towers without a specification, we mean towers of subsequences. If we mean towers of prefixes, we always specify it explicitly.

\section{Upper bound on the height of towers of subsequences}
  Given two languages represented as NFAs, there is either an infinite tower between them, or the height of towers between the languages is bounded~\cite{icalp2013}. We now estimate that bound in terms of the number of states, $n$, and the size of the alphabet, $m$, of the NFAs.
  Stern's bound for minimal DFAs is $2^{m^2 n}$. Our new bound is $O(n^m)=O(2^{m\log n})$ and holds for NFAs. Therefore, if the alphabet is fixed, our bound is polynomial with respect to the number of states; otherwise, it is exponential in the size of the alphabet. 

  Before we state the main result of this subsection, we recall that the depth of an automaton is the number of states on the longest simple path, hence bounded by the number of states of the automaton.
  
  \begin{thm}\label{thm01}
    Let $\A_0$ and $\A_1$ be NFAs with $n_1$ and $n_2$ states, respectively, over an alphabet $\Sigma$. Assume that there is no infinite tower between the languages $L(\A_0)$ and $L(\A_1)$, and let $(w_i)_{i=1}^r$ be a tower between the languages such that $w_i\in L(\A_{i\bmod 2})$. Let $1 < n \le \max(n_1,n_2)$ denote the maximum of the depths of $\A_0$ and $\A_1$. Then $r \le \frac{n^{|\Sigma|+1}-1}{n-1}$. 
  \end{thm}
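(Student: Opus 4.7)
The plan is to prove the bound by induction on the alphabet size $m=|\Sigma|$, establishing the recursion
\[
  f(n,m) \;\le\; 1 + n\,f(n,m-1), \qquad f(n,0) = 1,
\]
where $f(n,m)$ denotes the maximal finite tower height for two NFAs of depth at most $n$ over an $m$-letter alphabet admitting no infinite tower between them. Solving the recursion yields exactly $\tfrac{n^{m+1}-1}{n-1}$. The base case $m=0$ is immediate: the only word is $\eps$, and if $\eps$ lay in both languages then $\eps,\eps,\ldots$ would be an infinite tower, so the height is at most $1$.

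For the inductive step, I would fix a maximal tower $w_1,\dots,w_r$ and an accepting path $\pi$ of the top word $w_r$ in $\A_{r\bmod 2}$. Because $\A_{r\bmod 2}$ has depth at most $n$, the path $\pi$ can be decomposed into a concatenation of at most $n$ simple segments separated by cycles, each cycle having an associated alphabet $\Gamma\subseteq\Sigma$ equal to the set of letters read along it. I would then invoke the infinite-tower characterisation from~\cite{icalp2013}: were some cycle along $\pi$ labelled over an alphabet $\Gamma$ matched by a similarly reachable cycle in the partner automaton, a pumping argument would produce an infinite tower, contradicting the hypothesis. Hence in the decomposition of $\pi$ every cycle uses a proper sub-alphabet $\Gamma\subsetneq\Sigma$.

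Next, using the embeddings $w_i\preq w_r$, each lower word distributes its letters among the $\le n$ segments of $\pi$. Restricting the tower to the letters falling into one fixed segment produces a sequence of words over a proper sub-alphabet $\Gamma\subsetneq\Sigma$, still alternating between sub-NFAs of depth at most $n$ and still lacking an infinite tower (since any infinite sub-tower would lift to one in the original pair). By the induction hypothesis each of the $n$ segment sub-towers has height at most $f(n,m-1)$. Summing over the $n$ segments and adding one for $w_r$ itself gives the recursion.

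The main obstacle is the formal bookkeeping of the segment restriction: specifying the sub-NFAs (``starting'' and ``ending'' in the segment's endpoint states), verifying that the restricted words still form an alternating sub-sequence tower in the correct order, and ensuring that the ``no infinite tower'' property is inherited so that induction applies. Equally delicate is coordinating the cycle / sub-alphabet decomposition across \emph{both} automata simultaneously — the accepting paths of $w_i$ and $w_{i+1}$ live in different NFAs — and extracting from this a clean pumping argument in the spirit of~\cite{icalp2013} to rule out cycles over the full alphabet.
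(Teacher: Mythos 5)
Your recursion $f(n,m)\le 1+n\,f(n,m-1)$ does solve to the stated bound, and the idea of decomposing an accepting path into at most $n$ cycle segments over sub-alphabets is in the spirit of the paper's ``nice cyclic factorization''. However, the inductive step has two genuine gaps. First, the claim that every cycle on the accepting path $\pi$ of $w_r$ uses a proper sub-alphabet is not justified and is false in general: a cycle over all of $\Sigma$ in one automaton creates an infinite tower only if it is suitably matched in the partner automaton, and nothing forces such a match. (Take $L(\A_1)=(a+b)^*$, accepted by one state with self-loops on $a,b$, and $L(\A_0)=\{a\}$: there is no infinite tower, yet the accepting path of a top word such as $ab$ cycles over the full alphabet.) The paper's factorization only guarantees proper sub-alphabets for cycle factors when the factorization has more than one factor; the case where the whole word is a single full-alphabet cycle factor has to be carried along, which the paper does through its weight function $g$, not by excluding it.

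The decisive gap is the segment restriction itself. Membership $w_i\in L(\A_{i\bmod 2})$ is a global property of $w_i$; the piece of $w_i$ that your fixed embedding sends into one segment of $\pi$ is not accepted between any fixed pair of states of either automaton, because the accepting path of $w_i$ lives in the \emph{other} automaton and its decomposition need not align with the segments of $\pi$ at all. Consequently the per-segment sequences are not towers between two well-defined languages, the ``no infinite tower'' hypothesis does not obviously transfer to them, and the count $1+n\,f(n,m-1)$ (which implicitly assumes the alternations of the whole tower distribute over alternations of the segment sub-towers) has no justification. This is not bookkeeping but the heart of the problem, and the paper resolves it differently: it re-factorizes each $w_{i-1}$ with respect to its \emph{own} accepting path, refining the factorization of $w_i$ level by level, and replaces the additive segment count by a potential $W_i=\sum_j f(v_{i,j})$ built from $g(x)=n\frac{n^x-1}{n-1}$ (in effect, the solved form of your recursion). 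The bound then follows from strict monotonicity $W_{i-1}<W_i$, whose proof in the equality case explicitly constructs an infinite tower via the language $Z$ of words compatible with the common factorization shape. To repair your argument you would need either this nested re-factorization plus a potential function, or a genuinely new way to make the per-segment sequences into towers to which induction applies; as written, the step does not go through.
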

  \begin{proof}
    To prove the upper bound, we assign to each $w_i$ of the tower an integer $W_i$ in such a way that $0\leq W_1<W_2< \cdots < W_r < \frac{n^{|\Sigma|+1}-1}{n-1}$. To this aim, we define a factorization of $w_r$ using an accepting path of $w_r$ in $\A_{r\bmod 2}$. Then we inductively define factorizations of all $w_i$, $1\leq i < r$, depending on an accepting path of $w_i$ in $\A_{i\bmod 2}$ and on the factorization of $w_{i+1}$. 
    The value of $W_i$ is derived from these factorizations.

    We now define the concepts we need in the proof. We say that a sequence $(v_1,v_2,\dots, v_k)$ of nonempty words is a \emph{cyclic factorization} of $w=v_1v_2\cdots v_k$ with respect to some path $\pi$ from a state $q$ to a state $q'$ under $w$ in an automaton $\A$ if $\pi$ can be decomposed into $q_{i-1} \stackrel {v_i} \longrightarrow q_{i}\,$, $i=1,2,\dots k$, and either $v_i$ is a letter, or the path $q_{i-1} \stackrel {v_i} \longrightarrow q_{i}$ contains a cycle over $\alp(v_i)$. We call $v_i$ a \emph{letter factor} if it is a letter and $q_{i-1}\neq q_i$, and a \emph{cycle factor} otherwise. 
    Note that our cyclic factorization is closely related to the factorization by Almeida~\cite{Almeida-jpaa90}, see also Almeida~\cite[Theorem~8.1.11]{AlmeidaBook}.

    We now show that if $\pi$ is a path $q \xrightarrow{w} q'$ in some automaton $\A$ with depth $n$, then $w$ has a cyclic factorization $(v_1,v_2,\dots, v_k)$ with respect to $\pi$ that contains at most $n$ cycle factors and at most $n-1$ letter factors. Moreover, if $k>1$, then $\alp(v_i)$ is a strict subset of $\alp(w)$ for each cyclic factor $v_i$, $i=1,2,\dots,k$. We call such a cyclic factorization \emph{nice}. By convention, the empty sequence is a nice cyclic factorization of the empty word  with respect to the empty path. 

    Consider a path $\pi$ of the automaton $\A$
    from $q$ to $q'$ labeled by a word $w$.
    Let $q_0=q$ and define the factorization  $(v_1,v_2,\dots, v_k)$ inductively
    by the following greedy strategy. 
    Assume that we have defined the factors $v_1,  v_2\ldots, v_{i-1}$
    such that $w = v_1 \cdots v_{i-1} w'$
    and $q_0 \xrightarrow{ v_1 v_2\cdots v_{i-1}} q_{i-1}$.
    The factor $v_i$ is defined as the label
    of the longest possible initial segment $\pi_i$ 
    of the path $q_{i-1}\xrightarrow{w'} q'$ 
    such that either $\pi_i$ contains a cycle over $\alp(v_i)$
    or
    $\pi_i=q_{i-1},a,q_{i}$, where $v_i=a$ is a letter.
    Such a factorization is well defined,
    and it is a cyclic factorization of $w$.

    Let $p_i$, for $i=1,2,\dots,k$, be a state
    such that the path $q_{i-1} \stackrel {v_i} \longrightarrow q_{i}$
    contains a cycle $p_i\rightarrow p_i$ over the alphabet $\alp(v_i)$ if $v_i$ is a cycle factor,
    and $p_i=q_{i-1}$ if $v_i$ is a letter factor.
    If $p_i=p_j$ with $i<j$ such that $v_i$ and $v_j$ are cycle factors,
    then we have a contradiction with the maximality of $v_i$ since
    $q_{i-1} \xrightarrow{v_i v_{i+1}\cdots v_j}  q_{j}$
    contains a cycle $p_i\rightarrow p_i$ from $p_i$ to $p_i$
    over the alphabet $\alp(v_i v_{i+1}\cdots v_j)$.
    Therefore, the factorization contains at most $n$ cycle factors.

    Note that $v_i$ is a letter factor only if the state $p_i$,
    which is equal to $q_{i-1}$ in such a case,
    has no reappearance in the path $q_{i-1}\xrightarrow{v_i \cdots v_k} q'$.
    This implies that there are at most $n-1$ letter factors. 
    Finally, if $\alp(v_i)=\alp(w)$ for a cyclic factor $v_i$, then $v_i=v_1=w$
    follows from the maximality of $v_1$. Therefore $(v_1,v_2,\dots, v_k)$ is a nice cyclic factorization of $w$.

    We are now ready to execute the announced strategy. Let $\left(v_{r,1}, v_{r,2}, \dots, v_{r,k_r}\right)$ be a nice cyclic factorization of $w_r$ with respect to some accepting path in the automaton $\A_{r\bmod 2}$.
    Given a (not necessarily nice) cyclic factorization $(v_{i,1},v_{i,2},\dots ,v_{i,k_i})$ of $w_{i}$, $i=2,3,\dots,r$, the factorization $\left(v_{i-1,1},v_{i-1,2},\dots, v_{i-1,k_{i-1}}\right)$ of $w_{i-1}$ is defined as follows. Let 
    $
      w_{i-1}=v'_{i,1}v'_{i,2}\cdots v'_{i,k_{i}},
    $
    where $v'_{i,j}\preq v_{i,j}$, for each $j=1,2,\dots,k_{i}$. Such words (possibly empty) exist, since we have that $w_{i-1}\preq w_{i}$. Let $\pi_{i,j}$ be paths under $v'_{i,j}$ that together form an accepting path of $w_{i-1}$ in $\A_{i-1\bmod 2}$. Then 
		\[
      (v_{i-1,1},v_{i-1,2},\dots, v_{i-1,k_{i-1}}) = \prod_{j=1}^{k_i} \zav{v''_{i,j,1},v''_{i,j,2},\dots, v''_{i,j,m_{i,j}}},
		\] 
		where $\zav{v''_{i,j,1},v''_{i,j,2},\dots, v''_{i,j,m_{i,j}}}$ is a nice cyclic factorization of $v'_{i,j}$ with respect to $\pi_{i,j}$ and the product denotes the concatenation of sequences. Note that if $v_{i,j}$ is a letter factor of $w_{i}$ then either $m_{i,j}=0$ (if $v_{i,j}'$ is empty) or $m_{i,j}=1$ and $v_{i,j,1}''$ is a letter factor of $w_{i-1}$.

    To define $W_i$, let $g$ be the function
    $
      g(x)= n\frac{{n^x}-1}{n-1},
    $
    and let $f(v_{i,j})=1$ if $v_{i,j}$ is a letter factor, and 
    $f(v_i)=g\left(\left|\alp(v_i)\right|\right)$
    if $v_{i}$ is a cycle factor. 
    We now set  
    \[
      W_i=\sum_{j=1}^{k_i}f\left(v_{i,j}\right)\,.
    \]
    The key property of $g$ is that $g(x+1)=n\cdot g(x)+(n-1)+1$. (In fact, this equality and $g(0)=0$ defines $g$.)  This property implies, together with the definition of a nice cyclic factorization, that 
      \begin{align} \label{fg}
      \sum_{\ell=1}^{m_{i,j}}f\left(v''_{i,j,\ell}\right)\leq f\left(v_{i,j}\right),
      \end{align} 
      for all $i=2,3,\dots,r$ and $j=1,2,\dots,k_i$.   
      In particular, 
      \begin{align}
      W_r = \sum_{\ell=1}^{k_r} f\zav{v_{r,\ell}} \leq f\left(w_r\right)\leq g(|\Sigma|) < g(|\Sigma|)+1 = \frac{n^{|\Sigma|+1}-1}{n-1}\,. \label{gr}
      \end{align}
      Definitions of $g$ and of a nice cyclic factorization also imply that there is equality in \eqref{fg} if and only if $m_{i,j}=1$, $\alp\left(v_{i,j}\right)=\alp\left(v''_{i,j,1}\right)$, and both $v_{i,j}$ and $v''_{i,j,1}$ are either  letter factors, or cyclic factors. 
    We deduce that $W_{i-1}\leq W_i$, $i=2,3,\dots, r$, and, moreover, $W_{i-1}= W_i$ if and only if 
    \begin{itemize}
      \itemsep0pt
      \item $k_{i-1} = k_i$,
      \item $\alp(v_{i,j}) = \alp (v_{i-1,j})$ for all $j=1,2,\dots, k_i$, and,
      \item for all $j=1,2,\dots, k_i$, $v_{i,j}$ is a letter factor if and only if $v_{i-1,j}$ is a letter factor.  
    \end{itemize}
    We show that if these conditions are met, then there is an infinite tower between $\A_0$ and $\A_1$. Let $Z$ be the language of words $z_1z_2\cdots z_{k_i}$ such that $z_j=v_{i,j}$ if $v_{i,j}$ is a letter factor, and $z_j\in (\alp(v_{i,j}))^*$ if $v_{i,j}$ is a cycle factor. In particular, $w_i,w_{i-1}\in Z$. Since $w_i\in L(\A_{i\bmod 2})$ and $w_{i-1}\in L(\A_{i-1\bmod 2})$, the definition of a cycle factor implies that, for each $z\in Z$, there exist $z'\in L(\A_0)\cap Z$ such that $z\preq z'$ and $z''\in L(\A_1)\cap Z$ such that $z\preq z''$. The existence of an infinite tower follows. We have therefore proved that $W_{i-1}<W_{i}$, which together with \eqref{gr} completes the proof.
  \end{proof}

  The question is how good this bound is. We study this question next and show that it is tight if the alphabet is fixed. If the alphabet grows with the number of states of the automata, then we can construct a tower of exponential height with respect the the number of states of the automata (as well as with respect to the size of the alphabet). However, we do not know whether this bound is tight. We formulate this question as the following open problem.
  \begin{oprob}\label{op1}
    Let $\A_0$ and $\A_1$ be NFAs with $n_1$ and $n_2$ states, respectively, over an alphabet $\Sigma$ with $|\Sigma| \ge n_1+n_2$. Let $n$ b the maximum depth of $\A_0$ and $\A_1$. Assume that there is no infinite tower between the languages $L(\A_0)$ and $L(\A_1)$, and let $(w_i)_{i=1}^r$ be a tower between them. Is it true that $r \le \frac{n^{n_1+n_2+1}-1}{n-1}$ or even that $r\le 2^{n_1+n_2}$?
  \end{oprob}

\section{Lower bounds on the height of towers for NFAs}
  The upper bound of Theorem~\ref{thm01}, as well as its proof, indicate that the size of the alphabet is significant for the height of towers. This is confirmed by lower bounds considered in this section. We consider two cases in the following two subsections, namely (i) the size of the alphabet is fixed and (ii) the size of the alphabet may grow with the size of the automata. We show that the upper bound of Theorem~\ref{thm01} is asymptotically tight if the size of the alphabet is fixed. Then we show that the lower bound may be exponential with respect to the size of the automata if the alphabet may grow. In this case, the size of the alphabet is approximately the number of states of the automata. However, the precise upper bound for this case is left open, cf. Open Problem~\ref{op1}.

\subsection{Lower bounds on the height of towers for NFAs over a fixed alphabet}
  For a binary alphabet, the upper bound of Theorem~\ref{thm01} gives $n^2+n+1$ and it is known to be tight up to a linear factor~\cite{mfcs2014}. Namely, for every odd positive integer $n$, there are two binary NFAs with $n-1$ and $n$ states having a tower of height $n^2-4n+5$ and no infinite tower. We now improve this bound.

  \begin{thm}\label{thm02B}
    For every positive integer $d$ and every odd positive integer $e$, there exists a binary NFA with $d+1$ states and a binary DFA with $e+1$ states having a tower of height $d(e+1)+2$ and no infinite tower.
  \end{thm}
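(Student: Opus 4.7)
The plan proceeds in three stages: describe an explicit construction, exhibit the tower, and certify that no infinite tower exists.

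First, I would design an explicit binary NFA $\A_0$ with $d+1$ states and an explicit binary DFA $\A_1$ with $e+1$ states. A natural template is to let the skeleton of $\A_1$ be a single cycle of length $e+1$ driven by one of the two letters (say $b$), with a single accepting state placed so that, because $e+1$ is even, the accepting ``phase'' of the cycle is distinguished from its middle position. The role of the other letter $a$ in $\A_1$ is that of a ``guard,'' and is permitted to act only at the accepting state. The NFA $\A_0$ is then chosen so that its $d+1$ states correspond to $d$ distinct ``rounds'' of this cycle (plus a starting state), the nondeterminism being used to guess in which of the rounds a given input word should be matched; in particular $L(\A_0)$ consists of words whose letter-counts fit a $d$-parameter family, and its only cycles are over a single letter.

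Second, I would write down the tower $w_1 \preq w_2 \preq \cdots \preq w_{d(e+1)+2}$ explicitly, taking it to be a prefix tower in which each $w_{i+1}$ is obtained from $w_i$ by appending a single letter (the identity of that letter being dictated by the current state of $\A_1$ and the current round of $\A_0$). The alternation of memberships between $L(\A_0)$ and $L(\A_1)$ is forced by the fact that each new letter advances $\A_1$ by one step around its $(e+1)$-cycle; a full round consumes $e+1$ of the tower's words, and $\A_0$ permits exactly $d$ such rounds to be chained. The two additional words in the count $d(e+1)+2$ arise as the short boundary words that begin and end the tower (for example, the initial one-letter word and a closing word that certifies acceptance in the last-visited language).

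Third, I would rule out an infinite tower by invoking the cycle characterization from~\cite{icalp2013}: an infinite tower between two regular languages exists iff the two automata admit synchronized cycles over a common alphabet. By construction, $\A_0$ cycles only over one letter while $\A_1$'s only cycle is over the other, so no such synchronization is possible, and every tower must terminate.

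The main obstacle is pinning down the tower height to the exact value $d(e+1)+2$ (and not just up to a constant factor). This requires verifying that every single appended letter in the explicit tower is \emph{forced}, in the sense that omitting it would break either the embedding or the membership alternation; and, at the boundary, that the first and last steps each contribute genuinely. The hypothesis that $e$ is odd is precisely what makes the $(e+1)$-cycle of $\A_1$ align correctly with the alternating membership pattern, so that no alternation is ``wasted''---checking this parity-dependent alignment and the resulting additive constant is the most delicate part of the argument.
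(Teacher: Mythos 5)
Your overall blueprint --- a letter-by-letter prefix tower between an NFA that counts $d$ ``rounds'' and a DFA built on an $(e+1)$-cycle, with the parity of $e$ aligning the alternation --- parallels the paper's construction, but two of the concrete choices you commit to would make the argument fail. First, a DFA with a \emph{single} accepting state on its $(e+1)$-cycle cannot support the tower you describe: in a tower whose consecutive words differ by one appended letter, membership must alternate at every step, so the DFA has to accept every other prefix, i.e.\ roughly $(e+1)/2$ of the states along the cycle must be accepting. Since you stipulate that each letter advances the DFA by one step around the cycle, a single accepting state is reached at most once per $e+1$ consecutive prefixes, so the height $d(e+1)+2$ is unreachable (except when $e=1$). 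The paper's $\mathcal{B}_e$ is a $b$-path $0,1,\dots,e$ with an $a$-edge from $e$ back to $0$ and \emph{all odd states accepting}; the oddness of $e$ is used precisely so that the prefix ending in $b^{e}$ (just before each $a$) is accepted. On the other side, the NFA $\mathcal{A}_d$ has states $d-1,\dots,0$ all initial, an $a$-path through them, $b$-self-loops on the non-accepting ones, and a two-state $b$-cycle attached to the accepting state $0$; it accepts exactly the prefixes of $w=(b^ea)^{d-1}b^{e+1}$ ending in an even number of $b$'s, and the $+2$ comes from the empty word together with one extra non-prefix word $(b^ea)^{d-1}b^eab$. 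This verification, which you explicitly defer, is the substance of the proof and is not supplied by your plan.

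Second, the termination argument is not sound as stated. The criterion ``infinite tower iff the automata admit synchronized cycles over a common alphabet'' is false as an equivalence (NFAs for $ca^*$ and $da^*$ both have cycles over $\{a\}$, yet admit no tower of height $2$); only the implication ``infinite tower $\Rightarrow$ cycles over a common alphabet, suitably placed'' could be invoked, and it is part of a more detailed pattern characterization than the statement you quote. Moreover, the cycle structure you assert contradicts your own construction: you claim the DFA's only cycle is over the letter $a$, but a DFA none of whose cycles contains a $b$-edge accepts only words with a bounded number of occurrences of $b$ (at most its number of states), whereas about half of your tower words must lie in the DFA's language and contain up to roughly $de$ letters $b$. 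In the correct construction the asymmetry is the other way around: the NFA's cycles are over $\{b\}$ only, while the DFA's unique cycle uses both letters. The paper avoids any appeal to a characterization by a short counting argument: every word of $L(\mathcal{B}_e)$ is a prefix of $(b^ea)^*$, the two languages are disjoint, so any tower is strictly increasing in length, and a sufficiently high tower would contain a word of $L(\mathcal{B}_e)$ with at least $d$ occurrences of $a$, which cannot be embedded into any word of $L(\mathcal{A}_d)$ since those have at most $d-1$ occurrences of $a$. Some such quantitative argument (or a correctly stated pattern criterion applied to the actual cycle alphabets) is needed to close your third stage.
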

    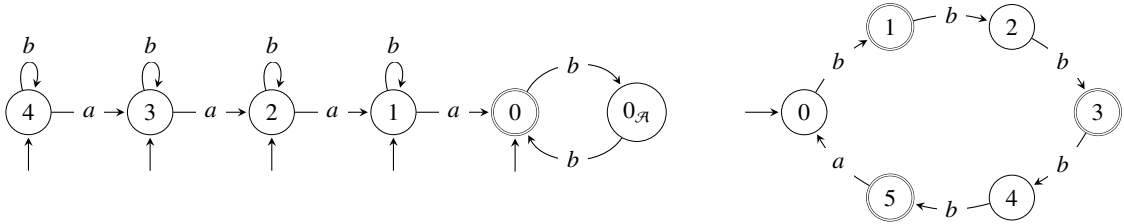
\begin{figure}[b]
      \centering
      \begin{tikzpicture}[baseline,->,>=stealth,shorten >=1pt,node distance=1.6cm,
        state/.style={circle,minimum size=1mm,very thin,draw=black,initial text=},
        every node/.style={fill=white,font=\small},
        bigloop/.style={shift={(0,0.01)},text width=1.6cm,align=center},
        bigloopd/.style={shift={(0,-0.01)},text width=1.6cm,align=center}]
        \node[state,initial below]            (1) {$4$};
        \node[state,initial below]            (2) [right of=1] {$3$};
        \node[state,initial below]            (3) [right of=2] {$2$};
        \node[state,initial below]            (4) [right of=3] {$1$};
        \node[state,accepting,initial below]  (5) [right of=4] {$0$};
        \node[state]                          (6) [right of=5] {$0_{\A}$};
        \path
          (1) edge node {$a$} (2)
          (2) edge node {$a$} (3)
          (3) edge node {$a$} (4)
          (4) edge node {$a$} (5)
          (1) edge[loop above] node[bigloop] {$b$} (1)
          (2) edge[loop above] node[bigloop] {$b$} (2)
          (3) edge[loop above] node[bigloop] {$b$} (3)
          (4) edge[loop above] node[bigloop] {$b$} (4)
          (5) edge[bend left=60] node {$b$} (6)
          (6) edge[bend left=60] node {$b$} (5);
      \end{tikzpicture}
      \quad\quad
      \begin{tikzpicture}[baseline,->,>=stealth,shorten >=1pt,node distance=1.6cm,
        state/.style={circle,minimum size=1mm,very thin,draw=black,initial text=},
        every node/.style={fill=white,font=\small},
        bigloop/.style={shift={(0,0.01)},text width=1.6cm,align=center}]
        \node[state,initial]    (1) {$0$};
        \node[state,accepting]  (2) [above right of=1]{$1$};
        \node[state]            (3) [right of=2] {$2$};
        \node[state, accepting] (4) [below right of=3] {$3$};
        \node[state]            (5) [below left of=4] {$4$};
        \node[state, accepting] (6) [left of=5] {$5$};
        
        \path
          (1) edge[bend left=15] node {$b$} (2)
          (2) edge[bend left=15] node {$b$} (3)
          (3) edge[bend left=15] node {$b$} (4)
          (4) edge[bend left=15] node {$b$} (5)
          (5) edge[bend left=15] node {$b$} (6)
          (6) edge[bend left=15] node {$a$} (1);
       \end{tikzpicture}
      \caption{Automata $\A_d$ and $\B_e$ of Theorem \ref{thm02B} for $d=e=5$}
      \label{exbinB}
    \end{figure}
   \begin{proof}
    We define the automata $\A_d$ and $\B_e$ with $d+1$ and $e+1$ states, respectively, as depicted in Figure~\ref{exbinB}. 
    The NFA $\A_d=(\{0,1,\ldots,d-1\}\cup\{0_{\A}\},\{a,b\},\delta_d,\{0,1,\ldots,d-1\},\{0\})$ consists of an $a$-path through the states $d-1, \allowbreak \ldots, 0$, of self-loops under $b$ in all states $1, \ldots, d-1$, and of a $b$-cycle from $0$ to $0_{\A}$ and back to $0$.
    The DFA $\B_e=(\{0,1,\ldots,e\},\{a,b\},\delta_e,0,\{1,3,\dots,e\})$ consists of a $b$-path through the states $0,1,\ldots,e$ and of an $a$-transition from state $e$ to state $0$. All odd states are accepting.

    Consider the word $w=(b^ea)^{d-1} b^{e+1}$. Note that $\A_d$ accepts all prefixes of $w$ ending with an even number of $b$'s, including those ending with $a$, and the empty prefix. On the other hand, the automaton $\B_e$ accepts all prefixes of $w$ ending with an odd number of $b$'s. Moreover, the automaton $\B_e$ accepts the word $(b^ea)^{d-1} b^e a b$. Hence the sequence $(w_i)_{i=1}^{|w|+2}$, where, for $i=1,2,\dots,|w|+1$, $w_i$ is the prefix of $w$ of length $i-1$, and $w_{|w|+2} = (b^ea)^{d-1}b^eab$ is a tower between $\A_d$ and $\B_e$ of height $|w|+2 = (e+1)(d-1) + e + 1 + 2 = d(e+1)+2$.

    We show that there is no higher tower between the languages, in particular, there is no infinite tower. Notice that any word in $L(\B_e)$ is a prefix of $(b^e a)^*$. As the languages are disjoint (they require a different parity of the $b$-tail), any tower $(w_i)_{i=1}^{r}$ is strictly increasing with respect to $\preq$ and thus $|w_i|\geq i-1$. Thus if the height of $(w_i)_{i=1}^{r}$ is larger than $d(e+1)+2$ the word $w_{d(e+1)+1}$ or $w_{d(e+1)+2}$ is in $L(\B_e)$ and therefore contains at least $d$ occurrences of letter $a$. However, no such word can be embedded into a word of $L(\A_d)$, since each word of $L(\A_d)$ contains at most $d-1$ occurrences of letter $a$.
  \end{proof}

  As a consequence of Theorem~\ref{thm02B}, we obtain the following lower bound on the height of binary towers.
  
  \begin{cor}\label{cor:quadratic}
    For every even positive integer $n$, there exists a binary NFA with $n$ states and a binary DFA with $n$ states having a tower of height $n^2-n+2$ and no infinite tower.
  \end{cor}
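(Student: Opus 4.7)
The plan is to derive this corollary as a direct instantiation of Theorem~\ref{thm02B}. Since $n$ is an even positive integer, $n-1$ is an odd positive integer, so I can set $d = n-1$ and $e = n-1$ and check that the hypotheses of Theorem~\ref{thm02B} are satisfied: $d$ is a positive integer and $e$ is an odd positive integer.

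Applying the theorem with these parameters yields a binary NFA $\A_d$ with $d+1 = n$ states and a binary DFA $\B_e$ with $e+1 = n$ states, together with a tower between their languages of height
\[
  d(e+1) + 2 = (n-1)\cdot n + 2 = n^2 - n + 2,
\]
and moreover no infinite tower exists between them. This is exactly the statement of the corollary.

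The only thing to verify carefully is the parity condition needed to invoke Theorem~\ref{thm02B}, since the theorem requires $e$ to be odd; the hypothesis that $n$ is even is exactly what guarantees this, so there is no real obstacle. The corollary is essentially a bookkeeping consequence of the construction of $\A_d$ and $\B_e$ in Figure~\ref{exbinB}, specialized to the symmetric case $d = e = n-1$.
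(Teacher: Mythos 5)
Your proposal is correct and matches the paper's own proof, which likewise simply sets $d=e=n-1$ in Theorem~\ref{thm02B}; your explicit check that $n$ even makes $e=n-1$ odd is exactly the (implicit) parity bookkeeping the paper relies on.
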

  \begin{proof}
    Set $d=e=n-1$ in Theorem~\ref{thm02B}.
  \end{proof}

  For a four-letter alphabet and for every $n\ge 1$, there are two NFAs with at most $n$ states having a tower of height $\Omega(n^3)$ and no infinite tower~\cite[Theorem 3]{mfcs2014}. 
  We now improve this bound by generalizing Theorem~\ref{thm02B}. 

  \begin{thm}\label{thm:lower_bound}
    For all integers $m_A\ge 1$, $m_B\ge 0$, and $d_1,d_2,\dots,d_{m_A},e_0,e_1,e_2,\dots, e_{m_B} \ge 1$, where $e_0$ is odd, there exist two NFAs with $\sum_{i=1}^{m_A} d_i + 2$ and $\sum_{i=0}^{m_B} e_i + 1$ states over an alphabet of cardinality $m_A + m_B + 1$ having a tower of height $\prod_{i=1}^{m_A}(d_i+1)\prod_{i=0}^{m_B}(e_i+1) + 2$ and no infinite tower.
  \end{thm}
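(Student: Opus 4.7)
The plan is to generalize the binary construction of Theorem~\ref{thm02B} by a nested recursion in which each new letter $a_i$ (for $i=2,\ldots,m_A$) introduces a multiplicative factor of $(d_i+1)$ and each new letter $b_j$ (for $j=1,\ldots,m_B$) introduces a factor of $(e_j+1)$ into the tower height. The NFA $\A$ and the DFA $\B$ extend $\A_d$ and $\B_e$ by adding chain states on the $a_i$-side and cycle states on the $b_j$-side, so that the total state counts are $\sum_{i=1}^{m_A} d_i + 2$ and $\sum_{j=0}^{m_B} e_j + 1$ over an alphabet of $m_A+m_B+1$ letters.

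The tower word is built by nested concatenation: set $u_0 = b_0^{e_0}$ and $u_j = u_{j-1}(b_j u_{j-1})^{e_j}$ for $j=1,\ldots,m_B$, and then $v_0 = u_{m_B}$ and $v_i = v_{i-1}(a_i v_{i-1})^{d_i}$ for $i=1,\ldots,m_A$; a routine induction gives $|v_{m_A}|+1 = \prod_{i=1}^{m_A}(d_i+1)\prod_{j=0}^{m_B}(e_j+1)$. The tower word is $w = v_{m_A}\cdot\ell$, where $\ell$ is one trailing letter chosen to align the parity of the $b_0$-run at the end of $w$, so that $|w| = \prod(d_i+1)\prod(e_j+1)$. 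By induction on the nesting, every non-$b_0$ letter in $w$ is immediately preceded by a full block $b_0^{e_0}$, and since $e_0$ is odd this guarantees that the parity of the $b_0$-run at the end of a prefix of $w$ flips with every added letter. The prefixes of $w$ therefore alternate between $L(\A)$ (even parity) and $L(\B)$ (odd parity); appending one additional witness word that contains $w$ as a subsequence and lies in $L(\B)$ (analogous to $(b^e a)^d b$ of Theorem~\ref{thm02B}) then yields a tower of $\prod(d_i+1)\prod(e_j+1)+2$ elements.

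To show that no infinite tower exists, the argument adapts Theorem~\ref{thm02B}: the languages $L(\A)$ and $L(\B)$ are disjoint by the complementary parity conditions, so every tower is strictly $\preq$-increasing and hence strictly length-increasing; a counting argument exploiting the cyclic structure of $\B$ then shows that any word in $L(\B)$ of length beyond the claimed bound would contain more $a$-letters than $L(\A)$ accepts. The principal obstacle is calibrating the transitions of $\A$ and $\B$ on the auxiliary letters $b_1,\ldots,b_{m_B}$ so that the intended prefixes of $w$ are accepted with the correct alternation while at the same time ruling out spurious infinite towers generated by short repetitive patterns of the form $(b_0 b_j)^k$, which could otherwise slip past the $a_i$-count bound enforced by $\A$'s chain.
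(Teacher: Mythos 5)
Your plan follows the same general strategy as the paper (a nested word over $b$, the $a_i$'s and the auxiliary letters, whose prefixes alternate by the parity of the trailing $b$-run, plus one extra final word), but it stops short of the actual content of the theorem: the automata are never defined. The entire difficulty is to exhibit transition structures on only $\sum_{i=1}^{m_A} d_i + 2$ and $\sum_{i=0}^{m_B} e_i + 1$ states that accept \emph{every} prefix of the nested word with the right parity, even though each letter $a_i$ (resp.\ each auxiliary letter) occurs in the word far more than $d_i$ (resp.\ $e_j$) times. In the paper this is achieved by making all states initial, giving each level self-loops under the strictly smaller sub-alphabet, and adding ``reset'' transitions (from $(i,0)$ to all lower-level states, and in $\A$ through the extra state $0_{\A}$ under the outer letters), followed by a double induction verifying acceptance of each prefix. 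You explicitly defer exactly this (``calibrating the transitions \ldots is the principal obstacle''), so the key step is missing rather than proved. Two further points aggravate this: you declare $\B$ to be a DFA, which is stronger than the statement and not attainable by the paper's construction (its $\B$ has all states initial and is deterministic only when $m_B=0$); and your nesting order is reversed relative to the paper (auxiliary letters innermost, $a_i$'s outermost), which forces $\B$'s chains to be resettable by the $a$-letters --- a genuinely different design whose feasibility within the stated state budget you would have to establish.

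The second gap is the no-infinite-tower argument. The counting argument of Theorem~\ref{thm02B} works because there $L(\B_e)$ consists only of prefixes of $(b^ea)^*$, so a long word is forced to contain many $a$'s while $L(\A_d)$ caps them at $d-1$. In the general construction both languages necessarily contain words with unboundedly many occurrences of most letters (the self-loops needed to accept the nested prefixes guarantee this), so length alone bounds no letter count and the argument does not transfer. The paper instead argues by induction on nested sub-alphabets: take the smallest sub-alphabet admitting an infinite tower, use that one automaton bounds the number of occurrences of the top letter \emph{over that sub-alphabet}, pass to a tower in which this count is constant, and strip each word after its last occurrence of the top letter --- which preserves acceptance only because every transition under that letter enters an initial state. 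This is a structural property of the (undefined) transition functions, so your sketch cannot rule out the ``spurious infinite towers'' you mention; as written, the non-existence of an infinite tower is asserted, not proved.
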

  \begin{proof}
    Let $d_0=1$, $\vek{d}=(d_1,\dots,d_{m_A})$, and $\vek{e}=(e_1,\ldots,e_{m_B})$. 
    For $k\ge 0$, we define the alphabets $\Sigma_k=\{b,a_1,a_2,\ldots,a_k\}$ and $\Gamma_k=\{c_1,c_2,\dots,c_k\}$. 
    We now define two NFAs $\A_{m_A,m_B,\vek{d}}$ and $\B_{m_A,m_B,\vek{e}}$ over $\Sigma_{m_A}\cup\Gamma_{m_B}$ as follows.

    The set of states of $\A_{m_A,m_B,\vek d}$ is $Q_{m_A,\vek d}=\{(k,j) \mid k=0,1,2,\dots,m_A; \allowbreak \ j=0,1,2,\dots,d_k-1\}\cup\{0_{\A}\}$. All states except for $0_{\A}$ are initial, and state $(0,0)$ is the unique accepting state. 
    The transition function of $\A_{m_A,m_B,\vek d}$ consists of
    \begin{itemize}
      \itemsep0pt
      \item a self-loop under $\Sigma_{k-1}$ for all states $(k,j)$ with $k>0$,
      \item an $a_i$-transition 
        from each $(i,j)$ to $(i,j-1)$, for $i,j>0$, and 
        from each $(i,0)$ to states $(\ell,j)$, for $0\leq \ell < i$ and $j=0,1,2,\dots,d_{\ell}-1$, 
      \item $b$-transitions from $(0,0)$ to $0_{\A}$ and back, and
      \item transitions under $\Gamma_{m_B}$ from state $0_{\A}$ to each state of $Q_{m_A,\vek d}$ different from state $0_{\A}$.
    \end{itemize}
    The NFA $\A_{m_A,m_B,\vek d}$ with $m_A=3$ and $\vek d=(3,1,2)$ is shown in Figure~\ref{fig:A}.
    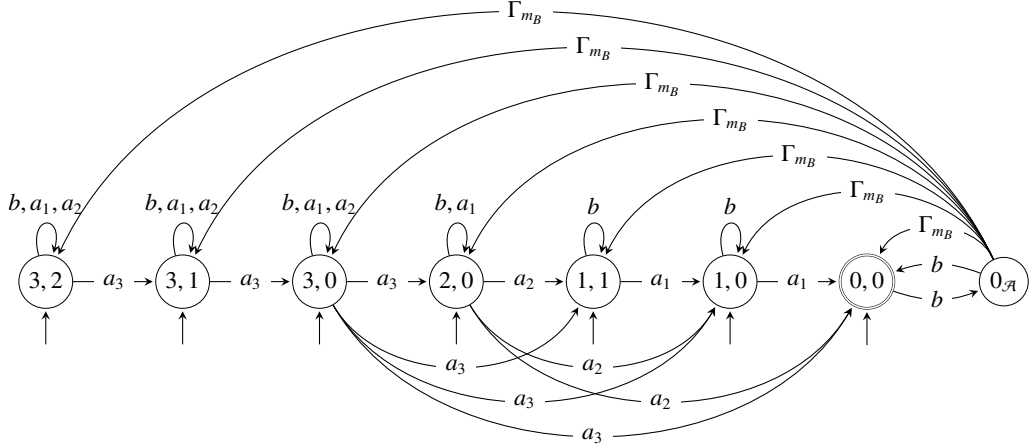
\begin{figure}[!ht]
      \centering
      \begin{tikzpicture}[baseline,->,>=stealth,shorten >=1pt,node distance=1.8cm,
        state/.style={circle,minimum size=1mm,very thin,draw=black,inner sep=2pt,initial text=},
        every node/.style={font=\small},
        bigloop/.style={shift={(0,0.01)},text width=1.6cm,align=center}]
        \node[state,initial below,accepting] (00) {$0,0$};
        \node[state,initial below]           (10) [left of=00] {$1,0$};
        \node[state,initial below]           (11) [left of=10] {$1,1$};
        \node[state,initial below]           (20) [left of=11] {$2,0$};
        \node[state,initial below]           (30) [left of=20] {$3,0$};
        \node[state,initial below]           (31) [left of=30] {$3,1$};
        \node[state,initial below]           (32) [left of=31] {$3,2$};
        \node[state]                         (0A) [right of=00] {$0_{\A}$};
          \foreach \from/\to/\pis in {32/31/3,31/30/3,30/20/3,20/11/2,11/10/1,10/00/1}
          \path (\from) edge node[fill=white] {$a_\pis$} (\to);
          \path 
          (00) edge[bend right=20] node[fill=white] {$b$} (0A)
          (0A) edge[bend right=20] node[fill=white] {$b$} (00);
          \foreach \from/\pis in {32/{$b,a_1,a_2$},31/{$b,a_1,a_2$},30/{$b,a_1,a_2$},20/{$b,a_1$},11/{$b$},10/{$b$}}
          \path (\from) edge[loop above] node[bigloop] {\pis} (\from);
          \foreach \from/\to/\pis in {30/11/3,30/10/3,30/00/3,20/10/2,20/00/2}
            \path (\from) edge[bend right=60] node[fill=white] {$a_\pis$} (\to);
          \foreach \to in {00,10,32,31,30,20,11}
          \path (0A) edge[bend right=65] node[fill=white] {$\Gamma_{m_B}$} (\to); 
      \end{tikzpicture}
      \caption{Automaton $\A_{3,m_B,(3,1,2)}$ of Theorem~\ref{thm:lower_bound}}
      \label{fig:A}
    \end{figure}

    The NFA $\B_{m_A,m_B,\vek e}$ has the state set $Q_{m_B,\vek e}=\{(k,j) \mid k=0,1,2,\dots,m_B; \allowbreak \ j=0,1,2,\dots,e_i-1\}\cup\{(0,i) \mid 0 \le i \le e_0\}$. All states are initial, except for states $(0,i)$ with $i \neq 0$. Accepting states are the states $(0,i)$ with $i$ odd. The transition function of $\B_{m_A,m_B,\vek e}$ consists of
    \begin{itemize}
      \itemsep0pt
      \item self-loops under $\Sigma_{m_A}\cup \Gamma_{k-1}$ for all states $(k,j)$ with $k>0$,
      \item self-loops under $\Sigma_{m_A}\setminus\{b\}$ in state $(0,0)$,
      \item a $c_i$-transition 
        from each $(i,j)$ to $(i,j-1)$, for $i,j>0$, and
        from each $(i,0)$ to states $(\ell,j)$, for $1 \le \ell < i$ and $j=0,1,2,\dots,e_{\ell}-1$, and to state $(0,0)$,
      \item  a $b$-transition from each $(0,j)$ to $(0,j+1)$, for $0\leq j < e_0$, and
      \item edges under $\Sigma_{m_A}\setminus\{b\}$ from state $(0,e_0)$ to state $(0,0)$.
    \end{itemize}
    The NFA $\B_{m_A,m_B,\vek e}$ with $m_B=3$ and $\vek e=(2,2,2)$ is shown in Figure~\ref{fig:B}.
    \begin{figure}
      \centering
      \begin{tikzpicture}[baseline,->,>=stealth,shorten >=1pt,node distance=1.8cm,
        state/.style={circle,minimum size=1mm,very thin,draw=black,inner sep=2pt,initial text=},
        every node/.style={font=\small},
        bigloop/.style={shift={(0,0.01)},text width=1.6cm,align=center}]
        \node[state,initial below]  (03) {$0,0$};
        \node[state,initial below]  (10) [left of=03] {$1,0$};
        \node[state,initial below]  (11) [left of=10] {$1,1$};
        \node[state,initial below]  (20) [left of=11] {$2,0$};
        \node[state,initial below]  (21) [left of=20] {$2,1$};
        \node[state,initial below]  (30) [left of=21] {$3,0$};
        \node[state,initial below]  (31) [left of=30] {$3,1$};
        \node[state,accepting]      (02) [above right of=03,node distance=2.3cm] {$0,1$};
        \node[state]                (01) [below right of=02,node distance=2.3cm] {$0,2$};
        \node[state,accepting]      (00) [below right of=03,node distance=2.3cm] {$0,3$};
        
        \foreach \from/\to/\pis in {31/30/3,30/21/3,21/20/2,20/11/2,11/10/1,10/03/1}
          \path (\from) edge node[fill=white] {$c_\pis$} (\to);
        
        \foreach \from/\to in{03/02,02/01,01/00}
          \path (\from) edge[bend left=20] node[fill=white] {$b$} (\to);
        
        \foreach \from/\pis in {31/2,30/2,21/1,20/1}
          \path (\from) edge[loop above] node[bigloop] {$\Sigma_{m_A}\cup \Gamma_\pis$} (\from);
        
        \foreach \from in {11,10}
          \path (\from) edge[loop above] node[bigloop] {$\Sigma_{m_A}$} (\from);
        
        \path (03) edge[loop above] node[bigloop] {$\Sigma_{m_A}\setminus\{b\}$~~~~~~} (03);
        \path (00) edge[bend left] node[fill=white] {~~~~~$\Sigma_{m_A}\setminus\{b\}$} (03);
        
        \foreach \from/\to/\pis in {30/20/3,30/11/3,30/10/3,30/03/3,20/10/2,20/03/2}
          \path (\from) edge[bend right=60] node[fill=white] {$c_\pis$} (\to);
      \end{tikzpicture}
      \caption{Automaton $\B_{m_A,3,(2,2,2)}$ of Theorem~\ref{thm:lower_bound} with $e_0=3$}
      \label{fig:B}
    \end{figure}
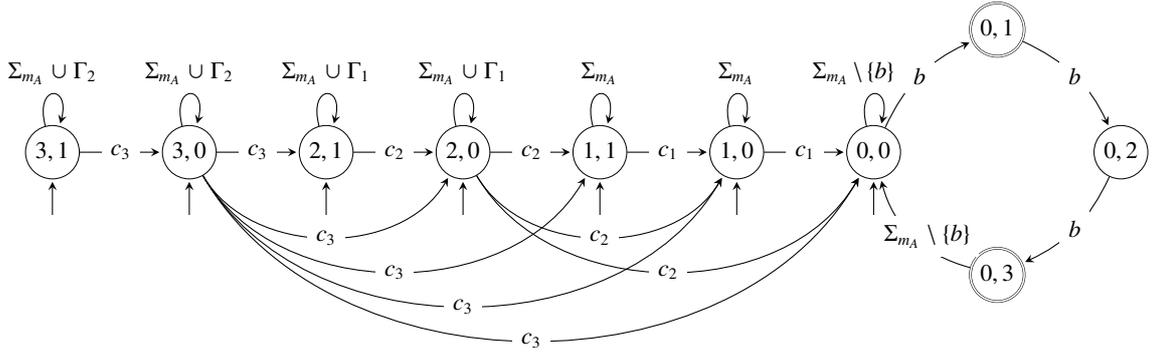

    We now define a word $u_{m_A+m_B}$ such that the sequence of prefixes of $u_{m_A+m_B}$ forms a tower. To do this, we proceed in two steps. First, we define a word $\uu_{m_A}$ inductively by $u_0 = b^{e_0}$ and $\uu_{k} = (\uu_{k-1} a_{k})^{d_k} \uu_{k-1}$, for $0 < k \le m_A$. Note that $\uu_k \in \Sigma_k^*$ and verify that $|\uu_k| = (e_0+1) \prod_{i=1}^k(d_i+1) - 1$ and that $\uu_k$ ends with an odd number of $b$'s, namely with $b^{e_0}$. 
    
    Let us notice that the sequence of prefixes of $u_{m_A+m_B}$ is such that if a prefix end by an even (resp. odd) number of $b$'s, then the next prefix ends with an odd (resp. even) number of $b$'s. It can be straightforwardly verified that every prefix of $\uu_{m_A}$ ending with an odd number of $b$'s is accepted by $\B_{m_A,m_B,\vek e}$ from state $(0,0)$. 
    
    We now show by induction on $k$ that every prefix of $\uu_k$ ending with an even number of $b$'s (including no $b$) is accepted by $\A_{m_A,m_B,\vek d}$ from a state smaller than $(k+1,0)$. 
		(By ``smaller'' we mean lexicographically smaller or, equivalently, closer to the state $(0,0)$.)
		Any prefix $b^{2i}$ of $u_0$ is accepted from state $(0,0)$. Consider a prefix $v$ of $\uu_k$ ending with an even number of $b$'s. Let $v = (u_{k-1} a_k)^\ell v'$, where $0 \leq \ell \leq d_k$ and $v'$ is a prefix of $u_{k-1}$ ending with an even number of $b$'s. By the induction assumption, $v'$ is accepted from a state $s$ smaller than $(k,0)$. Therefore, $v$ is accepted by $\A_{m_A,m_B,\vek d}$ by the path
    \[
      \underbrace{
      (k,\ell-1)  \xrightarrow{~\uu_{k-1}~} (k,\ell-1) 
                \xrightarrow{~a_{k}~} (k,\ell-2) 
                \quad \cdots \quad (k,1) 
                \xrightarrow{~\uu_{k-1}~} (k,1) 
                \xrightarrow{~a_{k}~} (k,0) 
                \xrightarrow{~\uu_{k-1}~} (k,0) 
                \xrightarrow{~a_{k}~} s}_{\ell-\text{times}}
                \xrightarrow{~v'~} (0,0)\,.
    \]
    
    We now use the word $\uu_{m_A}$ to define, for $0<k\leq m_B$, the words $u_{m_A+k}=(u_{m_A+k-1}c_{k})^{e_{k}}u_{m_A+k-1}$. Then we have that $u_{m_A+k}\in (\Sigma_{m_A}\cup \Gamma_{k})^*$, that $|u_{m_A+k}| = \prod_{i=1}^{m_A}(d_i+1)\prod_{i=0}^k(e_i+1)-1$, and that $u_{m_A+k}$ ends with an odd number of $b$'s, namely with $b^{e_0}$.
    
    Similarly as above, we show that the prefixes of $u_{m_A+m_B} b$ form a tower of height $|u_{m_A+m_B}b|+1$ (the one additional element is the empty word). Specifically, we prove by induction on $k$ that every prefix $v$ of $u_{m_A+k}$ ending with an even number of $b$'s is accepted by $\A_{m_A,m_B,\vek d}$ from a state smaller than $(k+1,0)$, and that every prefix of $u_{m_A+k}$ ending with an odd number of $b$'s is accepted by $\B_{m_A,m_B,\vek e}$ from a state smaller than $(k+1,0)$.
    
    This is true for $k=0$ as shown above. Let $k\geq 1$ and consider the word $u_{m_A+k}$. Let $v$ be a prefix of $u_{m_A+k}$ of the form $(u_{m_A+k-1}c_{k})^\ell v'$, where $0\leq \ell\leq e_k$ and $v'$ is a prefix of $u_{m_A+k-1}$. Recall that $u_{m_A+k-1}$ ends with $b^{e_0}$, which is an odd number of $b$'s. Therefore, $u_{m_A+k-1}=wb$ and, by the induction hypothesis, $w$ is accepted by $\A_{m_A,m_B,\vek d}$ from a state $t$.
    
    If $v'$ ends with an even number of $b$'s, then $\A_{m_A,m_B,\vek d}$ accepts $v'$ from a state $s$ by the induction hypothesis. Then the whole prefix $v$ is accepted in $\A_{m_A,m_B,\vek d}$ by the path 
    \[
      \underbrace{
        t \overbrace{
          \xrightarrow{~w~} (0,0) 
          \xrightarrow{~b~} 
        }^{\uu_{m_A+k-1}} 0_{\A}
        \xrightarrow{~c_k~} t  
        \xrightarrow{~\uu_{m_A+k-1}~}  
        0_{\A}\ \dots\ t
        \xrightarrow{~\uu_{m_A+k-1}~} 0_{\A}
        \xrightarrow{~c_k~} s
      }_{\ell-\text{times}}
        \xrightarrow{~v'~} (0,0)\,.
    \]
    
    If $v'$ ends with an odd number of $b$'s, it is accepted by $\B_{m_A,m_B,\vek e}$ from a state $s'$, by the inductive hypothesis. Then the whole prefix $v$ is accepted in $\B_{m_A,m_B,\vek e}$ by the path
    \[
      \underbrace{
      (k,\ell-1)  \xrightarrow{~\uu_{m_A+k-1}~} (k,\ell-1) 
        \xrightarrow{~c_{k}~} (k,\ell-2) 
        \quad \cdots 
        \cdots \quad (k,1) 
        \xrightarrow{~\uu_{m_A+k-1}~} (k,1) 
        \xrightarrow{~c_{k}~} (k,0) 
        \xrightarrow{~\uu_{m_A+k-1}~} (k,0) 
        \xrightarrow{~c_{k}~} s' }_{\ell-\text{times}}
        \xrightarrow{~v'~} (0,i)\,,
    \]
    where state $(0,i)$ is accepting in $\B_{m_A,m_B,\vek e}$, that is, $i$ is odd.
    
    Together, the height of the tower formed by the prefixes is 
    $|u_{m_A+m_B}b|+1 
      = \prod_{i=1}^{m_A}(d_i+1)\prod_{i=0}^{m_B}(e_i+1) + 1.
    $
    Notice that $u_{m_A+m_B} b$ ends with $e_0+1$ letters $b$, hence it is accepted by $\A_{m_A,m_B,\vek d}$. Then $u_{m_A+m_B} b \preq u_{m_A+m_B} a b$ and $u_{m_A+m_B} a b$ is accepted by $\B_{m_A,m_B,\vek d}$, which extends the tower by one additional element and proves the claimed height of the tower.
    
    It remains to show that there is no infinite tower. We show it in two steps. We first show that there is no infinite tower over the alphabet $\Sigma_{m_A}$ and then that there is no infinite tower over the alphabet $\Sigma_{m_A}\cup\Gamma_{m_B}$. Suppose the contrary, and first let $k \geq 0$ be the smallest integer such that there is an infinite tower over $\Sigma_k$. Since $\eps,b,b^2,\dots,b^{e_0}$ is the highest tower over $\Sigma_0$, we have that $k\ge 1$. By the  induction hypothesis, there is no infinite tower over $\Sigma_{k-1}$, therefore we may consider an infinite tower where each word contains the letter $a_k$. 
		The crucial property of $\A_{m_A,m_B,\vek d}$ is that every word from $L(\A_{m_A,m_B,\vek d}) \cap \Sigma_k^*$ contains at most $d_k$ occurrences of letter $a_k$. Without loss of generality, we can therefore consider an infinite tower $(w_i)_{i=1}^{\infty}$ such that every $w_i$ contains the same (nonzero) number of occurrences of $a_k$. Let $w_i = w_i'' a_k w_i'$, where $w_i' \in \Sigma_{k-1}^*$. Since all transitions under $a_k$ end in an initial state in both automata, the word $w_i'$ is accepted by $\A_{m_A,m_B,\vek d}$ (by $\B_{m_A,m_B,\vek e}$ resp.) if $w_i$ is accepted by $\A_{m_A,m_B,\vek d}$ (by $\B_{m_A,m_B,\vek e}$ resp.). Then $(w_i')_{i=1}^\infty$ is an infinite tower over $\Sigma_{k-1}$; a contradiction.
    Thus, let $k\ge 1$ be the smallest integer such that there is an infinite tower over the alphabet $\Sigma_{m_A}\cup \Gamma_k$. Similarly as above, since the number of occurrences of $c_k$ in words over $\Sigma_{m_A}\cup \Gamma_k$ accepted by the automaton $\B_{m_A,m_B,\vek d}$ is restricted to $e_k$, we can choose an infinite tower $(w_i)_{i=1}^\infty$ in which all words have the same (nonzero) number of occurrences of the letter $c_k$. Let $w_i = w_i'' c_k w_i'$ with $w_i' \in (\Sigma_{m_A} \cup \Gamma_{k-1})^*$. A direct inspection of the automata yields that $w_i'$ is accepted by $\A_{m_A,m_B,\vek d}$ (by $\B_{m_A,m_B,\vek e}$ resp.) if $w_i$ is accepted by $\A_{m_A,m_B,\vek d}$ (by $\B_{m_A,m_B,\vek e}$ resp.). Then $(w_i')_{i=1}^\infty$ is an infinite tower yielding a contradiction.				
  \end{proof}

  If, in the previous theorem, $m_B=0$, the automaton $\B_{m_A,m_B,\vek e}$ is deterministic and we have the following corollary.
  \begin{cor}\label{thm:exp:cor}
    For all integers $m,d_1,d_2,\dots, d_m \geq 1$ and every odd positive integer $e$, there exist an NFA with $\sum_{i=1}^m d_i + 2$ states and a DFA with $e + 1$ states over an alphabet of cardinality $m + 1$ having a tower of height $(e+1)\prod_{i=1}^m (d_i+1) + 2$ and no infinite tower.
  \end{cor}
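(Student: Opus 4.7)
The plan is simply to specialize Theorem~\ref{thm:lower_bound} to the case $m_B=0$ and to verify that the automaton $\B_{m_A,m_B,\vek e}$ then degenerates to a (partial) deterministic automaton. Concretely, I set $m_A = m$, $m_B = 0$, and $e_0 = e$ (which is assumed odd, as required by the theorem). Theorem~\ref{thm:lower_bound} then immediately provides two NFAs over an alphabet of cardinality $m_A+m_B+1 = m+1$ with state counts $\sum_{i=1}^{m} d_i + 2$ and $\sum_{i=0}^{0} e_i + 1 = e + 1$, a tower between them of height
\[
  \prod_{i=1}^{m_A}(d_i+1)\prod_{i=0}^{m_B}(e_i+1) + 2 \;=\; (e+1)\prod_{i=1}^{m}(d_i+1)+2,
\]
and the absence of an infinite tower.

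The only thing that requires a brief check is that the NFA $\B_{m,0,\vek e}$ supplied by the construction is in fact a DFA with $e+1$ states. Inspecting the definition of $\B_{m_A,m_B,\vek e}$ with $m_B=0$, the alphabet $\Gamma_{m_B}$ is empty and the state set reduces to $\{(0,j) \mid 0 \leq j \leq e_0\}$, whose cardinality is $e+1$. Moreover, $(0,0)$ is the unique initial state (since all $(0,i)$ with $i\neq 0$ are declared non-initial), and the only transitions present are the $b$-chain $(0,j) \xrightarrow{b} (0,j+1)$ for $0\leq j<e$, the non-$b$ self-loop at $(0,0)$, and the edges $(0,e)\xrightarrow{a}(0,0)$ for each $a\in \Sigma_{m_A}\setminus\{b\}$. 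These transitions are deterministic (with some transitions undefined, as explicitly permitted in the preliminaries), so $\B_{m,0,\vek e}$ qualifies as a DFA on $e+1$ states.

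Having checked this, the corollary follows verbatim from Theorem~\ref{thm:lower_bound}, so no separate argument about the tower or about the absence of an infinite tower is needed. I do not foresee any genuine obstacle: the specialization is syntactic, and the main nontrivial point, the determinism of $\B$ when $m_B=0$, is settled by a direct inspection of the construction.
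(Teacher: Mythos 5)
Your proposal is correct and follows exactly the paper's own route: instantiate Theorem~\ref{thm:lower_bound} with $m_A=m$, $m_B=0$, $e_0=e$, and observe that $\B_{m,0,\vek e}$ collapses to a deterministic automaton on the $e+1$ states $(0,0),\dots,(0,e_0)$. Your explicit verification of determinism merely spells out the paper's one-line remark, so nothing further is needed.
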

  \begin{proof}
    Let $m_A=m$, $m_B=0$ and $e=e_0$ in Theorem~\ref{thm:lower_bound}. Then there exist two NFAs with $\sum_{i=1}^{m} d_i + 2$ and $e + 1$ states over an alphabet of cardinality $m_A + m_B + 1$ having a tower of height $(e+1)\prod_{i=1}^{m_A}(d_i+1) + 2$ and no infinite tower. Moreover, notice that the automaton $\B_{m_A,0,\vek e}$ of Theorem~\ref{thm:lower_bound} is deterministic.
  \end{proof}
  
  Furthermore, the following corollary shows that the upper bound in Theorem~\ref{thm01} is tight if the alphabet is fixed even if one of the automata is deterministic.

  \begin{cor}\label{corThm3}
    Let $k \ge 2$ be a constant. Then the maximum height of a tower between an NFA with at most $n$ states and a DFA with at most $n$ states over an alphabet of cardinality $k$ having no infinite tower is in $\Omega\zav{n^k}$.
  \end{cor}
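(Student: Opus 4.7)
The plan is to specialize Corollary~\ref{thm:exp:cor} by choosing $m=k-1$, so that the alphabet cardinality $m+1$ equals $k$, and then to balance the remaining parameters $d_1,\dots,d_{k-1}$ and $e$ so that both automata have at most $n$ states while the tower height $(e+1)\prod_{i=1}^{k-1}(d_i+1)+2$ grows as $n^k$.

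Concretely, I would let $e$ be the largest odd positive integer with $e+1\le n$, which gives $e+1=\Theta(n)$, and I would set each $d_i=\lfloor(n-2)/(k-1)\rfloor$, decreasing one or two of them by $1$ if necessary to ensure $\sum_{i=1}^{k-1}d_i\le n-2$. For every sufficiently large $n$, each $d_i\ge 1$ (this is where we use that $k$ is a fixed constant), so the hypotheses of Corollary~\ref{thm:exp:cor} are satisfied. The resulting NFA has $\sum_{i=1}^{k-1}d_i+2\le n$ states, the DFA has $e+1\le n$ states, both use an alphabet of size $k$, and by Corollary~\ref{thm:exp:cor} there is no infinite tower between them.

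Finally, since $k$ is constant, $d_i+1=\Theta(n)$ uniformly in $i$, so the guaranteed tower has height
\[
  (e+1)\prod_{i=1}^{k-1}(d_i+1)+2 \;=\; \Theta(n)\cdot\Theta(n^{k-1}) \;=\; \Theta(n^k),
\]
with a constant depending only on $k$. This yields the $\Omega(n^k)$ lower bound. There is no genuine obstacle here: Corollary~\ref{thm:exp:cor} already supplies the construction and the non-existence of an infinite tower, and the only work is the routine parameter-balancing above (together with small-$n$ edge cases, which can be absorbed into the hidden constant).
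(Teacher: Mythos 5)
Your proposal is correct and follows essentially the same route as the paper: apply Corollary~\ref{thm:exp:cor} with $m=k-1$, take $e$ odd of size roughly $n-1$ and each $d_i$ roughly $\lfloor (n-2)/(k-1)\rfloor$, and observe that the resulting tower height $(e+1)\prod_{i=1}^{k-1}(d_i+1)+2$ is $\Theta(n)\cdot\Theta(n^{k-1})=\Omega(n^k)$ since $k$ is constant. The only differences are minor rounding conventions in the choice of the $d_i$, which do not affect the argument.
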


  \begin{proof}
	Let $m=k-1$, and let $\ell=\left\lfloor \frac{n-2}{m} \right\rfloor$. For sufficiently large $n$, there are integers $\ell \leq d_i \leq \ell+1$, $i=1,2,\dots,m$, and an odd integer  $n-2 \leq e\leq n-1$ such that Corollary~\ref{thm:exp:cor} yields an NFA with $n$ states and a DFA with at most $n$ states over an alphabet of cardinality $k=m+1$ having a tower of height at least $(n-1)(\ell+1)^{m} \in \Omega\zav{n^{k}}$.
   \end{proof}

\subsection{Lower bounds on the height of towers for NFAs over a growing alphabet}
  If the alphabet may grow, we immediately obtain the following result showing that the tower may be exponential with respect to the size of the alphabet which is of the size of the number of states of the automata.
  \begin{cor}\label{corThm7_2}
    For every integer $n\ge 3$, there exist two NFAs with $n$ states over an alphabet of cardinality $2n-3$ having a tower of height $2^{2n-3}+2$.
  \end{cor}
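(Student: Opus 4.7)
The plan is to apply Theorem~\ref{thm:lower_bound} with carefully chosen parameters so that both automata have exactly $n$ states and the tower height becomes $2^{2n-3}+2$. Since the theorem's height formula is $\prod_{i=1}^{m_A}(d_i+1)\prod_{i=0}^{m_B}(e_i+1)+2$, to turn this product into a pure power of $2$ I would take every $d_i$ and every $e_i$ equal to $1$, which forces each factor $(d_i+1)$ and $(e_i+1)$ to equal $2$. The constraint that $e_0$ must be odd is respected since $e_0=1$.

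With these choices the state counts become $\sum_{i=1}^{m_A}d_i+2 = m_A+2$ and $\sum_{i=0}^{m_B}e_i+1 = m_B+2$, the alphabet size is $m_A+m_B+1$, and the tower height is $2^{m_A}\cdot 2^{m_B+1}+2 = 2^{m_A+m_B+1}+2$. Setting $m_A = m_B = n-2$ (which is at least $1$ because $n\ge 3$) yields two NFAs of $n$ states each, an alphabet of size $(n-2)+(n-2)+1 = 2n-3$, and a tower of height $2^{2(n-2)+1}+2 = 2^{2n-3}+2$, exactly as required.

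There is essentially no obstacle here since the work has already been done in Theorem~\ref{thm:lower_bound}; the only thing to check is that the parameter assignment is admissible (namely $m_A\ge 1$, $m_B\ge 0$, all $d_i,e_i\ge 1$, and $e_0$ odd), which is immediate. I would therefore present the proof as a one-line instantiation of Theorem~\ref{thm:lower_bound}, verifying the three arithmetic identities for the number of states, the alphabet size, and the tower height.
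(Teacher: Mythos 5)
Your proposal is correct and is essentially identical to the paper's own proof: both instantiate Theorem~\ref{thm:lower_bound} with $m_A=m_B=n-2$ and all $d_i=e_i=e_0=1$, and the arithmetic for the state counts, alphabet size $2n-3$, and height $2^{2n-3}+2$ checks out exactly as you computed.
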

  \begin{proof}
    Let $m\ge 1$. By Theorem~\ref{thm:lower_bound} for $m_A=m_B=m$ and $d_i=e_i=e_0=1$, for $i=1,2,\ldots,m$, there are two NFAs with $n=m+2$ states over an alphabet of cardinality $2m+1$ having a tower of height $2^{2m+1} + 2 = 2^{2n-3} + 2$.
  \end{proof}

  We further improve the lower bound by the following result. Its proof, which we give in full length, is a slight modification of the proof of Theorem \ref{thm:lower_bound}. 
  \begin{thm}\label{thm:2exp:improved}
    For all integers $m_A\ge 1$, $m_B\ge 0$, and $d_1,d_2,\dots, d_{m_A}, e_1, e_2, \dots, e_{m_B} \ge 1$, there exist two NFAs with $\sum_{i=1}^{m_A} d_i + 1$ and $\sum_{i=1}^{m_B} e_i + 2$ states over an alphabet of cardinality $m_A + m_B + 1$ having a tower of height $\prod_{i=1}^{m_B}(e_i+1)\cdot \zav{2 \prod_{i=1}^{m_A}(d_i+1)-1}+1$ and no infinite tower.
  \end{thm}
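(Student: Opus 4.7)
The plan is to follow the proof of Theorem~\ref{thm:lower_bound} almost verbatim, with two structural modifications: (i) remove the auxiliary state $0_{\A}$ from $\A_{m_A,m_B,\vek d}$ so that $\A$ has $\sum d_i + 1$ states rather than $\sum d_i + 2$, and (ii) add one state to $\B_{m_A,m_B,\vek e}$ (effectively absorbing the $b$-alternation machinery into $\B$) so that $\B$ has $\sum e_i + 2$ states. Since the parameter $e_0$ is no longer present, the base word $u_0$ is changed (to $\varepsilon$ or $b$, depending on parity bookkeeping), and the role previously played by the $b^{e_0}$ initial segment is replaced by a constant mechanism whose contribution is the factor $2$ in $2\prod(d_i+1)-1$. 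The alphabet stays $\Sigma_{m_A}\cup\Gamma_{m_B}$ of size $m_A+m_B+1$.

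The first step is to specify the modified automata. In $\A'$, the state set becomes $\{(k,j) : 0\le k\le m_A,\ 0\le j<d_k\}$ with $d_0=1$; the $a_i$- and self-loop structure is inherited from $\A_{m_A,m_B,\vek d}$, while the $b$-transitions out of $(0,0)$ no longer go to an external $0_{\A}$ but are rerouted so that $(0,0)$ itself (or a combination with the new $\B$ state) handles the alternation. In $\B'$, the added state plays the symmetric role: it becomes the target of a $b$-transition from the natural ``$(0,0)$-analogue'' of $\B$, so that the two-step cycle of alternation under $b$ between the old $\A$-pair is replaced by a two-step cycle between $\A'$ and $\B'$, explaining why the overall tower height ratio drops only by a constant while a state is saved. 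All other transitions (the $c_i$ chains, the $\Sigma_{m_A}\setminus\{b\}$ self-loops, and the $\Gamma_{m_B}$-edges from $0_{\A}$ to every non-sink state) are rerouted mutatis mutandis.

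The second step is to define the words $u_k$ by $u_k=(u_{k-1}a_k)^{d_k}u_{k-1}$ for $1\le k\le m_A$ and $u_{m_A+k}=(u_{m_A+k-1}c_k)^{e_k}u_{m_A+k-1}$ for $1\le k\le m_B$, from a modified base $u_0$, and prove by induction on $k$, exactly as in Theorem~\ref{thm:lower_bound}, that each prefix of $u_{m_A+m_B}$ (after an appropriate $+1$ suffix handling) is accepted alternately by $\A'$ and $\B'$ from a ``state smaller than $(k+1,0)$''. A routine length computation then gives $|u_{m_A+m_B}| = \prod_{i=1}^{m_B}(e_i+1)\cdot(2\prod_{i=1}^{m_A}(d_i+1)-1)$, so that the prefixes together with the empty word provide the desired tower of height $\prod_{i=1}^{m_B}(e_i+1)\cdot\bigl(2\prod_{i=1}^{m_A}(d_i+1)-1\bigr)+1$.

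The third step is to rule out an infinite tower by the same two-stage induction on the alphabet as in the proof of Theorem~\ref{thm:lower_bound}. The essential invariants carry over without change: any word accepted by $\A'$ (resp.~$\B'$) contains at most $d_k$ occurrences of $a_k$ (resp.~$e_k$ occurrences of $c_k$); one stabilizes an infinite tower to a constant positive number of occurrences of the largest letter, strips the prefix up to the last such occurrence, and uses the fact that every $a_k$- (resp.~$c_k$-) transition lands in a state reachable by an initial word, to obtain an infinite tower over a strictly smaller alphabet, contradicting the base case. The main obstacle is calibrating the rerouted $b$-transitions in step one so that the prefix parity argument in step two actually closes: specifically, checking that after removing $0_{\A}$ the $\A'$-acceptance of every prefix of $u_k$ ending with an even number of $b$'s still holds, and that the new $\B'$ state does not introduce any accepting path violating the $a_k$/$c_k$ occurrence bounds used in step three.
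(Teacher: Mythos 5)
Your high-level plan (reuse the construction of Theorem~\ref{thm:lower_bound}, delete the state $0_{\A}$ from $\A$, give $\B$ one extra state, keep the alphabet $\Sigma_{m_A}\cup\Gamma_{m_B}$) is indeed the route the paper takes, but the proposal stops precisely where the content of the theorem lies. The modified automata are never defined: ``rerouted so that $(0,0)$ itself (or a combination with the new $\B$ state) handles the alternation'' and ``rerouted mutatis mutandis'' are placeholders, and you yourself flag the calibration of these $b$-transitions as ``the main obstacle'' without resolving it. In the paper's construction the calibration is concrete and changes the mechanism, not just the state count: $\A$ has \emph{no} $b$-transition at $(0,0)$ at all, so $L(\A)$ consists of $\eps$ and words ending with an $a$-letter; the $\Gamma_{m_B}$-edges now leave from $(0,0)$; and $\B$ gets a single dead-end accepting state $(0,1)$, entered under $\Gamma_{m_B}\cup\{b\}$ from $(0,0)$ and under $c_i$ from the states $(i,0)$, so that $L(\B)$ consists of words ending with $b$ or a $c$-letter. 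The alternation criterion thus switches from the trailing-$b$ parity of Theorem~\ref{thm:lower_bound} (which needed $e_0$ odd) to a last-letter criterion; none of this is recoverable from ``absorbing the $b$-alternation machinery into $\B$''.

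The few details you do commit to would fail. Keeping the recursion $u_k=(u_{k-1}a_k)^{d_k}u_{k-1}$ with base $u_0=\eps$ gives $|u_{m_A}|=\prod_{i=1}^{m_A}(d_i+1)-1$ and a word containing no $b$ at all, so all prefixes end with $a$-letters and nothing alternates; with $u_0=b$ one gets $|u_{m_A}|=2\prod_{i=1}^{m_A}(d_i+1)-1$, but then every $c_k$ introduced in the $\Gamma$-stage sits between two $b$'s, and neither choice reproduces the claimed height. The paper instead changes the recursion to $u_k=(u_{k-1}\,b\,a_k)^{d_k}u_{k-1}$ with $u_0=\eps$: the $b$ interleaved before every $a_k$ is exactly what produces the factor $2\prod_{i=1}^{m_A}(d_i+1)$ and the letter-by-letter alternation inside $u_{m_A}$, so the factor $2$ is not a ``constant mechanism'' detached from the word but a change of the word itself. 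Moreover, the genuinely delicate places are the junctions where a $c_k$ is followed by the leading $b$ of the next block --- prefixes ending there can only be accepted by $\B$ --- and this is exactly the interface your unspecified rerouting must handle, so the assertion that the induction of Theorem~\ref{thm:lower_bound} ``closes exactly as before'' is not justified. A smaller but real inaccuracy: the occurrence bound you invoke holds only in the form ``every word of $L(\A)\cap\Sigma_k^*$ has at most $d_k$ occurrences of $a_k$'' (the $\Gamma$-edges out of $(0,0)$ re-enter the $a_k$-chain, so unrestricted words can have many more), which is why the no-infinite-tower argument must proceed by induction on the alphabet, first over $\Sigma_k$ and only then over $\Sigma_{m_A}\cup\Gamma_k$.
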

  \begin{proof}
    Let $d_0=1$, $e_0=1$, $\vek{d}=(d_1,\dots,d_{m_A})$, and $\vek{e}=(e_1,\ldots,e_{m_B})$.
    For $k\ge 0$, let $\Sigma_k=\{b,a_1,a_2,\ldots,a_k\}$ and $\Gamma_k=\{c_1,c_2,\dots,c_k\}$ be alphabets. We define two NFAs $\A_{m_A,m_B,\vek{d}}$ and $\B_{m_A,m_B,\vek{e}}$ over $\Sigma_{m_A}\cup\Gamma_{m_B}$ as follows.

    The set of states of the NFA $\A_{m_A,m_B,\vek d}$ is $Q_{m_A,\vek d}=\{(k,j) \mid k=0,1,2,\dots,m_A; \allowbreak \ j=0,1,2,\dots,d_k-1\}$. All states are initial, and state $(0,0)$ is the unique accepting state. The transition function of $\A_{m_A,m_B,\vek d}$ consists of
    \begin{itemize}
      \itemsep0pt
      \item  a self-loop under $\Sigma_{k-1}$ for all states $(k,j)$ with $k>0$,
      \item  an $a_i$-transition 
        from each $(i,j)$ to $(i,j-1)$, for $i,j>0$, and 
        from each $(i,0)$ to states $(\ell,j)$, for $0\leq \ell < i$ and $j=0,1,2,\dots,d_{\ell}-1$, and
      \item transitions under $\Gamma_{m_B}$ from state $(0,0)$ to each state of $Q_{m_A,\vek d}\setminus\{(0,0)\}$.
    \end{itemize}
    The NFA $\A_{m_A,m_B,\vek d}$ with $m_A=3$ and $\vek d=(3,1,2)$ is shown in Figure~\ref{fig:A:improved}.
    \begin{figure}[ht]
      \centering
      \begin{tikzpicture}[baseline,->,>=stealth,shorten >=1pt,node distance=1.8cm,
        state/.style={circle,minimum size=1mm,very thin,draw=black,inner sep=2pt,initial text=},
        every node/.style={font=\small},
        bigloop/.style={shift={(0,0.01)},text width=1.6cm,align=center}]
        \node[state,initial below,accepting] (00) {$0,0$};
        \node[state,initial below]           (10) [left of=00,node distance=2.5cm] {$1,0$};
        \node[state,initial below]           (11) [left of=10] {$1,1$};
        \node[state,initial below]           (20) [left of=11] {$2,0$};
        \node[state,initial below]           (30) [left of=20] {$3,0$};
        \node[state,initial below]           (31) [left of=30] {$3,1$};
        \node[state,initial below]           (32) [left of=31] {$3,2$};
          \foreach \from/\to/\pis in {32/31/3,31/30/3,30/20/3,20/11/2,11/10/1,10/00/1}
          \path (\from) edge node[fill=white] {$a_\pis$} (\to);
          \foreach \from/\pis in {32/{$b,a_1,a_2$},31/{$b,a_1,a_2$},30/{$b,a_1,a_2$},20/{$b,a_1$},11/{$b$},10/{$b$}}
          \path (\from) edge[loop above] node[bigloop] {\pis} (\from);
          \foreach \from/\to/\pis in {30/11/3,30/10/3,30/00/3,20/10/2,20/00/2}
            \path (\from) edge[bend right=60] node[fill=white] {$a_\pis$} (\to);
          \foreach \to in {10,32,31,30,20,11}
          \path (00) edge[bend right=65] node[fill=white] {$\Gamma_{m_B}$} (\to); 
      \end{tikzpicture}
      \caption{Automaton $\A_{3,m_B,(3,1,2)}'$ of Theorem~\ref{thm:2exp:improved}}
      \label{fig:A:improved}
    \end{figure}
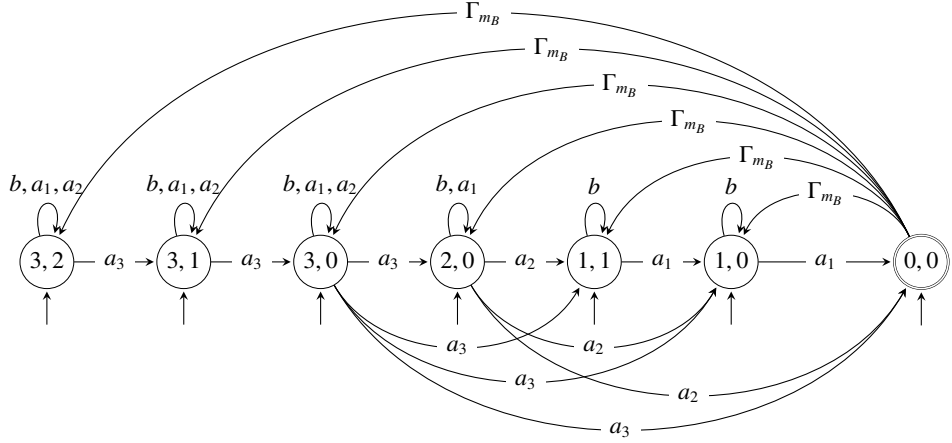

    The NFA $\B_{m_A,m_B,\vek e}$ has the state set $Q_{m_B,\vek e}=\{(k,j) \mid k=0,1,2,\dots,m_B; \allowbreak \ j=0,1,2,\dots,e_i-1\}\cup\{(0,1)\}$. All states are initial, except for state $(0,1)$, which, in turn, is the unique accepting state. The transitions of $\B_{m_A,m_B,\vek e}$ consist of
    \begin{itemize}
      \itemsep0pt
      \item self-loops under $\Sigma_{m_A}\cup \Gamma_{k-1}$ for all states $(k,j)$ with $k> 0$, and of self-loops under $\Sigma_{m_A}$ in state $(0,0)$,
      \item a $c_i$-transition 
        from each $(i,j)$ to $(i,j-1)$, for $i,j>0$, and
        from each $(i,0)$ to states $(\ell,j)$, for $0 \le \ell < i$ and $j=0,1,2,\dots,e_{\ell}-1$, and to state $(0,1)$, and
      \item  transitions under $\Gamma_{m_B}\cup\{b\}$ from $(0,0)$ to $(0,1)$.
    \end{itemize}
    The NFA $\B_{m_A,m_B,\vek e}$ with $m_B=3$ and $\vek e=(2,2,2)$ is shown in Figure~\ref{fig:B:improved}.
    \begin{figure}[ht]
      \centering
      \begin{tikzpicture}[baseline,->,>=stealth,shorten >=1pt,node distance=1.8cm,
        state/.style={circle,minimum size=1mm,very thin,draw=black,inner sep=2pt,initial text=},
        every node/.style={font=\small},
        bigloop/.style={shift={(0,0.01)},text width=1.6cm,align=center}]
        \node[state,initial below]  (03) {$0,0$};
        \node[state,initial below]  (10) [left of=03] {$1,0$};
        \node[state,initial below]  (11) [left of=10] {$1,1$};
        \node[state,initial below]  (20) [left of=11] {$2,0$};
        \node[state,initial below]  (21) [left of=20] {$2,1$};
        \node[state,initial below]  (30) [left of=21] {$3,0$};
        \node[state,initial below]  (31) [left of=30] {$3,1$};
        \node[state,accepting]      (02) [right of=03,node distance=3cm] {$0,1$};
        
        \foreach \from/\to/\pis in {31/30/3,30/21/3,21/20/2,20/11/2,11/10/1,10/03/1}
          \path (\from) edge node[fill=white] {$c_\pis$} (\to);
        
        \path (03) edge node[fill=white] {$\Gamma_{m_b}\cup\{b\}$} (02);
        
        \foreach \from/\pis in {31/2,30/2,21/1,20/1}
          \path (\from) edge[loop above] node[bigloop] {$\Sigma_{m_A}\cup \Gamma_\pis$} (\from);
        
        \foreach \from in {11,10,03}
          \path (\from) edge[loop above] node[bigloop] {$\Sigma_{m_A}$} (\from);
        
        \foreach \from/\to/\pis in {30/20/3,30/11/3,30/10/3,30/03/3,20/10/2,20/03/2,
          30/02/3,20/02/2,10/02/1}
          \path (\from) edge[bend right=60] node[fill=white] {$c_\pis$} (\to);
      \end{tikzpicture}
      \caption{Automaton $\B_{m_A,3,(2,2,2)}$ of Theorem~\ref{thm:2exp:improved}}
      \label{fig:B:improved}
    \end{figure}
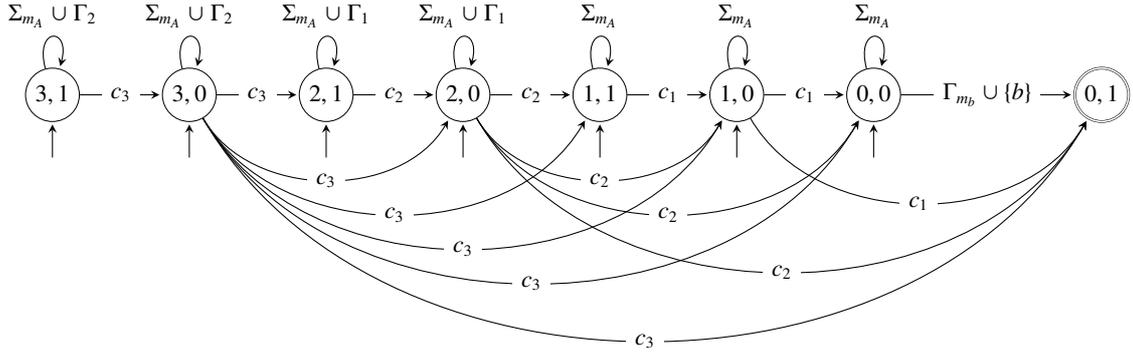
    
    We now define a word $u_{m_A+m_B} b$ such that all its prefixes form a tower. To do this, we first inductively defined a word $\uu_{m_A}$ so that $u_0 = \eps$ and $\uu_{k} = (\uu_{k-1} b a_{k})^{d_k} \uu_{k-1}$, for $0 < k \le m_A$. Then $\uu_k \in \Sigma_k^*$ and $|\uu_k| = 2 \prod_{i=1}^k(d_i+1) - 2$.
    We now use the word $\uu_{m_A}$ to define, for $0<k\leq m_B$, a word $u_{m_A+k}=(u_{m_A+k-1} c_{k})^{e_{k}}u_{m_A+k-1}$. Then $u_{m_A+k}\in (\Sigma_{m_A}\cup \Gamma_{k})^*$ and $|u_{m_A+k}| = \prod_{i=1}^k(e_i+1)\zav{2 \prod_{i=1}^{m_A}(d_i+1)-1}-1$. Finally, the word $u_{m_A+m_B}b$ is of length $\prod_{i=1}^{m_B}(e_i+1)\zav{2 \prod_{i=1}^{m_A}(d_i+1)-1}$, therefore it has $\prod_{i=1}^{m_B}(e_i+1)\zav{2 \prod_{i=1}^{m_A}(d_i+1)-1}+1$ prefixes.
  
    We now show that the prefixes of $u_{m_A+m_B}b$ form a tower between the languages. Namely, we show by induction on $k$ that every prefix $v$ of $u_{m_A+k}b$ is accepted  by $\A_{m_A,m_B,\vek d}$ from a state smaller than $(k+1,0)$ if it ends with a letter from $\Sigma_{m_A+k}\setminus\{b\}$, and it is accepted by $\B_{m_A,m_B,\vek e}$ from an initial state smaller than $(k+1,0)$ if it ends with a letter from $\Gamma_{k}\cup\{b\}$ .

    For $k=0$, it is easy to see that any prefix $v$ of $u_{m_A}b$ ending with $b$ is accepted by $\B_{m_A,m_B,\vek e}$ from state $(0,0)$. We show that if $v$ is a prefix of $u_{m_A}$ ending with a letter from $\Sigma_{m_A}\setminus\{b\}$, then $v$ is accepted by $\A_{m_A,m_B,\vek d}$ from a state smaller than $(k+1,0)$. If $v=u_0=\eps$, then $v$ is accepted by $\A_{m_A,m_B,\vek e}$ from state $(0,0)$. Let $v = (u_{k-1} b a_{k})^{\ell} v'$, for some $v'$ being a prefix of $u_{k-1}$ ending with a letter from $\Sigma_{m_A}\setminus\{b\}$. By the induction hypothesis, $v'$ is accepted by $\A_{m_A,m_B,\vek e}$ from some state $s$ smaller than $(k,0)$. Then $v$ is accepted by $\A_{m_A,m_B,\vek e}$ by the path
    \[
      \underbrace{(k,\ell-1) \xrightarrow{~\uu_{k-1} b~} (k,\ell-1) 
        \xrightarrow{~a_k~} (k,\ell-2)  
        \ \dots\ 
        (k,0) \xrightarrow{~\uu_{k-1} b~} (k,0)
        \xrightarrow{~a_k~} s}_{\ell-\text{times}}
        \xrightarrow{~v'~} (0,0)\,.
    \]

    Thus, let $k\geq 1$ and consider the word $u_{m_A+k}b$. The claim holds for prefixes of $u_{m_A+k-1}b$ by induction. Let $u_{m_A+k-1}$ be accepted in $\A_{m_A,m_B,\vek d}$ from a state $t$. Let $v$ be a prefix of $u_{m_A+k}b$ of the form $(u_{m_A+k-1} c_{k})^\ell v'$, where $1 \leq \ell\leq e_k$ and $v'$ is a prefix of $u_{m_A+k-1}$.
    
    If $v'$ ends with a letter from $\Sigma_{m_A}\setminus\{b\}$, then $\A_{m_A,m_B,\vek d}$ accepts $v'$ from a state $s$ by the induction hypothesis. Then the whole prefix $v$ is accepted in $\A_{m_A,m_B,\vek d}$ by the path 
    \[
      \underbrace{t \xrightarrow{~\uu_{m_A+k-1}~} (0,0) 
        \xrightarrow{~c_k~} t  
        \xrightarrow{~\uu_{m_A+k-1}~}  
        \ \dots\ 
        \xrightarrow{~\uu_{m_A+k-1}~} (0,0) 
        \xrightarrow{~c_k~} s}_{\ell-\text{times}}
        \xrightarrow{~v'~} (0,0)\,.
    \]
    
    If $v'$ ends with a letter from $\Gamma_{k}\cup\{b\}$, it is accepted by $\B_{m_A,m_B,\vek e}$ from a state $s'$ smaller than $(k,0)$ by the induction hypothesis. Then the whole prefix $v$ is accepted by $\B_{m_A,m_B,\vek e}$ by the path
    \[
      (k,\ell-1)  \xrightarrow{~\uu_{m_A+k-1}~} (k,\ell-1) 
        \xrightarrow{~c_{k}~} (k,\ell-2) 
        \xrightarrow{~\uu_{m_A+k-1}~} (k,\ell-2) \quad 
        \cdots \quad (k,0) 
        \xrightarrow{~\uu_{m_A+k-1}~} (k,0)
        \xrightarrow{~c_{k}~} s'
        \xrightarrow{~v'~} (0,1)\,.
    \]
    This shows the claimed height of the tower. 

    It remains to show that there is no infinite tower. We first show that there is no infinite tower over the alphabet $\Sigma_{m_A}$ and then that there is no infinite tower over the alphabet $\Sigma_{m_A}\cup\Gamma_{m_B}$. Suppose the contrary, and let $k \geq 0$ be the smallest integer such that there is an infinite tower over $\Sigma_k$. Since $\eps,b$ is the highest tower over $\Sigma_0$, we have $k\ge 1$. Since every word of $L(\A_{m_A,m_B,\vek d})\cap \Sigma_k^*$ contains at most $d_k$ occurrences of the letter $a_k$, we can consider, without loss of generality, an infinite tower $(w_i)_{i=1}^{\infty}$ in which every $w_i$ contains the same (nonzero) number of occurrences of $a_k$. Let $w_i = w_i'' a_k w_i'$, where $w_i' \in \Sigma_{k-1}^*$. Since all transitions under $a_k$ lead to an initial state, the word $w_i'$ is accepted by $\A_{m_A,m_B,\vek d}$ (by $\B_{m_A,m_B,\vek e}$ resp.) if $w_i$ is accepted by $\A_{m_A,m_B,\vek d}$ (by $\B_{m_A,m_B,\vek e}$ resp.). Then $(w_i')_{i=1}^\infty$ is an infinite tower over $\Sigma_{k-1}$; a contradiction.
    Thus, let $k\ge 1$ be the smallest integer such that there is an infinite tower over $\Sigma_{m_A}\cup \Gamma_k$. Similarly as above, since the number of occurrences of $c_k$ in words from $(\Sigma_{m_A}\cup \Gamma_k)^*$ accepted by the automaton $\B_{m_A,m_B,\vek d}$ is restricted to $e_k$, we can choose an infinite tower $(w_i)_{i=1}^\infty$ such that all words have the same (nonzero) number of occurrences of the letter $c_k$. Let $w_i = w_i'' c_k w_i'$ with $w_i' \in (\Sigma_{m_A} \cup \Gamma_{k-1})^*$. A direct inspection of the automata yields that $w_i'$ is accepted by $\A_{m_A,m_B,\vek d}$ (by $\B_{m_A,m_B,\vek e}$ resp.) if $w_i$ is accepted by $\A_{m_A,m_B,\vek d}$ (by $\B_{m_A,m_B,\vek e}$ resp.). Then $(w_i')_{i=1}^\infty$ is an infinite tower over $\Sigma_{m_A} \cup \Gamma_{k-1}$; a contradiction.
  \end{proof}

  The following corollaries, improving the previous results, are straightforward.

  \begin{cor}\label{cor:A1}
    For any integers $n_1,n_2\ge 2$, there exist two NFAs with $n_1$ and $n_2$ states over an alphabet of cardinality $n_1+n_2-2$ having a tower of height $2^{n_1+n_2-2}-2^{n_2-2}+1$ and no infinite tower.
  \end{cor}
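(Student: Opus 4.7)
The plan is to instantiate Theorem~\ref{thm:2exp:improved} with carefully chosen parameters, so no new construction is needed. Specifically, I would set $m_A = n_1 - 1$, $m_B = n_2 - 2$, and take $d_i = 1$ for all $1 \le i \le m_A$ and $e_j = 1$ for all $1 \le j \le m_B$. Under these choices, the first automaton provided by Theorem~\ref{thm:2exp:improved} has $\sum_{i=1}^{m_A} d_i + 1 = (n_1 - 1) + 1 = n_1$ states, the second has $\sum_{i=1}^{m_B} e_i + 2 = (n_2 - 2) + 2 = n_2$ states, and the alphabet has cardinality $m_A + m_B + 1 = n_1 + n_2 - 2$, matching the three size requirements of the corollary.

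Next, I would verify the height bound by direct substitution. With $d_i = 1$ throughout we get $\prod_{i=1}^{m_A}(d_i + 1) = 2^{n_1 - 1}$, and with $e_j = 1$ throughout we get $\prod_{i=1}^{m_B}(e_i + 1) = 2^{n_2 - 2}$. Plugging these into the formula of Theorem~\ref{thm:2exp:improved} gives tower height
\[
  2^{n_2 - 2}\bigl(2 \cdot 2^{n_1 - 1} - 1\bigr) + 1 = 2^{n_1 + n_2 - 2} - 2^{n_2 - 2} + 1,
\]
which is exactly the claimed bound. The non-existence of an infinite tower is inherited directly from Theorem~\ref{thm:2exp:improved}.

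The only delicate point, and the one I would be most careful about, is the boundary case $n_2 = 2$, which forces $m_B = 0$. Here Theorem~\ref{thm:2exp:improved} is still applicable (its hypothesis only demands $m_B \ge 0$), and the empty product $\prod_{i=1}^{0}(e_i+1)$ equals $1 = 2^{0} = 2^{n_2 - 2}$, so the height formula remains $2^{n_1} - 1 + 1 = 2^{n_1} - 2^{0} + 1$, still matching the corollary. Since $n_1 \ge 2$ also implies $m_A \ge 1$ as required, all hypotheses of Theorem~\ref{thm:2exp:improved} are satisfied for every admissible $(n_1, n_2)$ and the corollary follows.
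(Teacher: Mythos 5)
Your proposal is correct and is exactly the paper's argument: instantiate Theorem~\ref{thm:2exp:improved} with $m_A=n_1-1$, $m_B=n_2-2$, and all $d_i=e_j=1$, then read off the state counts, alphabet size, and tower height. The extra verification of the $n_2=2$ (i.e.\ $m_B=0$) boundary case is a fine sanity check but adds nothing beyond what the theorem's hypotheses already permit.
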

  \begin{proof}
    Let  $m_A=n_1-1$, $m_B=n_2-2$, and $d_i=e_j=1$, for all $1\leq i\leq m_A$ and $1\leq j\leq m_B$. The claim now follows from Theorem~\ref{thm:2exp:improved}.
  \end{proof}

   Choosing $n_2=2$, we have the following result.
  \begin{cor}\label{cor:A2}
    For every integer $n\ge 2$, there exist an NFA with $n$ states and a DFA with two states over an alphabet of cardinality $n$ having a tower of height $2^n$ and no infinite tower.
  \end{cor}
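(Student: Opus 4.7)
The plan is to specialize Corollary~\ref{cor:A1} to the parameters $n_1=n$ and $n_2=2$, compute the resulting bounds, and verify that the two-state NFA produced by the construction can be replaced by an equivalent DFA on the same two states. Substituting $n_1=n$, $n_2=2$ yields an alphabet of size $n_1+n_2-2=n$ and a tower of height $2^{n_1+n_2-2}-2^{n_2-2}+1 = 2^n-1+1 = 2^n$ with no infinite tower, which is precisely what is claimed.

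Unpacking this instance in the proof of Theorem~\ref{thm:2exp:improved} corresponds to the choices $m_A = n-1$, $m_B = 0$, and $d_i = 1$ for every $i$. With $m_B=0$, the automaton $\B_{m_A,0,\vek e}$ collapses to the two states $(0,0)$ and $(0,1)$: the state $(0,0)$ is the unique initial state and carries self-loops under every letter of $\Sigma_{m_A}$ together with a $b$-transition to the unique accepting state $(0,1)$, which itself has no outgoing transitions. Its language is therefore $\Sigma_{m_A}^*\, b$, the set of words ending with $b$.

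The height of towers between the languages $L(\A_{m_A,0,\vek d})$ and $L(\B_{m_A,0,\vek e})$ depends only on these two languages, so any equivalent automaton can be substituted for $\B_{m_A,0,\vek e}$ without altering the tower. Thus the only remaining step is to realise $\Sigma_{m_A}^*\, b$ by a deterministic automaton on two states: take $(0,0)$ as the non-accepting initial state and $(0,1)$ as the sole accepting state, send $b$ from $(0,0)$ to $(0,1)$, loop on $\Sigma_{m_A}\setminus\{b\}$ at $(0,0)$, loop on $b$ at $(0,1)$, and route every letter of $\Sigma_{m_A}\setminus\{b\}$ from $(0,1)$ back to $(0,0)$. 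Replacing $\B_{m_A,0,\vek e}$ by this DFA yields the required NFA/DFA pair of size $n$ and $2$ respectively, still over an alphabet of cardinality $n$, still with a tower of height $2^n$, and still without an infinite tower. The argument is essentially bookkeeping; the only point that is not immediate from Corollary~\ref{cor:A1} is this small observation that two states suffice for a deterministic recogniser of $\Sigma_{m_A}^*\, b$.
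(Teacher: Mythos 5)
Your proposal is correct and follows essentially the same route as the paper: it reduces to the instance $m_A=n-1$, $m_B=0$, $d_i=1$ of Theorem~\ref{thm:2exp:improved} (equivalently, Corollary~\ref{cor:A1} with $n_1=n$, $n_2=2$), and then observes that the two-state automaton $\B_{m_A,0,\vek e}$, whose language is $\Sigma_{m_A}^*b$, can be replaced by an equivalent two-state DFA, which is exactly the automaton $\B_n$ of Figure~\ref{fig4a} invoked in the paper's proof.
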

  \begin{proof}
    Let $m\ge 1$ and set $m_A=m$, $m_B=0$, and $d_i=1$, for all $i$. By Theorem~\ref{thm:2exp:improved}, there are two NFAs with $n=m+1$ and $2$ states, respectively, over an alphabet of cardinality $m+1$ having a tower of height $2^{m+1}$ and no infinite tower. Notice that the automaton $\B_{m_A,m_B,\vek e}$ for $m_B=0$ has two states and its deterministic counterpart as well (cf. the automaton $\B_n$ in Figure~\ref{fig4a}). This completes the proof.
  \end{proof}

\section{Lower bounds on the height of towers for DFAs}
  Exponential lower bounds presented above are based on NFAs. It is an interesting question whether they can also be achieved for DFAs. We answer this question in this section.

  \subsection{Lower bounds on the height of towers for DFAs over a growing alphabet} 
  \begin{thm}\label{thm:expdfa}
    For every $n\ge 0$, there exist two DFAs with at most $n+1$ states over an alphabet of cardinality $\frac{n(n+1)}{2}+1$ having a tower of height $2^{n}$ and no infinite tower.
  \end{thm}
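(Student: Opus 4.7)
The plan is to obtain the two DFAs essentially by determinizing the NFA $\A'$ from Corollary~\ref{cor:A2} (the $m_B=0$ instance of Theorem~\ref{thm:2exp:improved}). That NFA has $n$ states, all of which are initial, with one accepting state $(0,0)$, while its companion $\B'$ is already a $2$-state DFA. The case $n=0$ is trivial, so assume $n\geq 1$; $\B'$ can be retained directly (with $2\le n+1$ states), extended on the new alphabet by sending every non-$b$ letter back to its initial state, so that $L(\B')=\Sigma^*b$.

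The nondeterminism of $\A'$ has two sources: the $n$ initial states, and the fact that each letter $a_i$ can move from $(i,0)$ to any of the $i$ states $(j,0)$ with $j<i$. I would remove this by enlarging the alphabet. Take $\Sigma=\{b\}\cup\{\ell_k:0\le k<n\}\cup\{a_{i,j}:0\le j<i<n\}$, of cardinality $1+n+\binom{n}{2}=\frac{n(n+1)}{2}+1$. The $\ell_k$ are used so that a fresh initial state $q_*$ can reach the original initial state $(k,0)$ in one step via $\delta_\A(q_*,\ell_k)=(k,0)$, and each $a_{i,j}$ implements exactly one descending transition $(i,0)\xrightarrow{a_{i,j}}(j,0)$. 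Retaining the self-loops on $b$ and on $a_{i,j}$ with $i$ below the current level, and making $\{q_*,(0,0)\}$ accepting, yields a DFA $\A$ on $n+1$ states that simulates $\A'$ up to reading one initial $\ell_k$.

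To exhibit the tower, transform the tower $(p_i)_{i=1}^{2^n}$ of Corollary~\ref{cor:A2}: inside each word, replace each occurrence of $a_i$ by the particular $a_{i,j}$ dictated by a fixed deterministic accepting path in $\A$, and prefix each word by an appropriate $\ell_k$ selecting the correct initial branch. To preserve the subsequence ordering of the tower (which is delicate because different prefixes $p_i$ would naïvely require different $\ell$-prefixes, and $\ell_k\not\preq\ell_{k'}$ when $k\ne k'$), I would either let the tower word gradually build up the full increasing sequence $\ell_0\ell_1\cdots\ell_k$ (adding self-loops at $q_*$ on all $\ell$-letters so that only the last one counts) or, more cleanly, modify the accepting-path choice in $\A$ so that a single letter $\ell_{n-1}$ at the start suffices for every non-empty tower word, with the empty word being handled by $q_*$ already being accepting.

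The no-infinite-tower argument then follows the template of Theorem~\ref{thm:2exp:improved}: every word accepted by $\A$ contains at most one descending transition labelled $a_{n-1,j}$, and more generally a bounded number of ``top-level'' letters. Consequently, an infinite tower would, after finitely many steps, be confined to a strict sub-alphabet, and an induction on the alphabet level would contradict the base case where only $\{b\}$ (together with $\B'$) is available and only a finite tower exists. The main obstacle is the coordination of the $\ell$-letters along the tower: ensuring that the $\ell$-prefix respects the subsequence relation between consecutive tower words is precisely what forces the alphabet to blow up by $n$ letters on top of the $\binom{n}{2}$ letters needed to resolve the $a_i$-nondeterminism, accounting exactly for the alphabet size $\frac{n(n+1)}{2}+1$.
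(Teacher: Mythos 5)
The overall strategy—resolve the nondeterminism of the NFA from Theorem~\ref{thm:2exp:improved} (with $m_B=0$) by giving each nondeterministic transition its own letter while keeping the two-state automaton for $\Sigma^*b$—is indeed the paper's strategy, and your alphabet count matches. However, there is a genuine gap exactly at the point you yourself call delicate, and neither of your proposed fixes works. Fix (b): if every nonempty tower word starts with the single letter $\ell_{n-1}$, the run begins at the top state, where $b$ and all low-index $a_{i,j}$ are self-loops; a short tower word such as the relabeling of $ba_1$ then never leaves the top state and is not accepted, and no ``modification of the accepting-path choice'' can help, since the word contains no letter that descends from the top state at all. Fix (a) is not even deterministic as stated: you cannot simultaneously have $\delta(q_*,\ell_k)=(k,0)$ and a self-loop at $q_*$ under $\ell_k$.

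More importantly, the initial-state coordination is not the main obstruction. Your rule ``replace each occurrence of $a_i$ by the particular $a_{i,j}$ dictated by a fixed deterministic accepting path'' is inconsistent across the tower: the same occurrence of $a_k$ is used with different targets (or as a self-loop) in different tower words. Concretely, in the NFA tower the word $ba_1ba_2b$ uses its $a_2$ as a transition $(2,0)\to(0,0)$, while the next word $ba_1ba_2ba_1$ uses that same occurrence as $(2,0)\to(1,0)$; after relabeling these become $a_{2,0}$ and $a_{2,1}$, and the shorter word is no longer a subsequence of the longer one, so the tower collapses. The paper's proof resolves precisely this: it keeps a single initial state at the top (so no $\ell$-letters are needed—the first descending letter $a_{n,j}$ does the ``selection''), replaces single letters by nested descending chains $a_{k,j}a_{k,j-1}\cdots a_{k,0}$, which grow by prepending and hence stay comparable under $\preq$, adds self-loops so the superfluous chain letters are harmless, and defines the tower words $w_n(i)$ by a new recursion tailored to this chain structure; the absence of an infinite tower then follows cleanly by projecting $a_{k,j}\mapsto a_k$ back to Theorem~\ref{thm:2exp:improved}. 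Without an idea of this kind (or a worked-out substitute), your construction does not produce a tower of height $2^n$.
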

  \begin{proof}
    The main idea of the construction is to ``determinize'' the automata of the proof of Theorem~\ref{thm:2exp:improved} with $m_B=0$. For the sake of simplicity, we consider the case $m=n$ and $d_i=1$, for $i=1,2,\dots,n$, and label states $(i,0)$ simply as $i$.
		
		For a given integer $n$, we define a pair of deterministic automata $\A_{n}$ and $\B_{n}$ with $n+1$ and two states, respectively, over the alphabet $\Sigma_n=\{b\}\cup\{a_{i,j} \mid i=1,2,\dots,n;\, j=0,1,\dots,i-1\}$ with a tower of height $2^{n}$ between $L(\A_n)$ and $L(\B_n)$, and with no infinite tower.
    The two-state DFA $\B_n=(\{1,2\},\Sigma_n,\gamma_n,1,\{2\})$ accepts all words over $\Sigma_n$ ending with $b$ and is shown in Figure~\ref{fig4a} (right).
    \begin{figure}
      \centering
      \begin{tikzpicture}[baseline,->,>=stealth,shorten >=1pt,node distance=1.8cm,
        state/.style={circle,minimum size=6mm,very thin,draw=black,initial text=},
        every node/.style={fill=white,font=\small},
        bigloop/.style={shift={(0,0.05)},text width=1.6cm,align=center}]
        \node[state,accepting]    (1) {$0$};
        \node[state]              (4) [left of=1] {$1$};
        \node[state]              (5) [left of=4] {$2$};
        \node[state,initial]              (6) [left of=5] {$3$};
        \path
          (4) edge node {$a_{1,0}$} (1)
          (4) edge[loop above] node[bigloop] {$b$\\$a_{2,0},a_{3,0}$} (4)
          (5) edge[loop above] node[bigloop] {$b,a_{1,0},$\\$a_{3,0},a_{3,1}$} (5)
          (6) edge[loop above] node[bigloop] {$b,a_{1,0},$\\$a_{2,0},a_{2,1}$} (6)
          (5) edge[bend right=55] node {$a_{2,0}$} (1)
          (5) edge node {$a_{2,1}$} (4)
          (6) edge node {$a_{3,2}$} (5)
          (6) edge[bend right=55] node {$a_{3,1}$} (4)
          (6) edge[bend right=60] node {$a_{3,0}$} (1) ;
      \end{tikzpicture}
      \qquad\qquad
      \begin{tikzpicture}[baseline,->,>=stealth,shorten >=1pt,node distance=2.3cm,
        state/.style={circle,minimum size=6mm,very thin,draw=black,initial text=},
        every node/.style={font=\small}]
        \node[state,initial]    (1) {$1$};
        \node[state,accepting]  (2) [right of=1] {$2$};
        \path
          (1) edge[loop above] node {$\Sigma_n\setminus\{b\}$} (1)
          (2) edge[loop above] node {$b$} (2)
          (1) edge[bend left] node[fill=white] {$b$} (2)
          (2) edge[bend left] node[fill=white] {$\Sigma_n\setminus\{b\}$} (1);
      \end{tikzpicture}
      \caption{The DFA $\A_3$ (left) and the two-state DFA $\B_n$ (right), $n\ge 0$.}
      \label{fig4a}
      \label{fig5a}
      \label{figA3var}
    \end{figure}
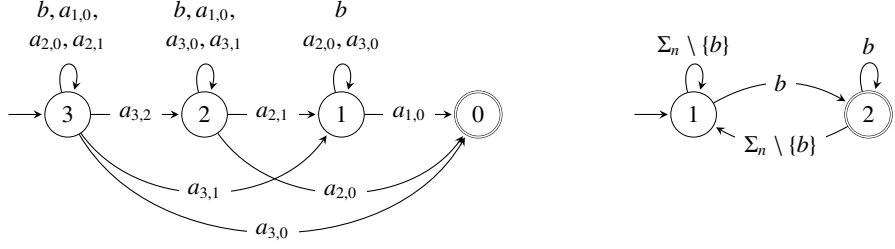
    \begin{figure}[b]
      \centering
      \begin{tikzpicture}[->,>=stealth,shorten >=1pt,node distance=1.8cm,
        state/.style={circle,minimum size=6mm,very thin,draw=black,initial text=},
        every node/.style={fill=white,font=\small},
        bigloop/.style={shift={(0,0.05)},text width=1.6cm,align=center}]
        \node[state,accepting]    (1) {$0$};
        \node[state]              (4) [left of=1] {$1$};
        \node[state]              (5) [left of=4] {$2$};
        \node[state]              (6) [left of=5] {$3$};
        \node[state, initial]     (7) [left of=6] {$4$};
        \path
          (4) edge node {$a_{1,0}$} (1)
          (4) edge[loop above] node[bigloop] {$b$\\$a_{2,0},a_{3,0},a_{4,0}$} (4)
          (5) edge[loop above] node[bigloop] {$b$\\$a_{3,1},a_{4,1},$\\$a_{1,0},a_{3,0},a_{4,0}$} (5)
          (6) edge[loop above] node[bigloop] {$b$\\$a_{4,2}$\\$a_{2,1},a_{4,1},$\\$a_{1,0},a_{2,0},a_{4,0}$} (6)
          (7) edge[loop above] node[bigloop] {$b$\\$a_{3,2}$\\$a_{2,1},a_{3,1},$\\$a_{1,0},a_{2,0},a_{3,0}$} (7)
          (5) edge[bend right=55] node {$a_{2,0}$} (1)
          (5) edge node[fill=white] {\small{$a_{2,1}$}} (4)
          (6) edge node {$a_{3,2}$} (5)
          (6) edge[bend right=55] node {$a_{3,1}$} (4)
          (6) edge[bend right=60] node {$a_{3,0}$} (1)
          (7) edge node {$a_{4,3}$} (6)
          (7) edge[bend right=55] node {$a_{4,2}$} (5)
          (7) edge[bend right=60] node {$a_{4,1}$} (4)
          (7) edge[bend right=65] node {$a_{4,0}$} (1) ;
      \end{tikzpicture}
      \caption{Automaton $\A_4$.}
      \label{figA4var}
    \end{figure}
    
    The ``determinization'' idea of the construction of the DFA $\A_n=(\{0,1,\dots,n\},\Sigma_n,\delta_n,n,\{0\})$ is to use the automaton $\A_{n,(1,\dots,1)}$ from the proof of Theorem~\ref{thm:2exp:improved}, and to eliminate the nondeterminism by relabeling every transition $(i,0) \xrightarrow{a_i} (j,0)$ with a new unique letter $(i,0) \xrightarrow{a_{i,j}} (j,0)$. Then the tower of Theorem~\ref{thm:2exp:improved} is modified by relabeling the corresponding letters. However, to preserve embeddability of the new letters, several self-loops must be added. 
		
		Formally, the transition function $\delta_n$ is defined as follows.
    For every $a_{i,j}\in \Sigma_n$, we define the transition $\delta_n(i,a_{i,j}) = j$.
    For every $k = 1,2,\dots,n$ and $a_{i,j}\in \Sigma_n$ such that $i\neq k$ and $j < k$, we define the self-loop $\delta_n(k, a_{i,j}) = k$.
    Finally, we add the self-loops $\delta_n(k,b) = k$ to every state $k=1,2,\dots,n$, see Figures~\ref{figA3var} and~\ref{figA4var} for an illustration. 
    
    For every $1\le k\le n$ and $0\le j < k$, let $\myalpha k j = a_{k,{j}}a_{k,j-1}\cdots a_{k,0}$, and let the words $u_k$ be defined by $u_0=\varepsilon$ and $u_{k}=u_{k-1}b\, \myalpha{k}{k-1}\, u_{k-1}$. Note that $u_kb$ contains $2^k$ letters $b$.

    We first give an informal description of the tower of height $2^n$ between $\A_n$ and $\B_n$, which relates the construction to Theorem~\ref{thm:2exp:improved}. The tower is the sequence $w_n(0), w_n(1), \dots, w_n(2^n-1)$, where the longest word is defined by 
    \[
      w_n=w_n(2^n-1)=\myalpha{n}{n-1}\,u_{n-1}b\in L(\B_n)\,.
    \]
    The word $w_n(2i)$ is obtained from the word $w_n(2i+1)$ by removing the last letter, which is $b$. The word $w_n(2i-1)$ is obtained from the word $w_n(2i)$ by removing the first letter of some occurrences of $\myalpha k j$ in $w_n(2i)$, see Figure~\ref{figTower} for the case $n=3$. 
    \begin{figure}
      \centering
      \begin{align*}
        w_3(0)&=\underline{a_{3,0}}\\
        w_3(1)&=\underline{a_{3,0}}\,b\\
        w_3(2)&=\underline{a_{3,1}}a_{3,0}\,b\,\underline{a_{1,0}}\\
        w_3(3)&=\underline{a_{3,1}}a_{3,0}\,b\,\underline{a_{1,0}}\,b\\
        w_3(4)&=\underline{a_{3,2}}a_{3,1}a_{3,0}\,b\,a_{1,0}\,b\,\underline{a_{2,0}}\\
        w_3(5)&=\underline{a_{3,2}}a_{3,1}a_{3,0}\,b\,a_{1,0}\,b\,\underline{a_{2,0}}\,b\\
        w_3(6)&=\underline{a_{3,2}}a_{3,1}a_{3,0}\,b\,a_{1,0}\,b\,\underline{a_{2,1}}a_{2,0}\,b\,\underline{a_{1,0}}\\
        w_3(7)&=\underline{a_{3,2}}a_{3,1}a_{3,0}\,b\,a_{1,0}\,b\,\underline{a_{2,1}}a_{2,0}\,b\,\underline{a_{1,0}}\,b
      \end{align*}
      \caption{The tower between $L(\A_3)$ and $L(\B_3)$. We underline transitions between different states in $\A_3$.}
      \label{figTower}
    \end{figure}
  
   We now give a formal definition of $w_n(i)$, which is done recursively. For any $k\geq 1$, we define
    $w_k(0)=\myalpha k 0=a_{k,0}$ and
    $w_k(1)=a_{k,0}\, b$.
    For $2 \leq i\leq 2^k-1$, let  
    \begin{align}\label{eq:wki}
      w_k(i)=\myalpha k \jjj\, u_{\jjj-1}\,b\, w_{\jjj}\left(i-2^{\jjj}\right).    
    \end{align}

    By double induction on $n$ and $i$, we prove that the sequence $\left(w_n(i)\right)_{i=0}^{2^n-1}$ is the required tower. For $n=1$, the claim holds, the tower is $w_1(0)=a_{1,0}$, $w_1(1)=a_{1,0}\,b$. Let $n>1$. The definition implies, by induction, that $w_n(i)$ is in $L(\B_n)$ (that is, it ends with $b$) if and only if $i$ is odd. Consider $w_n(i)$ with even $i \ge 2$. 
    Using~\eqref{eq:wki}, there is a path in $\A_n$ labeled by $w_n(i)$ and it can be decomposed as
    \[
      n \xrightarrow{~a_{n,\jjj}~} \jjj
        \xrightarrow{~{\myalpha{n}{\jjj-1}}\,u_{\jjj-1}\,b~} \jjj
        \xrightarrow{~w_{\jjj}\left(i-2^{\jjj}\right)~} 0\,.
    \]
    For the second segment of the path, note that both the alphabet of ${\myalpha{n}{\jjj-1}}$ and the alphabet $\{b\}\cup \{a_{m,m'} \mid m\leq \jjj-1, m' < m\}$ of $u_{\jjj-1}\, b$ are contained in the alphabet of self-loops of state $\jjj$. The last segment exists by induction, since $\jjj<n$,  $i-2^\jjj\leq 2^\jjj-1$, the automaton $\A_\jjj$ is a restriction of $\A_n$, and $i-2^\jjj$ is even.

   We show that $w_n(i)\preq w_n(i+1)$. This is true for $i=0$, and follows by induction from \eqref{eq:wki} if $\floor{\log (i+1)}=\floor{\log i}$. The latter equality holds unless  $i$ is of the form $2^\ell-1$ for some $\ell > 1$. If $i=2^\ell-1$, then $\ell-1=\floor{\log i}\neq \floor{\log (i+1)}=\ell$ and we have
    \begin{align*}
      \begin{split}
        w_n\left(i\right) & = \myalpha{n}{\ell-1}\,u_{\ell-2}\,b\,w_{\ell-1}\left(2^{\ell-1}-1\right) 
                                = \myalpha{n}{\ell-1}\,u_{\ell-2}\,b\,\myalpha{\ell-1}{\ell-2}\,u_{\ell-2}\,b 
                                = \myalpha{n}{\ell-1}\,u_{\ell-1}\,b,\\
        w_n\left(i+1\right)   & =\myalpha n \ell\,u_{\ell-1}\,b\, a_{\ell,0}\,,
      \end{split}
    \end{align*}
    hence  $w_n(i)\preq w_n(i+1)$ holds.
    
		Finally, observe that if $\left(v_i\right)$ is a tower between $\A_n$ and $\B_n$, then $\left(P(v_i)\right)$ is a tower between $\A_{n,(1,\dots,1)}$ and $\B_n$, where $P\colon a_{k,j} \mapsto a_k$ is the natural projection of $\Sigma_n$ to $\Sigma$. Therefore there is no infinite tower between $\A_n$ and $\B_n$ by Theorem~\ref{thm:2exp:improved}.
  \end{proof}

  The ``determinization'' idea of the previous theorem can be generalized. However, compared to the proof of Theorem~\ref{thm:expdfa}, the general procedure suffers from the increase of states. The reason why we need not increase the number of states in the proof of Theorem~\ref{thm:expdfa} is that the automata we are ``determinizing'' are such that there is an order in which the transitions/states are used/visited, and that the nondeterministic transitions are acyclic.

  \begin{thm}\label{determinisation}
    For every two NFAs $\A$ and $\B$ with at most $n$ states and $m$ input letters, there exist two DFAs $\A'$ and $\B'$ with $O\zav{n^2}$ states and $O\zav{m+n}$ input letters such that there is a tower of height $r$ between $\A$ and $\B$ if and only if there is a tower of height $r$ between $\A'$ and $\B'$. In particular, there is an infinite tower between $\A$ and $\B$ if and only if there is an infinite tower between $\A'$ and $\B'$.
  \end{thm}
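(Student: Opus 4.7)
The plan is to generalize the ``determinization'' trick of Theorem~\ref{thm:expdfa} from the very specific NFA used there to an arbitrary NFA. For an NFA $\A=(Q,\Sigma,\delta,Q_0,F)$ with $|Q|=n$, I build a DFA $\A'$ whose states are pairs $(p,q)\in Q\times Q$ (together with a fresh initial state that handles the choice of an element of $Q_0$), giving $O(n^2)$ states; the alphabet is $\Sigma\cup\{\#_q:q\in Q_\A\cup Q_\B\}$, of size $O(m+n)$, shared between $\A'$ and $\B'$. The intended meaning of state $(p,q)$ is ``the NFA is currently in $p$ and has committed to move next to $q$''. Reading a letter $a\in\Sigma$ from $(p,q)$ moves to a state of the form $(q,q')$ provided $q\in\delta(p,a)$, and reading $\#_{q'}$ updates the commitment. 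Crucially, I add self-loops on all commitment letters at every state in a way that does not invalidate the current commitment, so that arbitrary commitment letters can be inserted without breaking acceptance; this is the general counterpart of the extra self-loops used in Theorem~\ref{thm:expdfa}. The DFA $\B'$ is constructed analogously.

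For the forward direction, given a tower $(w_i)_{i=1}^r$ between $\A$ and $\B$, I lift each $w_i$ to a word $w_i'$ over the extended alphabet by inserting the commitment letter $\#_{q_j}$ immediately before the $j$-th original letter, where $q_j$ is the $j$-th state visited along a chosen accepting path of $w_i$. To ensure monotonicity $w_i'\preq w_{i+1}'$, I process the tower top-down: fix an accepting path $\pi_r$ for $w_r$ in the appropriate automaton, then for each $i<r$ choose $\pi_i$ so that its sequence of commitments is a subsequence of that of $\pi_{i+1}$, using the embedding $w_i\preq w_{i+1}$ to drive the choice. The self-loops on commitment letters guarantee that the resulting lifted words are accepted by $\A'$ and $\B'$. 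For the backward direction, let $P$ be the projection erasing commitment letters. If $(w_i')_{i=1}^r$ is a tower between $\A'$ and $\B'$, then $(P(w_i'))_{i=1}^r$ is a tower of the same height between $\A$ and $\B$, since $P$ preserves the subsequence relation and any $\A'$-computation projects to a valid $\A$-computation. The infinite-tower statement then follows because the correspondence holds at every finite height.

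The main obstacle is monotonicity of the lifting: a naive per-word lifting is not automatically compatible with $\preq$. Making it monotone requires that the accepting paths chosen at successive tower levels be compatible, which forces the top-down construction of $\pi_r,\pi_{r-1},\dots,\pi_1$ and uses the embedding $w_i\preq w_{i+1}$ to select $\pi_i$ from $\pi_{i+1}$. The self-loops on commitment letters are the key mechanism that allows a lifted word $w_i'$ to ``absorb'' extra commitments appearing in $w_{i+1}'$ while still being accepted---exactly as they do in the concrete construction of Theorem~\ref{thm:expdfa}. The technical heart of the proof will be to verify that enough self-loops can be added without creating spurious accepting paths that would produce fake towers in $\A',\B'$ with no preimage in $\A,\B$, thereby ensuring that the backward projection $P$ is a genuine bijection between towers of a given height.
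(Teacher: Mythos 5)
Your overall architecture (gadget states recording a ``committed'' next transition, one new letter per state, self-loops to absorb extra commitment letters, lifting a tower forward and projecting back) is the same as the paper's, but the forward direction as you describe it has a genuine gap. You lift each $w_i$ by inserting only the commitment letters of the path $\pi_i$ chosen for $w_i$, and you hope to force $w_i'\preq w_{i+1}'$ by picking the paths top-down so that the commitment sequence of $\pi_i$ embeds into that of $\pi_{i+1}$. This cannot work: consecutive words of the tower are accepted by \emph{different} automata, so the commitment letters occurring in $w_i'$ are indexed by states of $\A$ while those in $w_{i+1}'$ are indexed by states of $\B$ (disjoint sets). No choice of accepting paths makes one sequence a subsequence of the other, so the lifted words never form a tower under your lifting; the self-loops let a word be \emph{accepted} despite extra commitment letters, but they do not put the missing letters into $w_{i+1}'$, which is what $\preq$ requires. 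The paper's fix is to abandon path compatibility altogether: fix an arbitrary accepting path for every level, decompose $w_i=x_{i,1}\cdots x_{i,n}$ along the embeddings (with $n=|w_r|$ and $x_{i,j}\preq x_{i+1,j}$), and at each position $j$ prepend to the letter the commitment letters of \emph{all} levels $i,i-1,\dots,1$, newest first, i.e.\ $\alpha_{i,j}=p_{i,j}p_{i-1,j}\cdots p_{1,j}x_{i,j}$. Then $\alpha_{i,j}$ is literally a suffix of $\alpha_{i+1,j}$, so monotonicity is automatic, and acceptance holds because the current level's commitment letter is read first (moving into the gadget state) and the older ones, from both automata, are swallowed by the self-loops placed only on the gadget states.

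Two smaller points. First, your transition rule ``reading $a$ from $(p,q)$ moves to a state of the form $(q,q')$'' is not deterministic as stated, since $q'$ is unspecified; and ``self-loops on all commitment letters at every state'' clashes with commitment-updating transitions on those same letters. The paper resolves both by taking $Q\cup\{\sigma_{s,t}\}$ rather than $Q\times Q$: original states use the new letters to enter gadget states ($s\xrightarrow{y_t}\sigma_{s,t}$, $\sigma_{s,t}\xrightarrow{a}t$ when $s\xrightarrow{a}t$ in the NFA), and only the gadget states carry self-loops on all new letters. Second, the backward direction does not need the projection to be a bijection on towers; it suffices that the projection erasing new letters preserves $\preq$ and maps accepted words to accepted words, which is immediate from the structure of accepting paths in $\A'$ and $\B'$.
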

  \begin{proof}
    Let $\A$ and $\B$ be two NFAs with at most $n$ states over an alphabet $\Sigma$ of cardinality $m$. Without loss of generality, we may assume that the automata each have a single initial state. Let $Q_A$ and $Q_B$ denote their respective sets of states. We modify the automata $\A$ and $\B$ to obtain the DFAs $\A'$ and $\B'$ as follows. Let $Q_{A'} = Q_A \cup \{ \sigma_{s,t} \mid s, t \in Q_A \}$ and $Q_{B'} = Q_B \cup \{ \sigma_{s,t} \mid s,t \in Q_B \}$, where $\sigma_{s,t}$ are new states. We introduce a new letter $y_{t}$ for every state $t \in Q_A \cup Q_B$. It results in $O\zav{n^2}$ states and $O\zav{m+n}$ letters. The transition function is defined as follows. In both automata, each transition $s \xrightarrow{a} t$ is replaced with two transitions $s \xrightarrow{y_{t}} \sigma_{s,t}$ and $\sigma_{s,t} \xrightarrow{a} t$. Moreover, self-loops in all new states are added over all new letters. Note that all transitions are deterministic in $\A'$ and $\B'$.
    
    We now prove that if there is a tower of height $r$ between $\A$ and $\B$, then there is a tower of height $r$ between $\A'$ and $\B'$.
    Let $\left(w_i\right)_{i=1}^{r}$ be a tower between $\A$ and $\B$. Let
    \[
      w_i=x_{i,1}x_{i,2}\cdots x_{i,n}\,,
    \]
    where $n=|w_r|$ and $x_{i,j}$ is either a letter or the empty word such that $x_{i,j} \preq x_{i+1,j}$, for each $i=1,2,\dots,r-1$ and $j=1,2,\dots,n$.
    For every $w_i$, we fix an accepting path $\pi_i$ in the corresponding automaton.
    Let $p_{i,j}$ be the letter $y_{t}$ where $s \xrightarrow{} t$ is the transition corresponding to $x_{i,j}$ in $\pi_i$ if $x_{i,j}$ is a letter, 
    and let $p_{i,j}$ be empty if $x_{i,j}$ is empty. 
    We define 
    \[
      w_i'=\myalpha{i}{1}\myalpha{i}{2}\cdots \myalpha{i}{n}\,,
    \]
    where $\myalpha{i}{j} = p_{i,j} p_{i-1,j} \cdots p_{1,j} x_{i,j}$ if $x_{i,j} \neq \varepsilon$, and $\myalpha{i}{j} = x_{i,j} = \varepsilon$ otherwise. 
    It is straightforward to verify that $\left(w_i'\right)_{i=1}^r$ is a tower of height $r$ between $\A'$ and $\B'$.
    
    Let now $\left(w_i'\right)_{i=1}^r$ be a tower between $\A'$ and $\B'$. We show that $\left(P\left(w_i'\right)\right)_{i=1}^r$ is a tower between $\A$ and $\B$, where $P$ is a projection erasing all new letters. Obviously, we have $P\left(w_i'\right) \preq P\left(w_{i+1}'\right)$.
		We now show that if a word $w'$ is accepted by $\A'$, then $P\left(w'\right)$ is accepted by $\A$.  
    Let $\pi'$ be the path accepting $w'$, and let $\tau_1' , \tau_2' , \ldots, \tau_k'$ denote the sequence of all transitions of $\pi'$ labeled with letters from $\Sigma$ in the order they appear in $\pi'$. 
    By construction, $\tau_i'$ is of the form $\sigma_{s_{i-1},s_{i}} \xrightarrow{a_i} s_{i}$ for some states $s_i \in Q_A$, $i=0,2,\dots,k$, with $s_0$ being initial and $s_k$ being accepting. Moreover, for $i<k$, the transition $\tau'_i$ is immediately followed in $\pi'$ by $s_i \xrightarrow{y_{s_{i+1}}} \sigma_{s_i,s_{i+1}}$.
    Let $\tau_i$ be $s_{i-1} \xrightarrow{a_i} s_i$. 
		It is straightforward to verify that $\tau_1, \tau_2, \dots, \tau_k$ is an accepting path of $p(w')$ in $\A$. Analogously for $\B'$ and $\B$. As the existence of towers of arbitrary height is equivalent to the existence of an infinite tower, this concludes the proof.
  \end{proof}

  A similar construction yields the following variant of the previous theorem.
  \begin{thm}\label{determinisation2}
    For every two NFAs $\A$ and $\B$ with at most $n$ states and $m$ input letters, there exist two DFAs $\A'$ and $\B'$ with $O(mn)$ states and $O(m n)$ input letters such that there is a tower of height $r$ between $\A$ and $\B$ if and only if there is a tower of height $r$ between $\A'$ and $\B'$. In particular, there is an infinite tower between $\A$ and $\B$ if and only if there is an infinite tower between $\A'$ and $\B'$.
  \end{thm}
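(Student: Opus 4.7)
I would follow the blueprint of the proof of Theorem~\ref{determinisation}, but use a ``thinner'' auxiliary layer so that the number of intermediate states drops from $O(n^2)$ to $O(mn)$ at the cost of a larger alphabet. My plan is to take the states of $\A'$ to be $Q_A\cup\{\sigma_{s,a}\mid s\in Q_A,\ a\in\Sigma\}$, the common alphabet to be $\Sigma'=\Sigma\cup\{y_{a,t}\mid a\in\Sigma,\ t\in Q_A\cup Q_B\}$, to replace every edge $s\xrightarrow{a}t$ of $\A$ by the two deterministic edges $s\xrightarrow{a}\sigma_{s,a}$ and $\sigma_{s,a}\xrightarrow{y_{a,t}}t$, and to add a self-loop under every new letter at every original state; $\B'$ is defined analogously from $\B$. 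Determinism and the $O(mn)$ bounds on the number of states and letters are then immediate.

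For the forward implication, given a tower $(w_i)_{i=1}^r$ between $\A$ and $\B$, I would fix accepting paths $\pi_i$ and the factorization $w_i=x_{i,1}\cdots x_{i,n}$ with $n=|w_r|$ and $x_{i,j}\preq x_{i+1,j}$ used in the proof of Theorem~\ref{determinisation}. For each nonempty $x_{i,j}$, let $s_{i,j}$ denote the state reached after the $j$-th transition of $\pi_i$ and put $p_{i,j}=y_{x_{i,j},s_{i,j}}$; otherwise set $p_{i,j}=\varepsilon$. Then I would define
\[
  \myalpha{i}{j}=x_{i,j}\,p_{i,j}\,p_{i-1,j}\cdots p_{1,j}\qquad\text{and}\qquad w_i'=\myalpha{i}{1}\myalpha{i}{2}\cdots \myalpha{i}{n}.
\]
The inequality $\myalpha{i}{j}\preq \myalpha{i+1}{j}$ holds because $x_{i,j}\preq x_{i+1,j}$ and the $p$-suffix of $\myalpha{i+1}{j}$ differs from that of $\myalpha{i}{j}$ only by the prepended new letter $p_{i+1,j}$, so $w_i'\preq w_{i+1}'$. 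Acceptance of $w_i'$ is then verified by tracing the run: reading $x_{i,j}$ drives the computation from $s_{i,j-1}$ to $\sigma_{s_{i,j-1},x_{i,j}}$, reading $p_{i,j}$ fires the unique $y$-transition to $s_{i,j}$, and each remaining letter $p_{i',j}$ with $i'<i$ is read at the original state $s_{i,j}$ and absorbed by the corresponding self-loop.

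For the backward implication, I would use the projection $P$ erasing all new letters. Any accepting run of $w_i'$ in $\A'$ decomposes into blocks $s\xrightarrow{a}\sigma_{s,a}\xrightarrow{y_{a,t}}t$, each matching an edge $s\xrightarrow{a}t$ of $\A$, interleaved with self-loops on new letters at original states that vanish under $P$; hence $P(w_i')$ is accepted by $\A$ (resp.\ $\B$) whenever $w_i'$ is, while $P$ preserves $\preq$, so $(P(w_i'))_{i=1}^r$ is a tower between $\A$ and $\B$ of the same height. The equivalence for infinite towers then follows from the equivalence between unbounded-height and infinite towers. The main obstacle I anticipate, in contrast with Theorem~\ref{determinisation}, is that the original letter $a$ must now be read \emph{before} the destination-selecting letter $y_{a,t}$ (to preserve determinism while allowing $|\delta_A(s,a)|>1$), which reverses the ordering of the $p$-prefixes in $\myalpha{i}{j}$ and moves the padding self-loops from the intermediates to the original states; getting this ordering right is the key design choice that allows the tower heights to be preserved exactly.
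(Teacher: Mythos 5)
Your proposal is correct and is essentially the paper's own approach: the paper likewise replaces each transition $s\xrightarrow{a}t$ by two deterministic steps through a fresh intermediate state with the nondeterministic choice of target encoded in a new letter, adds self-loops over the new letters, and transfers towers back and forth by the same padding-and-projection argument as in Theorem~\ref{determinisation}. The only difference is the orientation of the gadget---the paper uses $s\xrightarrow{a_t}\sigma_{a,t}\xrightarrow{a}t$ with intermediate states indexed by (letter, target) and self-loops at the new states, which is also deterministic even when $|\delta(s,a)|>1$, so your closing claim that the original letter \emph{must} be read before the destination-selecting letter is not actually forced; your mirrored variant is simply an equally valid choice.
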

  \begin{proof}
    Let $Q_{A'} = Q_A \cup \{ \sigma_{a,t} \mid a \in \Sigma,\, t \in Q_A \}$ and $Q_{B'} = Q_B \cup \{ \sigma_{a,t} \mid a \in \Sigma,\, t \in Q_B \}$, where $\sigma_{a,t}$ are new states. The alphabet of $\A'$ and $\B'$ is $\Sigma \cup \{ a_t \mid a \in \Sigma, t \in Q_A \cup Q_B  \}$. We have $O(mn)$ states and letters. Each transition $s \xrightarrow{a} t$, in both automata, is replaced with two transitions $s \xrightarrow{a_{t}} \sigma_{a,t}$ and $\sigma_{a,t} \xrightarrow{a} t$. Self-loops in all new states are added over all new letters.
    The rest of the proof is analogous to the proof of Theorem~\ref{determinisation}.
  \end{proof}

\subsection{Lower bounds on the height of towers for DFAs over a fixed alphabet}
  The size of the alphabet in the previous constructions depends on the number of states, hence these constructions cannot be used to answer the questions whether the upper bound of Theorem~\ref{thm01} is tight for DFAs over a fixed alphabet. We answer this question now.

  \begin{thm}\label{thm:lower_bound:dfa}
    For all integers $m$, $d_1,d_2,\dots, d_m \ge 2$ and every odd positive integer $e$, there exist two DFAs with $2 d_1 + \sum_{i=2}^m (d_i+1)$ and $e + 1$ states over an alphabet of cardinality $m + 1$ having a tower of height $\left((e+1)d_{1}+2\right)\prod_{i=2}^m d_i$ and no infinite tower.
  \end{thm}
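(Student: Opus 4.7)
The plan is to generalise the construction of Theorem~\ref{thm02B} in two directions: determinise the first automaton (at the cost of a factor of two in its size) and iteratively add the letters $a_2,\dots,a_m$ in the style of Theorem~\ref{thm:lower_bound}, but with one fewer copy per layer, so that each new letter contributes $d_k+1$ states and a multiplicative factor of $d_k$ (rather than $d_k+1$) in the tower height.

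For the base case $m=1$, observe that $L(\A_{d_1})$ from Theorem~\ref{thm02B} is exactly the set of words over $\{a_1,b\}$ containing at most $d_1-1$ occurrences of $a_1$ and ending with an even (possibly zero) number of $b$'s. A minimal DFA recognising this language has state set $\{(i,p):0\le i\le d_1-1,\ p\in\{0,1\}\}$ with initial state $(0,0)$, accepting states $\{(i,0)\}$, and transitions $\delta((i,p),b)=(i,1-p)$ and $\delta((i,p),a_1)=(i+1,0)$ for $i<d_1-1$ (the remaining $a_1$-transitions are undefined, that is, they lead to an uncounted sink). This uses exactly $2d_1$ states. The DFA $\B_e$ of Theorem~\ref{thm02B} is already deterministic with $e+1$ states, and its tower and nonexistence argument apply verbatim, establishing the claim for $m=1$.

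For $m\ge 2$, I would proceed by induction on $m$, extending the pair of DFAs for parameters $(d_1,\dots,d_{m-1},e)$ by a chain of $d_m+1$ new states in $\A$ handling the letter $a_m$, together with self-loops on $\{b,a_1,\dots,a_{m-1}\}$ along the chain and a single $a_m$-transition from the innermost accepting state back into this chain; this design guarantees that every accepted word uses at most $d_m-1$ occurrences of $a_m$ and that between consecutive $a_m$'s the inner DFA acts normally. The DFA $\B$ is extended simply by letting $a_m$ act as $a_1$ does, adding no new states. The tower is produced from the words $u_1=(b^ea_1)^{d_1-1}b^{e+1}$ and $u_k=(u_{k-1}a_k)^{d_k-1}u_{k-1}$, taking all prefixes of $u_m$ and adjoining an extension analogous to that of Theorem~\ref{thm02B} at each block boundary; a direct computation shows that the resulting height is $((e+1)d_1+2)\prod_{i=2}^m d_i$.

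The nonexistence of an infinite tower is established by the same induction, mirroring the argument of Theorem~\ref{thm:lower_bound}: since every word accepted by $\A$ contains at most $d_m-1$ occurrences of $a_m$, any putative infinite tower can be thinned to one in which all words have the same (nonzero) number of $a_m$'s; truncating each word before its last $a_m$ then produces an infinite tower over the smaller alphabet $\{b,a_1,\dots,a_{m-1}\}$, contradicting the induction hypothesis. The main obstacle I anticipate is the precise design of the transitions of $\A$ so that the prefixes of $u_m$ alternate correctly between $L(\A)$ and $L(\B)$ according to the parity of the trailing $b$-run, while fitting inside the prescribed budget of $2d_1+\sum_{i=2}^m(d_i+1)$ states.
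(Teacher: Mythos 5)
Your base case ($m=1$) is fine and coincides with the paper's (a $2d_1$-state deterministic version of $\A_{d_1}$ from Theorem~\ref{thm02B} together with $\B_e$), but the induction step — which is where the entire content of the theorem lies, and which you yourself flag as "the main obstacle I anticipate" — has a genuine gap, in fact two. First, the tower you propose does not alternate. With $u_1=(b^ea_1)^{d_1-1}b^{e+1}$ and $u_k=(u_{k-1}a_k)^{d_k-1}u_{k-1}$, each block $u_{k-1}$ ends in $b^{e+1}$, so the prefix $u_{k-1}$ lies in $L(\A)$ and the next prefix $u_{k-1}a_k$ would have to lie in $L(\B)$; but the $b$-path of $\B_e$ has length $e$, so the run of $\B_e$ on $u_{k-1}$ already dies at the $(e+1)$-st trailing $b$, and $u_{k-1}a_k\notin L(\B)$ no matter how $a_k$ acts there (you let it act like $a_1$). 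So "all prefixes of $u_m$" is simply not a tower, and repairing the block boundaries is not a routine patch — it is the crux. Second, your sketched automaton is inconsistent with your own non-existence argument: an $a_m$-transition from the innermost accepting state back into the new chain creates a cycle through the whole automaton that contains $a_m$, so accepted words contain unboundedly many occurrences of $a_m$, contradicting the claimed bound $d_m-1$ on which your thinning-and-truncation argument rests; moreover, with a single initial state you cannot reuse the Theorem~\ref{thm:lower_bound} argument verbatim, since there the suffix after the last $a_k$ is accepted precisely because every $a_k$-transition of the NFA ends in an initial state.

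For contrast, the paper's induction step abandons prefix towers at the boundaries. The new layer consists of a chain of $d_m$ states plus one auxiliary state $m'$, linked to the top chain state by an $a_1$-toggle and carrying an $a_m$ self-loop; given an inner tower $(w_{m-1,k})_{k=1}^{\ell}$ whose last element is accepted by $\B_e$ in state $e$, the new tower is $w_{m,j\ell+k}=u_{m,j}\,w_{m-1,k}$ with $u_{m,j}=(a_1a_ma_1)^{j}a_m^{d_m-j}(w_{m-1,\ell}a_m)^{j}$, so that $u_{m,j}$ drives $\A$ from its initial state to the initial state of the inner automaton while looping $\B_e$ at state $0$, and consecutive blocks are bridged by a subsequence (not prefix) step $w_{m,j\ell}\preq w_{m,j\ell+1}$. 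This construction inevitably accepts words with arbitrarily many $a_m$'s, so the paper's proof that no infinite tower exists does not count occurrences of $a_m$ at all; it uses instead that $c\,a_m^{d_m}$ with $c\notin\{a_1,a_m\}$ embeds into no word of $L(\A_{m,\vek d})$, and a case split (all tower words in $\{a_1,a_m\}^*\Sigma_{m-1}^*$, or not) reducing to an infinite tower for $m-1$. Until you exhibit concrete transitions and a boundary repair that achieve the stated height within the budget $2d_1+\sum_{i=2}^m(d_i+1)$, and a non-existence argument not relying on a false occurrence bound, the induction step remains unproven.
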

  \begin{proof}
    For every $m\ge 1$, we define the alphabet $\Sigma_m = \{b,a_1,a_2,\ldots,a_m\}$. We set $\vek{d}=(d_1,\dots,d_m)$, and define two DFAs $\A_{m,\vek{d}}$ and $\B_{e}$ over $\Sigma_m$ as follows.
    The set of states of the DFA $\A_{m,\vek d}$ is $Q_{m,\vek d} = \{ (k,j) \mid k=1,2,\dots,m; \allowbreak \ j=0,1,2,\dots,d_k-1\} \cup \{k' \mid k=2,\dots,m\} \cup \{ (1,i') \mid i=0,1,\ldots,d_1-1 \}$. State $(m,d_{m}-1)$ is initial, and states $(1,j)$, $j=0,\ldots,d_1-1$, are accepting. The transition function of $\A_{m,\vek d}$ consists of

    \begin{itemize}
      \itemsep0pt
      \item  an $a_i$-transition from each $(i,j)$ to $(i,j-1)$, and 
        from each $(i,0)$ to states $(i-1,d_{i-1}-1)$, for $i> 1$ and $j>0$,
      \item an $a_1$-transition from each $(1,j')$ to $(1,j-1)$,  for $1\leq j \leq d_1-1$,
			\item a transition from $(i,d_i-1)$ to $i'$ and back under $a_1$, for $i\ge 2$,
      \item  a $b$-transition from $(1,j)$ to $(1,j')$ and back, for $0\le j\le i$,
      \item  self-loops in $(i,j)$, for $i\ge 2$ and $0\le j < d_i-1$, under $\Sigma_{i-1}$, and
      \item self-loops under $a_i$ in states $i'$, for $i\ge 2$.
    \end{itemize}
    See Figure~\ref{DFA_A} for an example.
  
    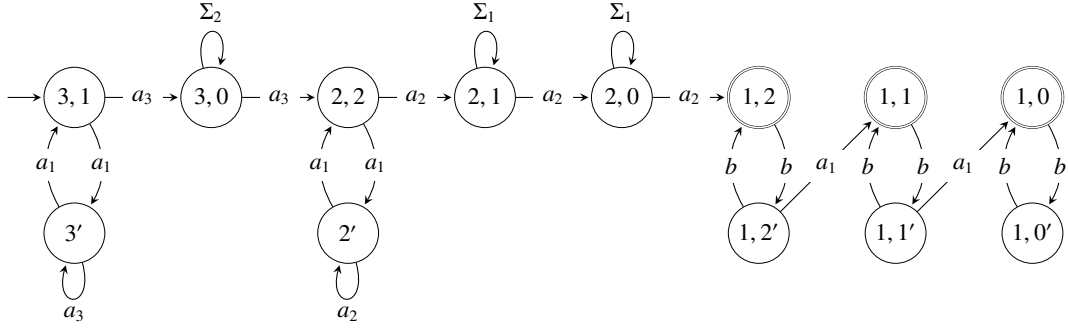
\begin{figure}[h]
      \centering
      \begin{tikzpicture}[baseline,->,>=stealth,shorten >=1pt,node distance=1.8cm,
        state/.style={circle,minimum size=8mm,very thin,draw=black,inner sep=2pt,initial text=},
        every node/.style={font=\small},
        bigloop/.style={shift={(0,0.01)},text width=1.6cm,align=center}]
        \node[state,accepting]  (00)  {$1,0$};
        \node[state]            (00') [below of=00] {$1,0'$};
        \node[state,accepting]  (10)  [left of=00]  {$1,1$};
        \node[state]            (10') [below of=10] {$1,1'$};
        \node[state,accepting]  (11)  [left of=10]  {$1,2$};
        \node[state]            (11') [below of=11] {$1,2'$};
        \node[state]            (20)  [left of=11]  {$2,0$};
        \node[state]            (21)  [left of=20]  {$2,1$};
        \node[state]            (22)  [left of=21]  {$2,2$};
        \node[state]            (22') [below of=22] {$2'$};
        \node[state]            (30)  [left of=22]  {$3,0$};
        \node[state,initial]    (31)  [left of=30]  {$3,1$};
        \node[state]            (31') [below of=31] {$3'$};
        
        \foreach \from/\to in {00/00',00'/00,10/10',10'/10,11/11',11'/11}
          \path (\from) edge[bend left] node[fill=white] {$b$} (\to);
        
        \foreach \from/\to in {22/22',22'/22,31/31',31'/31}
          \path (\from) edge[bend left] node[fill=white] {$a_1$} (\to);
        
        \foreach \from/\to/\pis in {31/30/3,30/22/3,22/21/2,21/20/2,20/11/2,11'/10/1,10'/00/1}
          \path (\from) edge node[fill=white] {$a_\pis$} (\to);
        
        \foreach \from/\pis in {30/{$\Sigma_2$},21/{$\Sigma_1$},20/{$\Sigma_1$}}
          \path (\from) edge[loop above] node[bigloop] {\pis} (\from);
          
        \foreach \from/\pis in {22'/2,31'/3}
          \path (\from) edge[loop below] node[bigloop] {$a_\pis$} (\from);
      \end{tikzpicture}
      \caption{Automata $\A_{3,(2,3,2)}$ of Theorem~\ref{thm:lower_bound:dfa}}
      \label{DFA_A}
    \end{figure}

    The DFA $\B_e$ has the states $\{0,1,\dots,e\}$. State $0$ is initial and states with odd numbers are accepting. The transitions of $\B_e$ contain 
      $b$-transitions from state $i$ to state $i+1$, for each $0\leq i\leq e-1$,
      a transition under $\{a_1,a_2,\dots,a_m\}$ from state $e$ to state $0$, 
      and a self-loop in state $0$ under $\{a_1,a_2,\dots,a_m\}$, see Figure~\ref{DFA_B} for an illustration.
    \begin{figure}
      \centering
      \begin{tikzpicture}[baseline,->,>=stealth,shorten >=1pt,node distance=1.6cm,
        state/.style={circle,minimum size=1mm,very thin,draw=black,initial text=},
        every node/.style={fill=white,font=\small},
        bigloop/.style={shift={(0,0.01)},text width=1.6cm,align=center}]
        \node[state,initial]    (1) {$0$};
        \node[state,accepting]  (2) [right of=1] {$1$};
        \node[state]            (3) [right of=2] {$2$};
        \node[state, accepting] (4) [right of=3] {$3$};
        \node[state]            (5) [right of=4] {$4$};
        \node[state, accepting] (6) [right of=5] {$5$};
        
        \foreach \from/\to in {1/2,2/3,3/4,4/5,5/6}
          \path (\from) edge node {$b$} (\to);
          
        \path (6) edge[bend right] node {$\Sigma_m\setminus\{b\}$} (1);
        
        \path (1) edge[loop above] node[bigloop] {$\Sigma_m\setminus\{b\}$} (1);
        
       \end{tikzpicture}
      \caption{Automata $\B_{5}$ of Theorem~\ref{thm:lower_bound:dfa}}
      \label{DFA_B}
    \end{figure}
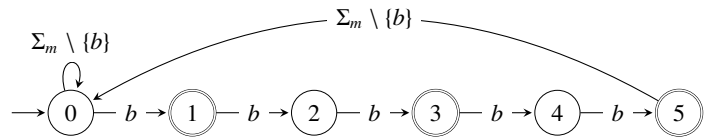

    We now show by induction on $m$ that there is a tower of height $\left(d_1(e+1)+2\right)\prod_{i=2}^{m} d_i$ between $\A_{m,\vek{d}}$ and $\B_e$.
    Note that for $m=1$, the automata $\A_{1,(d_1)}$ and $\B_e$ are (deterministic) variants of automata from Theorem \ref{thm02B}, and  they have the tower of prefixes of the word $u=\left(b^ea_1\right)^{d_1-1}b^e$ with two additional words: $ub$ accepted by $\A_{1,(d_1)}$, and $ua_1b^e$ accepted by $\B_e$ in the state $e$. 
    The tower has the required height $d_1(e+1)+2$.
     (cf. also Theorem \ref{thm:dfas:tight} and Figure \ref{exdfa} below). 

    Let now 
    $
      (w_{m-1,i})_{i=1}^{\ell},
    $
    with $\ell=(d_1(e+1)+2)\prod_{j=2}^{m-1} d_j$, be a tower between the automata $\A_{m-1,(d_1,\ldots,d_{m-1})}$ and $\B_e$, where $w_{m-1,1}$ is accepted by $\A_{m-1,(d_1,\ldots,d_{m-1})}$ and $w_{m-1,\ell}$ is accepted by $\B_e$ in the state $e$.
		Let $u_{m,j}=(a_1a_ma_1)^{j}a_m^{d_k-j}\left(w_{m-1,\ell} a_m\right)^{j}$. We show that $(w_{m,i})_{i=1}^{d_m\ell}$ with
		\[
		w_{m,j\ell+k}=u_{m,j}w_{m-1,k},\quad j=0,1,\dots,d_m-1, \ k=1,2,\dots,\ell,
		\] 
		is a tower between the automata $\A_{m,(d_1,\ldots,d_{m})}$ and $\B_e$, where $w_{m,1}$ is accepted by $\A_{m,(d_1,\ldots,d_{m})}$ and $w_{m,d_m\ell}$ is accepted by $\B_e$ in the state $e$.
		
		Since $w_{m-1,\ell}\in \Sigma_{m-1}^*$, it is straightforward to verify that, for each $j=0,\dots,d_m-1$, the word $u_{m,j}$ is a label, in $\A_{m,(d_1,\ldots,d_{m})}$, for a transition from $(m,d_m-1)$ to $(m-1,d_{m-1}-1)$, namely
		\[
      (m,d_m-1) \xrightarrow{~(a_1 a_m a_1)^j ~} (m,d_m-1) 
                \xrightarrow{~a_m^{d_m-j}~} (m,j-1) 
                \xrightarrow{~(w_{m-1,\ell}a_m)^j~} (m-1,d_{m-1}-1) 
              \,.
    \]
		In $\B_e$, each $u_{m,j}$ is a label for a cycle starting and ending in the state $0$. Namely
		\[
      0 \xrightarrow{~(a_1 a_m a_1)^ja_m^{d_m-j}~} 0 
                \underbrace{\xrightarrow{~w_{m-1,\ell}~} e
								\xrightarrow{~a_m~} 0}_{j-\text{times}}	
              \,.
    \]
		
		This implies, by induction and by the construction of the automata, that $w_{m,i}$ are accepted as required. By induction, we have that $w_{m,j\ell+k}\preq w_{m,j\ell+k+1}$ for each $1\leq k<\ell$, and the definition of $u_{m,j}$ implies that also 
		\[
			w_{m,j\ell}= (a_1a_ma_1)^{j-1}a_m^{d_k-j+1}\left(w_{m-1,\ell} a_m\right)^{j-1}w_{m-1,\ell} 
			\preq (a_1a_ma_1)^{j}a_m^{d_k-j}\left(w_{m-1,\ell} a_m\right)^{j} =w_{m,j\ell+1}.
		\]
		This completes the proof that  $(w_{m,i})_{i=1}^{d_m\ell}$ is a tower with required properties.

    It remains to show that there is no infinite tower. For the sake of contradiction, let $m$ be the smallest number such that there exists an infinite tower $(w_{i})_{i=1}^{\infty}$ between $\A_{m,\vek d}$ and $\B_e$, for some $\vek d$ and $e$. By  Theorem \ref{thm02B}, we know that $m> 1$. 
				Suppose, first, that for all $i$, $w_i\in \{a_1,a_m\}^*w_i'$ where $w_i'\in \Sigma_{m-1}^*$. 
				We may assume that $a_1$ is not the first letter of $w_i'$.
				This implies that after reading the $\{a_1,a_m\}^*$ part, $\A_{m,\vek{d}}$ is in $(m-1,d_{m-1}-1)$ and that $\B_e$ is in $0$. Thus $(w_i')_{i=1}^\infty$ is an infinite tower between $\A_{m-1,(d_1,\dots,d_{m-1})}$ and $\B_e$; a contradiction.
					Let now $t> 1$ be the largest integer such that a word $ca_m^t\preq w_i$ for some $i$, where $c\in \Sigma_m\setminus \{a_1,a_m\}$. It is straightforward to verify that $ca_m^{d_m}$ cannot be embedded into any word from $L(\A_{m,\vek d})$, hence $t< d_m$. 
		Without loss of generality, we can suppose that $ca_m^t \preq w_i$ for all $i$. Let $w_i=w_i''w_i'$ where $w_i''$ is the shortest prefix of $w_i$, such that $ca_m^t\preq w_i''$. Then $w_i'\in \Sigma_{m-1}^*$, and  $(w_i')_{i=1}^\infty$ is again a tower between $\A_{m-1,(d_1,\dots,d_{m-1})}$ and $\B_e$; a contradiction. 
		
\end{proof}

  As a corollary, we have that the upper bound of Theorem~\ref{thm01} is tight for a fixed alphabet even for DFAs.
  \begin{cor}\label{cor16}  
    Let $k \ge 2$ be a constant. Then the maximum height of a tower between two DFAs with at most $n$ states over an alphabet of cardinality $k$ having no infinite tower is in $\Omega\zav{n^k}$.
  \end{cor}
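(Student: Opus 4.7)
The plan is to instantiate Theorem~\ref{thm:lower_bound:dfa} with parameters that fit within $n$ states per automaton while keeping the alphabet of constant size $k$. Since the theorem gives an alphabet of size $m+1$, I would first set $m=k-1$, which fixes the alphabet cardinality at exactly $k$.

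Next I would choose the $d_i$'s and $e$. The state count of the first DFA is $2d_1 + \sum_{i=2}^{m}(d_i+1) = 2d_1 + \sum_{i=2}^{k-1}d_i + (k-2)$, which I want to bound by $n$, and the second DFA has $e+1$ states, so I need $e+1 \le n$ with $e$ odd. Since $k$ is a constant, setting $d_1=d_2=\cdots=d_{k-1}=\left\lfloor (n-k+2)/k \right\rfloor$ satisfies the state bound (as $k\cdot d_1 + (k-2) \le n$), and picking $e$ to be the largest odd integer with $e+1\le n$ yields $e \in \Theta(n)$. For $n$ large enough compared to the fixed constant $k$, every $d_i \ge 2$, so the hypotheses of Theorem~\ref{thm:lower_bound:dfa} are met.

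Substituting these choices into the tower height $\bigl((e+1)d_{1}+2\bigr)\prod_{i=2}^{m} d_i$ from Theorem~\ref{thm:lower_bound:dfa}, I would obtain a lower bound of order
\[
  \Theta(n) \cdot \Theta(n/k) \cdot \prod_{i=2}^{k-1} \Theta(n/k) = \Theta\bigl(n^{k}/k^{k-1}\bigr),
\]
which is $\Omega(n^k)$ since $k$ is a constant. The case $k=2$ (so $m=1$) is handled by the same choice with the empty product interpreted as $1$, giving height $\Theta(n^2)$.

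There is no real obstacle; the only care needed is to check that the chosen integers satisfy the $d_i \ge 2$ and ``$e$ odd'' requirements for all sufficiently large $n$, and that the resulting number of states indeed stays within $n$ for both automata. Both are immediate from the constant-$k$ assumption, so the corollary follows directly from Theorem~\ref{thm:lower_bound:dfa}.
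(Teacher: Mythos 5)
Your proposal is correct and is essentially the paper's own argument: both instantiate Theorem~\ref{thm:lower_bound:dfa} with $m=k-1$, all $d_i\approx n/k$, and $e$ the largest admissible odd integer close to $n$, yielding height $\Theta(n^k/k^{k-1})=\Omega(n^k)$ for constant $k$. The only (immaterial) difference is that the paper tweaks the $d_i$ within $\pm1$ of $\lfloor n/k\rfloor$ to use exactly $n$ states, while you settle for at most $n$, which is all the statement requires.
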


  \begin{proof}
	Let $m=k-1$, and let $\ell=\left\lfloor \frac{n}{k} \right\rfloor$. For sufficiently large $n$, there is an integer $\ell \leq d_1 \leq \ell+1$, integers $\ell-1 \leq d_i \leq \ell$, $i=2,\dots,m$, and an odd integer  $n-2 \leq e\leq n-1$ such that Theorem~\ref{thm:lower_bound:dfa} yields two DFAs with $n$ states over an alphabet of cardinality $k=m+1$ having a tower of height at least $((n-1)\ell+2)(\ell-1)^{k-2} \in \Omega\zav{n^{k}}$.
\end{proof}

\section{Towers of prefixes}\label{TofPref}
  It is remarkable that lower bounds on the height of finite towers for NFAs in this paper were obtained by examples where $w_{i}$ is not just a subsequence of $w_{i+1}$ but even its prefix (sometimes this rule is violated by the last element of the tower). In this section we therefore investigate what can be said about alternating towers of prefixes. A simple example of languages $L_1=a(ba)^*$ and $L_2=b(ab)^*$ shows that the towers of prefixes and towers (of subsequences) may behave differently. Indeed, there is no infinite tower of prefixes between $L_1$ and $L_2$, since every word of $L_1$ begins with $a$ and cannot thus be a prefix of a word of $L_2$, which begins with $b$. But there is an infinite tower, namely, $a, bab, ababa, \ldots$.

  We first describe a pattern on two automata $\A$ and $\B$ that characterizes the existence of an infinite tower of prefixes between them. 
  \begin{figure}[bh]
    \centering
    \begin{tikzpicture}[->,>=stealth,shorten >=1pt,auto,node distance=1.1cm,
    state/.style={circle,minimum size=6mm,very thin,draw=black,initial text=}]
    \node[]       (1) {};
    \node[state]  (2) [left of=1]  {$\sigma$};
    \node[state]  (3) [right of=1] {$\tau$};
    \node[state]  (4) [below of=1]  {$\sigma_2$};
    \node[state]  (5) [above of=1] {$\tau_2$};
    \path
    (2) edge[-stealth,decoration={snake,amplitude=.4mm,segment length=2mm,post length=1.9mm},decorate,bend right] node {$x$} (4)
    (4) edge[-stealth,decoration={snake,amplitude=.4mm,segment length=2mm,post length=1.9mm},decorate,bend right] node {$u_1$} (3)
    (3) edge[-stealth,decoration={snake,amplitude=.4mm,segment length=2mm,post length=1.9mm},decorate,bend right] node {$y$} (5)
    (5) edge[-stealth,decoration={snake,amplitude=.4mm,segment length=2mm,post length=1.9mm},decorate,bend right] node {$u_2$} (2) ;
    \begin{pgfonlayer}{background}
    \filldraw [line width=4mm,black!10] (1 -| 1) ellipse (1.4cm and 1.4cm);
    \end{pgfonlayer}
    \node[] (6) [left of=2]  {};
    \node[] (7) [right of=3] {};
    \node[] (8) [below left of=6]  {};
    \node[state,semicircle,draw,semicircle,rotate=90,scale=.5,fill=black!10,anchor=south] at (8) {};
    \node[state,semicircle,draw,semicircle,rotate=-90,scale=.5,anchor=south] at (8) {};
    \node[] (8) [below left of=6]  {$\sigma_1$};
    \node[] (9) [below right of=7] {};
    \node[state,semicircle,draw,semicircle,rotate=-90,scale=.5,fill=black!10,anchor=south] at (9) {};
    \node[state,semicircle,draw,semicircle,rotate=90,scale=.5,anchor=south] at (9) {};
    \node[] (9) [below right of=7] {$\tau_1$};
    \node[state,initial] (i) [above=of 8] {};
    \path
    (2) edge[-stealth,decoration={snake,amplitude=.4mm,segment length=2mm,post length=1.9mm},decorate] node[above] {$x$} (8)
    (3) edge[-stealth,decoration={snake,amplitude=.4mm,segment length=2mm,post length=1.9mm},decorate] node{$y$} (9)
    (i) edge[-stealth,decoration={snake,amplitude=.4mm,segment length=2mm,post length=1.9mm},decorate] node[above]{$u$} (2) ;
    \end{tikzpicture}
    \caption{The pattern $(\sigma,\sigma_1,\sigma_2,\tau,\tau_1,\tau_2)$}
    \label{fig1}
  \end{figure}
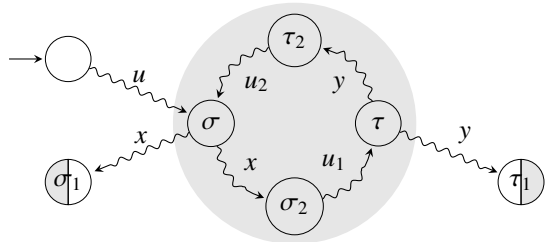
  Let $\A=(Q_A,\Sigma,\delta_A,q_A,F_A)$ and $\B=(Q_B,\Sigma,\delta_B,q_B,F_B)$ be two NFAs. We say that $(\sigma,\sigma_1,\sigma_2,\tau,\tau_1,\tau_2)$ is a \emph{pattern} of the automata $\A$ and $\B$ if $\sigma$, $\sigma_1$, $\sigma_2$, $\tau$, $\tau_1$, $\tau_2$ are states of the product automaton such that 
  \begin{itemize}
    \itemsep0pt
    \item $\sigma_1\in F_A\times Q_B$ and $\tau_1\in Q_A\times F_B$,
    \item $\sigma$ is reachable from the initial state,
    \item states $\sigma_1$ and $\sigma_2$ are reachable from state $\sigma$ under a common word,
    \item states $\tau_1$ and $\tau_2$ are reachable from state $\tau$ under a common word, and
    \item $\tau$ is reachable from $\sigma_2$ and $\sigma$ is reachable from $\tau_2$.
  \end{itemize}
  The definition is illustrated in Figure~\ref{fig1}. We allow any of the words in the definition to be empty, with the convention that any state is reachable from itself under the empty word. 

  The following theorem provides a characterization for the existence of an infinite tower of prefixes.
  \begin{thm}\label{patern}
    Let $\A$ and $\B$ be two NFAs. Then there is an infinite tower of prefixes between $\A$ and $\B$ if and only if there is a pattern of automata $\A$ and $\B$.
  \end{thm}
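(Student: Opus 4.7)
For $(\Leftarrow)$, given a pattern $(\sigma,\sigma_1,\sigma_2,\tau,\tau_1,\tau_2)$ I would pick words $u,x,u_1,y,u_2$ realizing the five reachabilities required by the definition; in the product automaton $P=\A\times\B$ these witness $q_0\xrightarrow{u}\sigma$, $\sigma\xrightarrow{x}\sigma_1$ together with $\sigma\xrightarrow{x}\sigma_2$, $\sigma_2\xrightarrow{u_1}\tau$, $\tau\xrightarrow{y}\tau_1$ together with $\tau\xrightarrow{y}\tau_2$, and $\tau_2\xrightarrow{u_2}\sigma$. I then define
\[
w_1=ux,\quad w_2=w_1 u_1 y,\quad w_{2k+1}=w_{2k}u_2 x,\quad w_{2k+2}=w_{2k+1}u_1 y\quad (k\ge 1).
\]
Every $w_i$ is a prefix of $w_{i+1}$ by construction, and a routine induction exhibits in $P$ a run of $w_{2k+1}$ ending at $\sigma_1\in F_A\times Q_B$ (so $w_{2k+1}\in L(\A)$) and a run of $w_{2k}$ ending at $\tau_1\in Q_A\times F_B$ (so $w_{2k}\in L(\B)$), giving an infinite alternating tower of prefixes.

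For $(\Rightarrow)$, assume an infinite tower of prefixes $(w_i)_{i\ge 1}$. If some $w_i=w_{i+1}$, then $w_i\in L(\A)\cap L(\B)$ and a pattern is trivial (all six pattern-states equal to a common $P$-accepting state for $w_i$, all witnessing words empty). Otherwise the lengths strictly increase and there is an infinite word $w$ with each $w_i\le w$. For $n\ge 0$ let $R_n\subseteq Q_A\times Q_B$ be the set of product-states reachable from the initial state by $w[1..n]$, and for $m\le n$ let $\rho_{m,n}\subseteq R_m\times R_n$ be the reachability relation $(s,t)\in\rho_{m,n}\iff s\xrightarrow{w[m+1..n]}t$ in $P$. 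The plan is to apply the infinite Ramsey theorem to the coloring of unordered pairs of tower positions $\{|w_i|,|w_j|\}$ by the finite datum $(R_{|w_i|},R_{|w_j|},\rho_{|w_i|,|w_j|})$. This yields an infinite homogeneous subsequence $p_1<p_2<\ldots$ of tower positions with a common reachable set $R$ and a common relation $\rho$ on $R$; composing transitions along consecutive blocks forces $\rho\circ\rho=\rho$ (idempotence), and $R$ equals the image of $\rho$. A further layered Ramsey refinement, alternating between $A$-type and $B$-type positions, ensures both types remain infinitely often in the subsequence, so $R$ contains some $\sigma_1\in F_A\times Q_B$ and some $\tau_1\in Q_A\times F_B$.

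The pattern is then read off from the idempotent structure of $(R,\rho)$: in each strongly connected component $C$ of this graph, idempotence implies $C\times C\subseteq\rho$ and each $s\in C$ carries a $\rho$-self-loop. I would enrich the Ramsey coloring one last time to additionally record, at each tower position in the subsequence, the specific accepting product-state witnessing acceptance; pigeonholing over these finitely many choices fixes specific $\sigma_1$ and $\tau_1$ and forces them to appear as $\rho$-successors of a single SCC $C$, since witness states that escape $C$ for the unique SCC accumulating the recurrent $\rho$-flow cannot reoccur infinitely often. Picking $\sigma,\tau\in C$ with $\sigma\,\rho\,\sigma_1$ and $\tau\,\rho\,\tau_1$ and setting $\sigma_2=\tau$, $\tau_2=\sigma$, and $x=y=w[p_k+1..p_l]$ for some $k<l$ in the subsequence, the pattern conditions fall out directly: $\sigma\in R$ is reachable from the initial state, $\sigma\,\rho\,\sigma_1$ and $\sigma\,\rho\,\tau$ give the common-word branching at $\sigma$ (symmetrically at $\tau$), and $\sigma_2\to^*\tau$, $\tau_2\to^*\sigma$ are witnessed by the empty word. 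The main obstacle is the final combinatorial step that places both $\sigma_1$ and $\tau_1$ inside $\rho(C)$ for a common SCC; this is precisely where the layered Ramsey refinement and idempotence of $\rho$ combine most delicately, and is where I expect the bulk of the proof's care to lie.
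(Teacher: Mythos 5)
Your proof of the ``pattern $\Rightarrow$ infinite tower'' direction is correct and is the same construction as the paper's. The gap is in the converse, and it is concrete: your endgame cannot work as stated. You set $x=y=w[p_k{+}1..p_l]$ and ask for a single relation $\rho$ with $\sigma\,\rho\,\sigma_1$ and $\tau\,\rho\,\tau_1$ for $\sigma,\tau$ in one SCC of $\rho$. Since $\sigma,\tau\in R=R_{p_k}$ and $\rho(R)\subseteq R_{p_l}$, both $\sigma_1\in F_A\times Q_B$ and $\tau_1\in Q_A\times F_B$ would then be reachable from the initial state of the product under the single prefix $w[1..p_l]$; a reachable product state is just a pair consisting of an $\A$-run and a $\B$-run on that prefix, so this forces $w[1..p_l]\in L(\A)\cap L(\B)$. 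Whenever $L(\A)$ and $L(\B)$ are disjoint---the main case of the theorem---no such $\sigma_1,\tau_1$ exist, and the construction collapses. To repair it, $x$ and $y$ must target positions of different type ($A$-type for $\sigma_1$, $B$-type for $\tau_1$), i.e., you must carry a whole family of relations between $A$-positions and $B$-positions through the Ramsey argument; this is exactly what your ``layered refinement'' would have to deliver, but it is only gestured at.

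Even granting that multi-relation setup, the decisive claim---that both kinds of accepting witnesses hang off one common SCC---is not established. Your justification appeals to ``the unique SCC accumulating the recurrent $\rho$-flow'', but there is no such uniqueness: the $\rho$-graph may have several recurrent SCCs, and the runs certifying acceptance by $\A$ and the runs certifying acceptance by $\B$ can be entirely disjoint families attached to different SCCs; pigeonholing the witness states does not change this. What makes the theorem true is an observation your argument never uses: reachable product states form a rectangle (any $\A$-run endpoint may be paired with any $\B$-run endpoint over the same prefix), so one can \emph{mix} a cycling state taken from the $\A$-witness runs with a cycling state taken from the unrelated $\B$-witness runs into a single state $\sigma=\tau$, after which no connection between the two witness ``tracks'' is needed at all. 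The paper implements exactly this via determinization: it finds $q_{\A\times\B}\xrightarrow{u}X\xrightarrow{z_X}Y\xrightarrow{z_Y}X$ in $\det(\A\times\B)$ with $X$ containing a state of $F_A\times Q_B$ and $Y$ one of $Q_A\times F_B$, uses $X=\delta(Y,z_Y)$ and $Y=\delta(X,z_X)$ to trace the witnesses backwards through arbitrarily many $(z_Xz_Y)$-blocks, pigeonholes to obtain cycling states $(s_1,t_1),(s_2,t_2)\in X$, and sets $\sigma=\tau=(s_1,t_2)$ with suitable powers of $z_Xz_Y$ as $x,y,u_1,u_2$. Your Ramsey/idempotence framework could be completed in the same spirit (using that $R=R_A\times R_B$ and that the block relations factor componentwise), but that mixing step is the missing idea, not a routine refinement.
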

  \begin{proof}
    Let $(\sigma,\sigma_1,\sigma_2,\tau,\tau_1,\tau_2)$ be a pattern of the automata $\A$ and $\B$. Let $u$ denote a word under which state $\sigma$ is reachable from the initial state $(q_A,q_B)$. Let $x$ ($y$ resp.) be a word under which both $\sigma_1$ and $\sigma_2$ ($\tau_1$ and $\tau_2$ resp.) are reachable from $\sigma$ ($\tau$ resp.). Let $u_1$ denote a word under which $\tau$ is reachable from $\sigma_2$, and $u_2$ a word under which $\sigma$ is reachable from $\tau_2$, see Figure~\ref{fig1}. Together, we have an infinite tower of prefixes $ux$, $ux(u_1y)$, $ux(u_1y)(u_2x), ux(u_1y)(u_2x)(u_1y),\dots$. 

    Assume now that there exists an infinite tower of prefixes $(w_i)_{i=1}^{\infty}$ between the languages $L(\A)$ and $L(\B)$. Consider the automaton $\det(\A\times \B)$, the determinization of $\A\times \B$ by the standard subset construction, and let $q_{\A\times\B}$ be its initial state. A sufficiently long element of the tower defines a path \[
      q_{\A\times\B}\xrightarrow{u} X \xrightarrow{z_X} Y \xrightarrow{z_Y} X
    \] 
    in the automaton $\det(\A\times \B)$, such that $X$ contains a state $(f_1,q_1)\in F_A\times Q_B$ and $Y$ contains a state $(q_2,f_2)\in Q_A\times F_B$. 
		For every state of $X$, there exists an incoming path from an element of $Y$ labeled by $z_Y$ since $X=\delta_{\A\times \B}(Y,z_Y)$. Similarly, for every state of $Y$, there exists an incoming path from an element of $X$ labeled by $z_X$ since  $Y=\delta_{\A\times \B}(X,z_X)$. Thus, there are infinitely many paths from $X$ to $X$ labeled with words from $(z_Xz_Y)^+$ ending in state $(f_1,q_1)$. Therefore, there exists a state $(s_1,t_1)\in X$ and integers $k_1$ and $\ell_1$ such that 
    \[
      (s_1,t_1) \xrightarrow{(z_Xz_Y)^{k_1}}    (s_1,t_1) 
                \xrightarrow{(z_Xz_Y)^{\ell_1}} (f_1,q_1)\,.
    \] 
    Similarly, there exists a state $(s_2,t_2)\in X$ and integers $k_2$ and $\ell_2$ such that 
    \[
      (s_2,t_2) \xrightarrow{(z_Xz_Y)^{k_2}}        (s_2,t_2)
                \xrightarrow{(z_Xz_Y)^{\ell_2}z_X}  (q_2,f_2)\,.
    \]
	
    Let $\sigma = \tau = (s_1,t_2)$. Since $(q_{\A\times\B})\xrightarrow{u} (s_i,t_i)$, $i=1,2$, also $(q_{\A\times\B})\xrightarrow{u} \sigma$. Let $x=(z_Xz_Y)^{\ell_1}$ and $y=(z_Xz_Y)^{\ell_2}z_X$. Then $\sigma \xrightarrow{x} (f_1,t_3)$ where $t_3$ is a state in the cycle $t_2\xrightarrow{(z_Xz_Y)^{k_2}} t_2$ in $\B$. Similarly, $\tau \xrightarrow{y} (s_3,f_2)$ where $s_3$ is a state in the cycle $s_1\xrightarrow{(z_Xz_Y)^{k_1}} s_1$ in $\A$.  We set $\sigma_1=(f_1,t_3)$ and $\tau_1=(s_3,f_2)$.
    The pattern is completed by states $\sigma_2$ and $\tau_2$,  such that
	 \[\sigma \xrightarrow{x} \sigma_2 \xrightarrow{u_1} \tau \xrightarrow{y} \tau_2 \xrightarrow{u_2} \sigma
	  \]
	 where $u_1$ and $u_2$ can be chosen as $u_1=(z_Xz_Y)^{\ell_1k_1k_2-\ell_1}$ and $u_2=z_Y(z_Xz_Y)^{\ell_2k_1k_2-\ell_2-1}$. 
\end{proof}

We point out that the pattern can easily be identified. It could even be shown that to decide whether there is a pattern between the automata, that is, whether there is an infinite tower of prefixes, is an NL-complete problem for both NFAs and DFAs. This is in contrast to deciding the existence of an infinite tower of subsequences, which is PTime-complete~\cite{tm2016}.

  We have already mentioned that if there are towers of arbitrary height, then there is an infinite tower. This property holds for any relation that is a well quasi order (WQO) ~\cite[Lemma~6]{icalp2013} of which the subsequence relation is an instance. The prefix relation is not a WQO. However, Theorem \ref{patern} and its proof shows that the pattern and therefore also an infinite tower of prefixes can be found as soon as there exists a sufficiently long tower of prefixes. On the other hand, this argument depends on the fact that the languages are regular. Indeed, the following example shows that the property in general does not hold for non-regular languages.

  \begin{example}\label{nonregular}
    Let $K=\{a,b\}^*a$ and $L=\{ a^m (ba^*)^nb \mid m > n \ge 0\}$ be two languages. Note that $K$ is regular and $L$ is non-regular context-free. The languages are disjoint, since the words of $K$ end with $a$ and the words of $L$ with $b$.
    For any $k\geq 1$, the words $w_{2i+1}=a^k(ba)^i\in K$ and $w_{2(i+1)}=a^k(ba)^ib\in L$, for $i=0,1,\dots,k-1$, form a tower of prefixes between $K$ and $L$ of height $2k$.
    On the other hand, let $w_1, w_2, \ldots$ be a tower of prefixes between the languages $K$ and $L$. Without loss of generality, we may assume that $w_1$ belongs to $L$. Then $a^kb$ is a prefix of $w_1$, for some $k\ge 1$. It is not hard to see that $|w_i|_b<|w_{i+2}|_b$ holds for any $w_i\le w_{i+1} \le w_{i+2}$ with $w_i, w_{i+2}$ in $L$ and $w_{i+1}$ in $K$. As any word of $L$ with a prefix $a^kb$ can have at most $k$ occurrences of letter $b$, the tower cannot be infinite.
  \end{example}

Given that the height of finite towers of prefixes for regular languages is bounded, we now investigate the bound.  We shall need the following auxiliary lemma. 
	
	\begin{lem}\label{ab}
    Let $k_1,\ell_1,k_2,\ell_2\geq 0$ be integers and $k_1+k_2>0$ and $\ell_1+\ell_2>0$. Then
    $
      2\cdot\min(k_1k_2,\ell_1\ell_2)\leq \frac {(k_1+\ell_1)(k_2+\ell_2)}{2}\,.
    $
    Moreover, if $k_1k_2\neq \ell_1\ell_2$, then 
    $
      2\cdot\min(k_1k_2,\ell_1\ell_2)+1\leq \frac {(k_1+\ell_1)(k_2+\ell_2)}{2}\,.
    $
	\end{lem}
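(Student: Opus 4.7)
The plan is to reduce the inequality to AM--GM applied to the two cross terms $k_1\ell_2$ and $\ell_1 k_2$. First I would expand
$(k_1+\ell_1)(k_2+\ell_2) = k_1 k_2 + k_1\ell_2 + \ell_1 k_2 + \ell_1\ell_2$
and, without loss of generality, assume $k_1 k_2 \leq \ell_1 \ell_2$, so that $\min(k_1 k_2,\ell_1 \ell_2) = k_1 k_2$. By AM--GM, $k_1\ell_2 + \ell_1 k_2 \geq 2\sqrt{k_1 k_2 \cdot \ell_1 \ell_2} \geq 2 k_1 k_2$, where the second step uses the hypothesis $\ell_1 \ell_2 \geq k_1 k_2$. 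Substituting gives $(k_1+\ell_1)(k_2+\ell_2) \geq 3 k_1 k_2 + \ell_1 \ell_2 \geq 4 k_1 k_2$, which is the first (non-strict) bound.

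For the strict version, under the additional hypothesis $k_1 k_2 \neq \ell_1 \ell_2$, I would exploit the fact that $D := (k_1+\ell_1)(k_2+\ell_2) - 4 k_1 k_2$ is an integer, so it suffices to prove $D > 1$. I plan to split into two cases. If $k_1 k_2 = 0$, say $k_1 = 0$, the hypotheses force $k_2 \geq 1$ and $\ell_1,\ell_2 \geq 1$, hence $D = \ell_1(k_2 + \ell_2) \geq 2$ directly (and symmetrically if $k_2 = 0$). If $k_1 k_2 \geq 1$, necessarily $\ell_1,\ell_2 \geq 1$ as well, and the AM--GM computation above shows $D \geq 2\sqrt{xy} + y - 3x$, where $x = k_1 k_2$ and $y = \ell_1 \ell_2 \geq x+1$. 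It then suffices to establish the real inequality $2\sqrt{xy} + y - 3x > 1$.

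The main obstacle is this last estimate, because already for $x=1,\,y=2$ the AM--GM bound equals $2\sqrt{2}-1 \approx 1.83$, leaving little room; integrality of $D$ is therefore essential. A squaring argument handles it cleanly: when $3x - y + 1 \leq 0$ (i.e.\ $y \geq 3x+1$) the claim is immediate, and in the remaining range $x+1 \leq y \leq 3x$ the inequality is equivalent to $4xy > (3x - y + 1)^2$. A short rearrangement rewrites the difference as
$(9x - y)(y - x) + 2y - 6x - 1$,
and the bounds $9x - y \geq 6x$, $y - x \geq 1$, and $2y - 6x - 1 \geq -4x + 1$ (valid for $y \geq x+1$) combine to give a lower estimate of $2x + 1 \geq 3$. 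Hence $D > 1$ in every case, and integrality upgrades this to $D \geq 2$, which is the desired strict inequality.
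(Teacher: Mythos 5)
Your proposal is correct; I checked the key identity $4xy-(3x-y+1)^2=(9x-y)(y-x)+2y-6x-1$ and the case bounds ($9x-y\ge 6x$, $y-x\ge 1$, $2y-6x-1\ge -4x+1$, total $\ge 2x+1\ge 3$), and the boundary cases ($k_1k_2=0$; $y\ge 3x+1$) are handled properly, so the argument goes through. However, it takes a genuinely different route from the paper. You bound the cross terms via AM--GM, $k_1\ell_2+\ell_1k_2\ge 2\sqrt{k_1k_2\,\ell_1\ell_2}$, which gives the non-strict bound almost immediately, and then for the strict bound you prove that the integer defect $D=(k_1+\ell_1)(k_2+\ell_2)-4\min(k_1k_2,\ell_1\ell_2)$ exceeds $1$ as a real number (via a squaring argument in the tight regime $x+1\le y\le 3x$) and invoke integrality to upgrade $D>1$ to $D\ge 2$. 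The paper instead argues entirely within the integers: after disposing of the case $k_1=0$ (as you do), it splits according to the relative order of $k_1,\ell_1$ and $k_2,\ell_2$; the case $k_1\le\ell_1$, $k_2<\ell_2$ is settled by $2k_1k_2+1\le 2k_1k_2+k_1=\tfrac{2k_1(2k_2+1)}{2}$, and in the crossed case it sets $d_1=\ell_1-k_1$, $d_2=k_2-\ell_2$, assumes $d_1\ell_2\le d_2k_1$ by symmetry, and expands $\tfrac{(2k_1+d_1)(2\ell_2+d_2)}{2}$ directly, with strictness coming from $d_1\ell_2+1\le d_2k_1$ when $k_1k_2\ne\ell_1\ell_2$. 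The trade-off: the paper's computation avoids square roots and delivers both claims in one expansion, while your AM--GM shortcut makes the weak inequality essentially trivial but forces the more delicate squaring-plus-integrality analysis (which is genuinely needed, as your example $x=1$, $y=2$ with bound $2\sqrt2-1$ shows) to recover the strict version.
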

  \begin{proof} 
    Suppose that $k_1=0$. Then the first claim is obvious. The condition of the second one holds only if $\ell_1,\ell_2 >0$. Since also $k_2>0$, we get the second claim. By symmetry, we shall further suppose that $k_1,\ell_1,k_2,\ell_2\geq 1$.
    
    Let now $k_1\leq \ell_1$ and $k_2<\ell_2$.
    Then
    $
      2\min(k_1k_2,\ell_1\ell_2)+1=2k_1k_2 + 1 \leq 2k_1k_2 + k_1  = \frac {2k_1(2k_2+1)}{2} \leq \frac {(k_1+\ell_1)(k_2+\ell_2)}{2}\,.
    $

    By symmetry, it remains to consider the case $k_1\leq \ell_1$ and $\ell_2\leq k_2$. Set $d_1=\ell_1-k_1$ and $d_2=k_2-\ell_2$. By symmetry, we may also suppose $d_1\ell_2\leq d_2k_1$. Then 
    \begin{align*}
      2\min(k_1k_2,\ell_1\ell_2)
      &=2\min(k_1(\ell_2+d_2),(k_1+d_1)\ell_2)=2k_1\ell_2+2d_1\ell_2\\
      &\leq 2k_1\ell_2+d_1\ell_2 +d_2k_1 +\frac{d_1d_2}2 = \frac {(2k_1+d_1)(2\ell_2+d_2)}{2} 
        = \frac {(k_1+\ell_1)(k_2+\ell_2)}{2}\,.
    \end{align*}
    If $k_1k_2 =k_1(\ell_2+d_2)\neq (k_1+d_1)\ell_2=\ell_1\ell_2$, then $d_1\ell_2< d_2k_1$, that is, $d_1\ell_2+1\leq d_2k_1$, and the obvious modification of the last formula yields the second claim.
  \end{proof}
  
For DFAs we have the following bound.
  \begin{thm}\label{thm:dfas}
    Let $\A$ and $\B$ be two nonempty DFAs with $n_1$ and $n_2$ states that have no infinite tower of prefixes. Then the height of a tower of prefixes between $\A$ and $\B$ is at most $\frac{n_1n_2}{2} + 1$.
  \end{thm}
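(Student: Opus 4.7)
My plan is to associate each tower element with a pair of states in the product automaton $\A \times \B$, show that these pairs are distinct and lie in one of two disjoint subsets, and then combine the resulting counts via Lemma~\ref{ab}.

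First I would observe that the absence of an infinite tower of prefixes forces $L(\A) \cap L(\B) = \emptyset$: any common word $w$ yields the trivial infinite tower $w, w, w, \ldots$. Combined with the alternation clause in the definition of a tower, this implies that memberships in $L(\A)$ and $L(\B)$ strictly alternate along the sequence and that all $w_i$ are distinct. Next, using determinism, I would attach to each $w_i$ the unique pair $\sigma_i = (p_i, q_i)$ reached in $\A$ and $\B$ after reading $w_i$. Since $L(\A) \cap L(\B) = \emptyset$, no $\sigma_i$ belongs to $F_A \times F_B$, so $\sigma_i \in A' := F_A \times (Q_B \setminus F_B)$ iff $w_i \in L(\A)$, and $\sigma_i \in B' := (Q_A \setminus F_A) \times F_B$ iff $w_i \in L(\B)$.

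The heart of the argument is the distinctness of $\sigma_1, \ldots, \sigma_r$. Suppose $\sigma_i = \sigma_j$ with $i < j$ and let $u$ be such that $w_j = w_i u$; then $u$ labels a loop at $\sigma_i$ in the product DFA. Decomposing $u = u_1 u_2$ so that $w_{i+1} = w_i u_1$ produces $\sigma_i \xrightarrow{u_1} \sigma_{i+1} \xrightarrow{u_2} \sigma_i$. If (say) $w_i \in L(\A)$ and $w_{i+1} \in L(\B)$, then $\sigma_i \in F_A \times Q_B$ and $\sigma_{i+1} \in Q_A \times F_B$, and iterating the loop yields the infinite alternating tower of prefixes $w_i,\, w_i u_1,\, w_i u,\, w_i u u_1,\, w_i u^2, \ldots$, contradicting the hypothesis. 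The degenerate case $j = i+1$ would force $\sigma_i = \sigma_{i+1} \in F_A \times F_B$, contradicting $L(\A) \cap L(\B) = \emptyset$. Hence all $\sigma_i$ are pairwise distinct, and this is the main conceptual obstacle.

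Finally I would close the argument by counting. Writing $a = |F_A|$ and $b = |F_B|$ and letting $|A|, |B|$ count tower elements in $L(\A)$ and $L(\B)$, distinctness gives $|A| \le |A'| = a(n_2 - b)$ and $|B| \le |B'| = (n_1 - a)b$. Strict alternation implies $r = |A| + |B| \le 2 \min(|A|, |B|) + 1 \le 2 \min(|A'|, |B'|) + 1$. Applying Lemma~\ref{ab} with $k_1 = a$, $\ell_1 = n_1 - a$, $k_2 = n_2 - b$, $\ell_2 = b$ (so that $k_1 k_2 = |A'|$ and $\ell_1 \ell_2 = |B'|$) yields $2 \min(|A'|, |B'|) \le n_1 n_2 / 2$, hence $r \le n_1 n_2 / 2 + 1$. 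The edge cases where the hypotheses of Lemma~\ref{ab} fail correspond to $F_A = \emptyset$ or $F_B = \emptyset$, i.e.\ $L(\A) = \emptyset$ or $L(\B) = \emptyset$, where $r \le 1$ is immediate.
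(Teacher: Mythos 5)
Your overall skeleton matches the paper's proof (product automaton, disjointness of the languages, distinctness of the reached state pairs via the infinite-tower argument, then Lemma~\ref{ab} with the same substitution $k_1=|F_A|$, $\ell_1=n_1-|F_A|$, $k_2=n_2-|F_B|$, $\ell_2=|F_B|$), but there is a genuine gap in the step where you attach a pair $\sigma_i=(p_i,q_i)$ to \emph{every} tower word, including the last one. In this paper DFAs are partial and trimmed: transitions may be undefined and only states on accepting paths are counted. For $i<r$ the pair $\sigma_i$ exists because $w_i$ is a prefix of $w_{i+1}$, which is accepted by the other automaton, but $w_r$ need not label any path in the non-accepting automaton, so $\sigma_r$ may simply not exist. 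This is not a corner case: in the paper's own tightness construction (Theorem~\ref{thm:dfas:tight}) the last word $(b^ea)^{d-1}b^{e+1}$ is accepted by $\A_d$ but has no run in $\B_e$, and there one has $|A|=|A'|+1$, so your inequality $|A|\le|A'|$ is false exactly in the extremal situation. Assuming complete DFAs does not rescue the argument, because completing the automata adds a sink state that the theorem's $n_1,n_2$ do not count.

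The repair is not just bookkeeping, because your counting only uses the first half of Lemma~\ref{ab}. If you drop $w_r$ and bound the remaining $r-1$ words by $2\min(|A'|,|B'|)+1$, you only get $r\le \frac{n_1n_2}{2}+2$. The paper instead bounds $r-1$ by the length of an alternating sequence of \emph{distinct} states in $X=F_A\times(Q_B\setminus F_B)$ and $Y=(Q_A\setminus F_A)\times F_B$: this length is at most $2\min(|X|,|Y|)$ when $|X|=|Y|$, and at most $2\min(|X|,|Y|)+1$ when $|X|\neq|Y|$, and in the latter case the \emph{second} statement of Lemma~\ref{ab} (the one with the extra $+1$, valid precisely when $k_1k_2\neq\ell_1\ell_2$) gives $2\min(|X|,|Y|)+1\le\frac{n_1n_2}{2}$; either way $r-1\le\frac{n_1n_2}{2}$. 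So you need both the separate treatment of the last tower word and the case distinction that invokes the second part of Lemma~\ref{ab}, neither of which appears in your write-up; the rest of your argument (disjointness, strict alternation, the loop-pumping contradiction for repeated states) is sound and agrees with the paper.
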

  \begin{proof}
    Let $\A=(Q_{\A},\Sigma,\delta_{\A},q_{\A},F_{\A})$ and $\B=(Q_{\B},\Sigma,\delta_{\B}, q_{\B}, F_{\B})$, and let $X=F_{\A}\times (Q_{\B}\setminus F_{\B})$ and $Y=(Q_{\A}\setminus F_{\A})\times F_{\B}$. 
		Final states $(p_i,q_i)=\delta((q_{\A},q_{\B}),w_i)$ of any tower of prefixes $(w_i)_{i=1}^{r}$ between $\A$ and $\B$ in the product automaton $\A\times \B$ have to alternate between the states of $X$ and $Y$, with the exception of $w_r$: there may be no path labeled by $w_r$ in the non-accepting automaton, and therefore also no path in the product automaton $\A\times \B$ (recall our convention not to consider states that do not appear on an accepting path). 

    If $(p_i,q_i)=(p_j,q_j)$ for some $1\leq i<j<r$, then there is a path
    \[
      (q_{\A},q_{\B}) \xrightarrow{w_i} (p_i,q_i) 
                      \xrightarrow{u}   (p_{i+1},q_{i+1})
                      \xrightarrow{v}   (p_i,q_i)\,.
    \]    
    with $w_{i+1}=w_iu$ and $w_j=w_iuv$. Then there is an infinite tower of prefixes $w_i, w_iu, w_iuv, w_iuvu, \dots$, a contradiction. 
    Therefore, it remains to show that there may be at most $\frac{n_1n_2}2$ alternations without repeated states between $X$ and $Y$.
    
    If $|X|=|Y|$, then there are at most $2\min(|X|,|Y|)$ such alternations.
    If $|X|\neq |Y|$, then there are at most $2\min(|X|,|Y|)+1$ such alternations. In both cases, the proof is completed by Lemma \ref{ab} with $k_1=\abs{F_{\A}}$, $\ell_2=\abs{F_{\B}}$, and $n_i=k_i+\ell_i$, $i=1,2$, noting that for $k_1=0$ or $\ell_2=0$ the claim holds since then $L(\A)$ or $L(\B)$ is empty.
  \end{proof}

  The following theorem allows to conclude that the above bound is tight.
  \begin{thm}\label{thm:dfas:tight}
    For every positive integer $d$ and every odd positive integer $e$, there exists a binary DFA with $2d$ states and a binary DFA with $e+1$ states having a tower of prefixes of height $d(e+1)+1$ and no infinite tower.
  \end{thm}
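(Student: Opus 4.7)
The plan is to modify the construction of Theorem~\ref{thm02B} by replacing the NFA $\A_d$, which is nondeterministic only because of its multiple initial states, with an equivalent DFA of $2d$ states, while keeping the same DFA $\B_e$. The key observation is that $L(\A_d)$ consists of all binary words with at most $d-1$ occurrences of $a$ whose \emph{$b$-tail} (i.e., the suffix of trailing $b$'s after the last $a$, or the whole word if no $a$ occurs, with the empty tail counted as even) has even length. Thus $L(\A_d)$ is recognized by the DFA $\A_d'$ with state set $\{(i,p)\mid 0\le i\le d-1,\ p\in\{0,1\}\}$, initial state $(0,0)$, accepting states $\{(i,0)\mid 0\le i\le d-1\}$, and transitions $(i,p)\xrightarrow{b}(i,1-p)$ and $(i,p)\xrightarrow{a}(i+1,0)$ for $i<d-1$ (undefined at $i=d-1$), so that the state tracks the pair (number of $a$'s read so far, parity of the current $b$-tail). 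The automaton has exactly $2d$ states, and the targeted height $d(e+1)+1$ is precisely $n_1 n_2/2+1$ for $n_1=2d$ and $n_2=e+1$, so matching it also establishes tightness of Theorem~\ref{thm:dfas}.

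For the tower, I would take the $d(e+1)+1$ prefixes of the word $w=(b^ea)^{d-1}b^{e+1}$, which has length $d(e+1)$. Two simple checks verify the tower property: every prefix of $w$ ending in an even number of $b$'s (including $\varepsilon$ and all prefixes ending in $a$) has at most $d-1$ occurrences of $a$ and an even $b$-tail, hence lies in $L(\A_d')$; every prefix of $w$ ending in an odd number of $b$'s has the form $(b^ea)^k b^\ell$ with $0\le k\le d-1$ and $1\le\ell\le e$ odd, so running $\B_e$ on it reaches the accepting odd state $\ell$ and it lies in $L(\B_e)$. Consecutive prefixes of $w$ have $b$-tails of opposite parity, so they belong to alternating languages, yielding a valid tower of prefixes of the claimed height.

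For the no-infinite-tower part, which I expect to be the main subtle point, I would argue as in the last paragraph of the proof of Theorem~\ref{thm02B}. The two languages are disjoint (they require different $b$-tail parities), so any tower is strictly length-increasing. In any infinite tower $(w_i)_{i=1}^\infty$, for every index $i$ there exists some $j\ge i$ with $w_j\in L(\A_d')$; from $w_i\le w_j$ one gets $|w_i|_a\le|w_j|_a\le d-1$. Consequently each $w_i\in L(\B_e)$ must be of the form $(b^ea)^{k_i}b^{\ell_i}$ with $k_i\le d-1$ and $\ell_i$ odd in $[1,e]$, and there are only finitely many such words, contradicting the assumed infiniteness of the tower.
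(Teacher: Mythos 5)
Your proposal is correct and follows essentially the same route as the paper: it replaces the NFA $\A_d$ of Theorem~\ref{thm02B} by the same $2d$-state DFA tracking (number of $a$'s, parity of the trailing $b$-run), keeps $\B_e$, and uses the prefixes of $w=(b^ea)^{d-1}b^{e+1}$ as the tower. The only cosmetic difference is that the paper dismisses infinite towers in one line by invoking Theorem~\ref{thm02B} (no infinite tower of subsequences, hence none of prefixes), while you re-derive this directly via the bound $|w_i|_a\le d-1$ and the finiteness of such words in $L(\B_e)$, which is equally valid.
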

    \begin{figure}
      \centering
      \begin{tikzpicture}[->,>=stealth,shorten >=1pt,node distance=1.6cm,
        state/.style={circle,minimum size=1mm,very thin,draw=black,initial text=},
        every node/.style={fill=white,font=\small},
        bigloop/.style={shift={(0,0.01)},text width=1.6cm,align=center},
        bigloopd/.style={shift={(0,-0.01)},text width=1.6cm,align=center}]
        \node[state,initial,accepting]  (1) {$1$};
        \node[state]                    (1') [below of=1] {$1'$};
        \node[state,accepting]          (2) [right of=1] {$2$};
        \node[state]                    (2') [below of=2] {$2'$};
        \node[state,accepting]          (3) [right of=2] {$3$};
        \node[state]                    (3') [below of=3] {$3'$};
        \node[state,accepting]          (4) [right of=3] {$4$};
        \node[state]                    (4') [below of=4] {$4'$};
        \node[state,accepting]          (5) [right of=4] {$5$};
        \node[state]                    (5') [below of=5] {$5'$};
        \path
          (1') edge node {$a$} (2)
          (2') edge node {$a$} (3)
          (3') edge node {$a$} (4)
          (4') edge node {$a$} (5)
          (1) edge node {$a$} (2)
          (2) edge node {$a$} (3)
          (3) edge node {$a$} (4)
          (4) edge node {$a$} (5)
        ;
        \foreach \from/\to in {1/1',2/2',3/3',4/4',5/5', 1'/1,2'/2,3'/3,4'/4,5'/5}
          \path (\from) edge[bend left] node {$b$} (\to);
      \end{tikzpicture}
      \quad\quad
      \begin{tikzpicture}[->,>=stealth,shorten >=1pt,node distance=1.6cm,
        state/.style={circle,minimum size=1mm,very thin,draw=black,initial text=},
        every node/.style={fill=white,font=\small},
        bigloop/.style={shift={(0,0.01)},text width=1.6cm,align=center}]
        \node[state,initial]    (1) {$0$};
        \node[state,accepting]  (2) [above right of=1]{$1$};
        \node[state]            (3) [right of=2] {$2$};
        \node[state, accepting] (4) [below right of=3] {$3$};
        \node[state]            (5) [below left of=4] {$4$};
        \node[state, accepting] (6) [left of=5] {$5$};
        
        \path
          (1) edge[bend left=15] node {$b$} (2)
          (2) edge[bend left=15] node {$b$} (3)
          (3) edge[bend left=15] node {$b$} (4)
          (4) edge[bend left=15] node {$b$} (5)
          (5) edge[bend left=15] node {$b$} (6)
          (6) edge[bend left=15] node {$a$} (1);
       \end{tikzpicture}
      \caption{DFAs $\A_d$ and $\B_e$ of Theorem~\ref{thm:dfas:tight} for $d=e=5$}
      \label{exdfa}
    \end{figure}
   \begin{proof}
    We consider the proof of Theorem~\ref{thm02B}, but instead of taking the NFA $\A_d$, we take its DFA equivalent, which has $2d$ states and, for simplicity, we denote it $\A_d$ as well, cf. Figure~\ref{exdfa}. From Theorem~\ref{thm02B}, there is no infinite tower between the languages, hence there is also no infinite tower of prefixes between the DFAs.
    
    Consider the word $w=(b^e a)^{d-1} b^{e+1}$. By the proof of Theorem~\ref{thm02B}, $\A_d$ accepts all prefixes of $w$ ending with an even number of $b$'s, including those ending with $a$, and $\B_e$ accepts all prefixes of $w$ ending with an odd number of $b$'s. The sequence $(w_i)_{i=1}^{|w|+1}$, where $w_i$ is the prefix of $w$ of length $i-1$, for $i=1,2,\dots,|w|+1$, is therefore a tower of prefixes between $\A_d$ and $\B_e$ of height $|w|+1 = (e+1)(d-1) + e + 1 + 1 = d(e+1)+1$. (The last word of the tower in Theorem~\ref{thm02B} does not fit to a prefix tower.)
  \end{proof}

  \begin{cor}\label{cor:dfas}
    For every even positive integers $n_1$ and $n_2$, there exist binary DFAs with $n_1$ and $n_2$ states having a tower of prefixes of height $\frac{n_1n_2}{2}+1$ and no infinite tower.
  \end{cor}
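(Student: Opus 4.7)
The plan is to derive this corollary as an immediate specialization of Theorem~\ref{thm:dfas:tight}. Given even positive integers $n_1$ and $n_2$, I would set $d = n_1/2$ and $e = n_2 - 1$. Since $n_2$ is even, $e = n_2 - 1$ is odd, so the hypotheses of Theorem~\ref{thm:dfas:tight} are satisfied (both $d$ and $e$ are positive and $e$ is odd).

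Applying Theorem~\ref{thm:dfas:tight} with these choices yields two binary DFAs of sizes $2d = n_1$ and $e + 1 = n_2$ having a tower of prefixes of height
\[
  d(e+1) + 1 = \frac{n_1}{2} \cdot n_2 + 1 = \frac{n_1 n_2}{2} + 1
\]
and no infinite tower. This matches the bound of Theorem~\ref{thm:dfas}, so the construction is tight.

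There is no real obstacle here: the only thing to check is the arithmetic matching between the parameters, and that the parity constraint on $e$ in Theorem~\ref{thm:dfas:tight} is automatically satisfied by the parity assumption on $n_2$. The corollary is essentially a restatement of Theorem~\ref{thm:dfas:tight} in terms of the number of states of the two automata, making explicit that the upper bound $n_1 n_2/2 + 1$ of Theorem~\ref{thm:dfas} is attained.
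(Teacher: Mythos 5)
Your proposal is correct and coincides with the paper's own proof, which likewise specializes Theorem~\ref{thm:dfas:tight} with $d=\frac{n_1}{2}$ and $e=n_2-1$ (the evenness of $n_2$ ensuring $e$ is odd). Nothing further is needed.
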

  \begin{proof}
    The proof follows from Theorem~\ref{thm:dfas:tight} by setting $d=\frac{n_1}{2}$ and $e=n_2-1$.
  \end{proof}

  Comparing towers of subsequences and prefixes with respect to the number of states of DFAs, Theorem~\ref{thm:expdfa} shows that there are towers of subsequences of exponential height, while Theorem~\ref{thm:dfas} gives a quadratic bound on the height of towers of prefixes. It shows an exponential gap between the height of towers of subsequences and prefixes for DFAs.
  What is the situation for NFAs? An immediate consequence of the NFA-to-DFA transformation and Theorem~\ref{thm:dfas} give the following asymptotically tight bound.

  \begin{cor}\label{cor:nfas}
    Given two NFAs with at most $n_1$ and $n_2$ states and with no infinite tower of prefixes, the height of a tower of prefixes between them is at most $2^{n_1+n_2-1}-2^{n_1-1}-2^{n_2-1}+1$. Moreover, the lower bound is $2^{n_1+n_2-2}-2^{n_2-2}+1$ for any $n_1,n_2\geq 2$.
  \end{cor}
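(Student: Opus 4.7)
The plan is to obtain the upper bound by determinising the given NFAs and invoking Theorem~\ref{thm:dfas}, and to obtain the lower bound by observing that the towers constructed in Theorem~\ref{thm:2exp:improved} (and specialised in Corollary~\ref{cor:A1}) are actually built as prefixes of a single word.

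For the upper bound, let $\A$ and $\B$ be NFAs with at most $n_1$ and $n_2$ states respectively, and apply the standard subset construction to obtain equivalent DFAs $\A'$ and $\B'$. After removing useless states (in particular the empty subset, which always qualifies), $\A'$ has at most $2^{n_1}-1$ states and $\B'$ has at most $2^{n_2}-1$ states. Because $L(\A)=L(\A')$ and $L(\B)=L(\B')$, a tower of prefixes between $\A$ and $\B$ is exactly a tower of prefixes between $\A'$ and $\B'$, and the assumption of no infinite tower of prefixes transfers verbatim. Theorem~\ref{thm:dfas} then gives the bound
\[
  \frac{(2^{n_1}-1)(2^{n_2}-1)}{2}+1
  = 2^{n_1+n_2-1} - 2^{n_1-1} - 2^{n_2-1} + \tfrac{3}{2}.
\]
Since the height is an integer and $2^{n_i-1}$ are integers for $n_i\ge 1$, we may take the floor, obtaining the stated bound $2^{n_1+n_2-1} - 2^{n_1-1} - 2^{n_2-1} + 1$.

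For the lower bound, inspect the proof of Theorem~\ref{thm:2exp:improved}: the witnessing tower is constructed as the sequence of all prefixes of the single word $u_{m_A+m_B}b$, hence it is literally a tower of prefixes, not merely a tower of subsequences. Corollary~\ref{cor:A1} specialises this construction with $m_A=n_1-1$, $m_B=n_2-2$, and $d_i=e_j=1$ to produce two NFAs with $n_1$ and $n_2$ states admitting a tower of height $2^{n_1+n_2-2}-2^{n_2-2}+1$, and the proof of Theorem~\ref{thm:2exp:improved} shows that no infinite tower of subsequences exists between the two languages. Since every tower of prefixes is in particular a tower of subsequences, the absence of an infinite tower of subsequences implies the absence of an infinite tower of prefixes, so the explicit prefix-tower witnesses the desired lower bound.

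The only real obstacle is bookkeeping, concentrated in two routine verifications: first, that the subset construction, after trimming, leaves at most $2^{n_i}-1$ useful states (the empty subset is always useless, accounting for the $-1$); and second, that the tower produced by Theorem~\ref{thm:2exp:improved} is genuinely a sequence of prefixes of a common word, which is visible directly from its inductive definition. No further technical work is needed.
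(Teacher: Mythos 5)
Your proposal is correct and follows essentially the same route as the paper: determinize via the subset construction to DFAs with at most $2^{n_1}-1$ and $2^{n_2}-1$ (nonempty, useful) states, apply Theorem~\ref{thm:dfas} and take the integer part, and obtain the lower bound from Corollary~\ref{cor:A1} by observing that the tower built in Theorem~\ref{thm:2exp:improved} consists of prefixes of a single word (with no infinite tower of subsequences, hence none of prefixes). The arithmetic and the bookkeeping about the empty subset match the paper's argument.
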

  \begin{proof}
    Let two NFAs with $n_1$ and $n_2$ states. Their corresponding minimal DFAs have at most $2^{n_1}-1$ and $2^{n_2}-1$ nonempty states. By Theorem~\ref{thm:dfas}, the upper bound on the height of towers of prefixes is $\frac12{(2^{n_1}-1)(2^{n_2}-1)}+1$. Taking the integer part, the height is at most 
		$
      \frac{(2^{n_1}-1)(2^{n_2}-1)+1}{2} = 2^{n_1+n_2-1}-2^{n_1-1}-2^{n_2-1}+1.
    $
    
		The lower bound is obtained from Corollary \ref{cor:A1} noting that the tower constructed in the proof of  Theorem \ref{thm:2exp:improved} is a tower of prefixes.
  \end{proof}

  A natural question is whether there are any requirements on the size of the alphabet in case of automata with exponentially high towers of prefixes. The following corollary shows that the alphabet can be binary and the tower is still more than polynomial in the number of states.

  \begin{cor}\label{cor23}
    For any $n$ there are binary NFAs with at most $n$ states with no infinite tower of prefixes and with a tower of prefixes of a superpolynomial height with respect to $n$.
  \end{cor}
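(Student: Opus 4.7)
The plan is to reuse the exponential tower of prefixes built for Corollary~\ref{cor:A1}, where the alphabet size grows with the number of states, and then binary-encode each letter. Concretely, I would take the instance of Theorem~\ref{thm:2exp:improved} with $m_A=m_B=m$ and all $d_i=e_i=1$: this produces two NFAs $\A,\B$ with $O(m)$ states over an alphabet $\Sigma$ of size $2m+1$, admitting a tower of prefixes of height $2^{\Omega(m)}$ and no infinite tower (the witness sequence in the proof of Theorem~\ref{thm:2exp:improved} is already a tower of prefixes of $u_{m_A+m_B}b$).

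Next, I would fix a fixed-length binary code $\mathrm{enc}\colon\Sigma\to\{0,1\}^k$ with $k=\lceil\log_2|\Sigma|\rceil=\Theta(\log m)$, extended letter-by-letter to an injective morphism $\mathrm{enc}\colon\Sigma^*\to\{0,1\}^*$. The binary NFA $\A'$ is obtained from $\A$ by replacing every transition $q\xrightarrow{a}q'$ with a path through $k-1$ fresh, non-initial and non-accepting, intermediate states labeled bit-by-bit by $\mathrm{enc}(a)$; $\B'$ is built analogously. Making the intermediate states private to each individual transition guarantees that accepting paths traverse only complete codewords, so that $L(\A')=\mathrm{enc}(L(\A))$ and $L(\B')=\mathrm{enc}(L(\B))$. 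A quick count of the transitions of $\A$ and $\B$ (the self-loops under $\Sigma_{k-1}$, the $a_i$-transitions out of $(i,0)$, and the $\Gamma_{m_B}$-transitions out of $(0,0)$) gives $O(m^2)$ transitions in total, so $\A'$ and $\B'$ have $O(m^2\log m)$ states.

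Towers of prefixes then transfer in both directions, because all codewords have the same length $k$. Forward direction: applying $\mathrm{enc}$ to the given tower $(w_i)$ between $\A$ and $\B$ yields a tower of prefixes $(\mathrm{enc}(w_i))$ between $\A'$ and $\B'$ of the same height, hence of height $2^{\Omega(m)}$. Backward direction: any tower of prefixes $(w_i')$ between $\A'$ and $\B'$ consists of accepted words, hence each $w_i'$ is a concatenation of complete codewords, $w_i'=\mathrm{enc}(v_i)$, and fixed-length codewords make $\mathrm{enc}(v_i)\le\mathrm{enc}(v_{i+1})$ equivalent to $v_i\le v_{i+1}$; so $(v_i)$ is a tower of prefixes between $\A$ and $\B$ of the same height. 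In particular, an infinite tower of prefixes between $\A'$ and $\B'$ would contradict the absence of an infinite tower between $\A$ and $\B$ given by Theorem~\ref{thm:2exp:improved}. Writing $n$ for the number of states of the binary automata, $n=\Theta(m^2\log m)$ gives $m=\Omega(\sqrt{n/\log n})$, so the tower height $2^{\Omega(m)}$ is $2^{\Omega(\sqrt{n/\log n})}$, which is superpolynomial in $n$.

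The main thing to verify carefully is that the intermediate states cannot create unintended accepting paths or shortcuts between distinct transitions; this is handled by the book-keeping of making every intermediate state private to exactly one original transition, which is what yields the clean identity $L(\A')=\mathrm{enc}(L(\A))$. Everything else is a routine morphism-and-counting argument.
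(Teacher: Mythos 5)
Your proposal is correct and follows essentially the same route as the paper: take the automata of Theorem~\ref{thm:2exp:improved} with all parameters equal to $1$ (whose tower is a tower of prefixes), replace each of the $O(m^2)$ transitions by a path of $O(\log m)$ fresh states realizing a fixed-length binary code, and conclude a tower of prefixes of height $2^{\Omega(m)}$ with $n=O(m^2\log m)$ states, i.e.\ height $2^{\Omega(\sqrt{n/\log n})}$. Your extra care in checking the backward transfer (so that no infinite tower of prefixes arises in the encoded automata) is a detail the paper leaves implicit, but the argument is the same.
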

  \begin{proof}
    The property of being a tower of prefixes is preserved if the alphabet is encoded in binary. The binary code of each letter has length at most $\log m$ for an alphabet of cardinality $m$. Therefore, every transition under an original letter can be replaced by a path with at most $\log m$ new states. 

    Consider the automata $\A_{m_A,m_B,(1,\dots,1)}$ and $\B_{m_A,m_B,(1,\dots,1)}$ of Theorem \ref{thm:2exp:improved}. They have $m_A+1$ and $m_B+2$ states and $O((m_A+m_B)^2)$ transitions. Encoding every letter in binary results in automata with $n_1+n_2=O((m_A+m_B)^2\log(m_A+m_B))$ states and a tower of height at least 
    $2^{m_A+m_B}\in2^{\Omega\left(\sqrt{\frac {n_1+n_2}{\log (n_1+n_2)}}\right)}$.
  \end{proof}

  The following question is open.
  \begin{oprob}\label{op2}
    Given two NFAs with $n_1$ and $n_2$ states over a fixed alphabet with $m$ letters. Assume that there is no infinite tower of prefixes between the automata. What is the tight bound on the height of towers of prefixes?
  \end{oprob}

  \section{Conclusion}
  We investigated the height of finite towers between two regular languages as a function of the number of states of the automata representing the languages. We also paid attention to three additional parameters: (non)determinism, the size of the alphabet, and the structure of the tower (formed by subsequences or by prefixes). The connection between the parameters is summarized as follows (for an overview of the results see Table~\ref{tableOverview}).

  The NFA vs. DFA representation does not play a crucial role since any tower between two NFAs can be ``determinized'' to a tower between two DFAs with only a moderate increase of the number of states. 
  
  A difference between towers of subsequences and towers of prefixes is less clear. It is conspicuous that our best, exponentially high towers are essentially towers of prefixes. Although this holds only for NFAs (for DFAs and towers of prefixes we have achieved an exact quadratic bound), it is worth noting that the proper subsequence relation is used exclusively in the determinization constructions. It leaves an intriguing open question whether, in the nondeterministic case, there is any substantial difference between towers of subsequences and towers of prefixes. In other words, the question is whether the subsequence relation can be simulated by the prefix relation using nondeterminism.

  Unclear is also the real influence of the alphabet size. We have seen that the height of towers grows exponentially with the alphabet size up to the point when the alphabet size is roughly the same as the number of states. The second intriguing question is whether the towers can grow with the alphabet beyond this point. The unconditional upper bound we have obtained is $O\left(n^{|\Sigma|}\right)$, where the only limit on the size of $\Sigma$ is the trivial bound $2^{n^2}$ on the number of inequivalent letters (a letter $a$ can be identified with the mapping $\delta(\,\cdot\,,a): Q \to 2^Q$). 
  
  The two open questions are related. If, for NFAs, towers of prefixes are as high as towers of subsequences, then $\Theta(2^{n_1+n_2})$ is the optimal bound (cf. Open Problem~\ref{op1}).

\end{document}